\documentclass[a4paper,UKenglish,cleveref, autoref, thm-restate]{lipics-v2019}

\usepackage{xcolor}
\usepackage{mathtools}

\nolinenumbers
\usepackage{fixme}
\fxsetup{
  status=draft,
  layout=inline,
  theme=color
}
\colorlet{fxnote}{blue}
\colorlet{fxnotebg}{yellow!75}

\newcommand{\nor}[1]{\left\|#1\right\|}
\newcommand{\norone}[1]{\left\|#1\right\|_1}

\newcommand{\ZZ}{\mathbb{Z}}
\newcommand{\ZZgeq}{\mathbb{Z}_{\geq 0}}
\newcommand{\RR}{\mathbb{R}}
\newcommand{\RRgeq}{\mathbb{R}_{\geq 0}}
\newcommand{\Ptop}{\mathcal{P}}

\DeclareMathOperator{\supp}{supp}
\DeclareMathOperator{\Conv}{Conv}
\DeclareMathOperator{\Exp}{\mathbf{E}}
\DeclareMathOperator{\hyper}{\mathsf{H}}

\newtheorem{fact}[theorem]{Fact}

\title{New Bounds for the Vertices of the Integer Hull}
\author{Sebastian Berndt}{University of Lübeck}{s.berndt@uni-luebeck.de}{https://orcid.org/0000-0003-4177-8081}{}

\author{Klaus Jansen}{Kiel University}{kj@informatik.uni-kiel.de}{}{}

\author{Kim-Manuel Klein}{Kiel University}{kmk@informatik.uni-kiel.de}{}{}
  
\authorrunning{S. Berndt and K. Jansen and K. Klein}

\Copyright{Sebastian Berndt, and Kim-Manuel Klein, and Klaus Jansen}

\begin{CCSXML}
<ccs2012>
<concept>
<concept_id>10002950.10003714.10003716.10011136</concept_id>
<concept_desc>Mathematics of computing~Discrete optimization</concept_desc>
<concept_significance>500</concept_significance>
</concept>
</ccs2012>
\end{CCSXML}

\ccsdesc[500]{Mathematics of computing~Discrete optimization}

\keywords{integer programming, integer hull, support bound}

\begin{document}

\maketitle

 \begin{abstract}
   The vertices of the integer hull are the integral equivalent to the
   well-studied basic feasible solutions of linear programs. 
   In this paper we give new bounds on the number of non-zero
   components~---~their support~---~of these vertices matching either the best
   known bounds or improving upon them. 
   While the best known bounds make use of deep techniques, we only use basic
   results from probability theory to make use of the \emph{concentration of
     measure} effect.
   To show the versatility of our techniques, we use our results to give the
   best known bounds on the number of such vertices and an algorithm to
   enumerate them. 
   We also improve upon the known lower bounds to show that our results are
   nearly optimal. 
   One of the main ingredients of our work is a generalization of the famous
   Hoeffding bound to vector-valued random variables that might be of general
   interest. 
 \end{abstract}
\newpage
\section{Introduction}
We consider integer programs (IPs) of the form
\begin{align} \label{IP:PI}
     \min c^{\top} x; \quad 
     Ax = b; \quad 
     x \in \ZZgeq^n 
 \end{align}
for a given constraint matrix $A \in \ZZ^{m \times n}$ a right-hand side vector $b \in \ZZ^m$ and an objective vector $c \in \ZZ^n$.

Structural insights into the solution space of problems is one of the
fundamental design principle of algorithmics. Understanding that solutions of
special structure exists has led to many useful results for both theoretical and practical applications. In this work, we focus on the structure of solutions to IP (\ref{IP:PI}). Optimal solutions to the IP have a direct relation to the vertices of the integer hull of the polytope $\Ptop = \{x \in \RRgeq^n \mid Ax =b \}$. The integer hull of $\Ptop$ is defined as the convex hull of all integral points in $\Ptop$. By standard LP theory, we know that the set of optimal solution to IP (\ref{IP:PI}) always contains a vertex of the integer hull and vice versa, each vertex of the integer hull is an optimal solution for some objective $c$. Hence, vertices of the integer hull are of special interest when it comes to integer optimization. In multi-criteria optimization or in cases, when the objective $c$ is not known in advance it is useful to list all possible optimal solutions, i.e. vertices of the integer hull. Building upon a classical result by Cook et al. \cite{DBLP:journals/combinatorica/CookHKM92}, we give new bounds on the number of vertices of the integer hull as well as an algorithm to enumerate those. In many algorithmic applications, all vertices of a subpolytope need to be computed as a subroutine (see for example~\cite{goemans2014polynomiality,DBLP:conf/icalp/JansenKV16, klein2016}).

Moreover, we study the existence of optimal solutions with a possibly small number of non-zero components also called the support of a vector. In case of the relaxed linear program, it is well-known that there exist optimal solutions (i.e. the basic feasible solutions) such that their support is bounded by the number of constraints.
However, such a bound was not known for integer solutions of IP (\ref{IP:PI}) until Eisenbrand and Shmonin~\cite{eisenbrand2006caratheodory} showed a similar bound depending on the number of constraints \emph{and} the logarithm of the largest coefficient of a constraint. Their result had immediate consequences for the complexity of $NP$-hard problems (see e.\,g.~\cite{eisenbrand2006caratheodory}), for logic (see
 e.\,g.~\cite{DBLP:journals/siamcomp/KieronskiMPT14,DBLP:conf/cade/KuncakR07}), but also in the design of parameterized algorithms (see e.\,g.~\cite{DBLP:conf/esa/KnopKM17,DBLP:journals/jcss/Onn17}), approximation schemes (see
 e.\,g.~\cite{DBLP:journals/siamdm/Jansen10,DBLP:conf/icalp/JansenKV16}), and exact algorithms (see  e.\,g.~\cite{goemans2014polynomiality}). 
 More concretely, if $A$ is a $m\times n$-matrix with largest coefficient
 $\Delta=\lVert A \rVert_{\infty}$, Eisenbrand and Shmonin showed that if there
 is an integral solution $y \in \ZZgeq$ to $Ay = b$, then there also exists an
 integral solution $y'\in \ZZgeq$ with $Ay'=b$ and $|\supp(y')|\leq 2m\cdot
 \log(4m\Delta)$~\cite[Theorem 1]{eisenbrand2006caratheodory}. 
 This result was later improved by Aliev et al.~\cite{aliev2017sparse} to obtain
 a bound of $|\supp(y')|\leq
 m+\log(g^{-1}\sqrt{\det{AA^{\top}}})\leq 2m\cdot \log(2\sqrt{m}\Delta)$, where $g$ denotes the greatest common divisor of all $m\times m$ minors of $A$.
 Their result is based on a highly non-trivial variant of \emph{Siegel's lemma}
 due to Bombieri and Vaaler~\cite{bombieri1983siegel} that shows that the homogenous
 equation system $Ay=0$ has $n-m$ linearly independent integral solutions
 $y_{1},\ldots,y_{n-m}$ such that $\prod_{i=1}^{n-m}\lVert y_{i}
 \rVert_{\infty}$ is bounded by $g^{-1}\sqrt{\det{AA^{\top}}}$.
 The proof of this result makes use of different deep techniques from a variety
 of fields such as measure theory, complex analysis, and geometry of number. A
 weaker version of Siegel's lemma with a simpler proof was presented by Beck~\cite{beck2017siegel}.
 Note that both of these results only give bounds for the support of a
 \emph{feasible} integral solution of $Ay=b$.
 If one considers optimal integral solutions of the problem $\min\{c^{\top}y\mid
 Ay=b\}$, Aliev et al.~\cite{DBLP:journals/siamjo/AlievLEOW18} showed that the
 bounds of~\cite{aliev2017sparse} can be transferred to this setting:
 There is always an \emph{optimal} solution $z$  of IP (\ref{IP:PI}) with $|\supp(z)|\leq
 m+\log(g^{-1}\sqrt{\det{AA^{\top}}})\leq 2m\cdot \log(2\sqrt{m}\Delta)$. Just recently, Aliev et al.~\cite{DBLP:conf/ipco/AlievALO20} also improved upon these results for special cases. 

\paragraph*{Our Results} In this paper, we show a similar bound to Aliev et al.~\cite{DBLP:journals/siamjo/AlievLEOW18}: 
There is always an \emph{optimal} solution $z$ to IP~\eqref{IP:PI} with $|\supp(z)|\leq 2m\cdot  \log(O(\sqrt{m}\Delta))$. 
In our proof we can avoid the use of Siegel's Lemma and give a straight-forward
proof that only relies on basic results from probability theory. These tools
allow us to give a Hoeffding-type theorem to bound the deviation of a
vector-valued random variable from its expected value with regard to the $\ell_{1}$-norm. To demonstrate the usefulness of our simpler approach, we obtain several other results:
\begin{itemize}
\item In the case of $m=1$ (the \emph{knapsack polytope}), we obtain the best-known bound of $|\supp(z)|\leq \log(2\Delta \cdot \sqrt{(3/2)\cdot \log(2\Delta)})\leq \log(O(\Delta \sqrt{\log(\Delta)}))$, where the best-known bound before was $2\log(2 \Delta)\leq \log(O(\Delta^{2}))$~\cite{aliev2017sparse}.
We also give a lower bound that shows that the additive error of our result is at most $1{.}021$.
\item We show the existence of polytopes such that $|\supp(z)|\geq m\log(\Delta)+m$ for all optimal solutions $z$, where the best known lower bound before was $m\log(\Delta)^{1-\epsilon}$ for arbitrary small $\epsilon$~\cite{DBLP:journals/siamjo/AlievLEOW18}. 
 \item We show that the number of integer vertices of a polytope is bounded by
   $(n\cdot m\cdot \log(m\Delta))^{O(m\log(\sqrt{m}\Delta))}$. 
   The best known bound before was $(n\cdot m\cdot 
   \Delta)^{O(m^{2}\log(\sqrt{m}\Delta))}$~\cite{DBLP:journals/siamjo/AlievLEOW18}. 
 \item We show that the integer vertices of a polytope can be enumerated in
   running time $(n\cdot m\cdot \log(m\Delta))^{O(m\log(\sqrt{m}\Delta))}$.
   Previous results such as those of Hayes and Larman~\cite{hayes1983vertices}
   or Cook et al.~\cite{DBLP:journals/combinatorica/CookHKM92} only gave running
   times where the exponent was of order $n-1$. 
 \item We give a support-bound that not only depends on the largest entry in a constraint, but on the $\ell_{1}$-distance between the different constraints.
 We expand upon this to obtain a support-bound nearly matching the result of Aliev et al.~\cite{DBLP:journals/siamjo/AlievLEOW18} by only using Minkowski's second theorem. Due to space reasons, this can be found in the appendix in Section~\ref{sec:structure}.
 \end{itemize}

 \section{Preliminaries}
 For a matrix $A$ with columns $A_{1},\ldots,A_{n}$ and an index set $I\subseteq
\{1,\ldots,n\}$, we denote by $A[I]$ the submatrix consisting of the columns
indexed by $I$. For a matrix $A\in \mathbb{R}^{m\times n}$ and a set of vectors
$V\subseteq \mathbb{R}^{n}$, we denote by $A\cdot V=\{Av\mid v\in V\}$. 
The \emph{support} $\supp(x)$ of a vector $x\in \mathbb{R}^{n}$ is the set of
indices where $x$ is non-zero, i.\,e.~$\supp(x)=\{i\mid x_{i}\neq 0\}$.
The \emph{hypercube} $\hyper_{n}=\{0,1\}^{n}$ is simply the set of all
$\{0,1\}$-vectors of length $n$.
The $j$-th entry of a vector $x\in \mathbb{R}^{n}$ is denoted as $x[j]$.
For a set $X\subseteq \mathbb{R}^{n}$, the \emph{convex hull} $\Conv(X)$ of $X$ is
defined as the set of convex combinations of points in $x$, i.\,e.~
\begin{align*}
  \Conv(X)=\{\sum_{x\in X}\alpha_{x}x\mid \sum_{x\in X}\alpha_{x}=1, \alpha_{x} \geq 0\}. 
\end{align*}

We consider polytopes of the form $\Ptop = \{ x\mid Ax =b, x \geq 0 \}$ and say
that $x^{*}$ is a solution of $\Ptop$, if $x^{*}\in \Ptop$. 
The \emph{integer hull} $\Ptop_I$ is defined by the convex hull of all integral
points in $\Ptop$, i.\,e.~$\Ptop_I = \Conv(\Ptop \cap \ZZ^n)$. 
Obviously, an integer point $p \in \ZZ^n$ is a vertex of  $\Ptop_I$ if and only if $p$ is
a solution of the integer program $\{x\mid Ax=b, x\in \ZZgeq^{n}\}$. 
A special role is played by the \emph{vertices} of $\Ptop_{I}$. 
We say that $v\in \Ptop_{I}$ is a vertex, if it cannot be written as a convex
combination of other points in $\Ptop_{I}$, i.\,e.~$v\not\in
\Conv(\Ptop_{I}\setminus \{v\})$. 
These vertices are important as they correspond to optimal solutions of the
associated integer program~\cite{schrijver1998theory}.
\begin{fact}
  \label{fact:optimal}
  For any objective function $c\in \ZZ^{n}$ there is a vertex $v\in \Ptop_{I}$ that is an
  optimal solution to the integer program $\min\{c^{\top}x\mid Ax=b, x\in
  \ZZgeq^{n}\}$, if the integer program has a finite optimum. 
\end{fact}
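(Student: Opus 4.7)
The plan is to reduce the statement to the classical LP-theoretic fact that a linear function with finite infimum over a pointed rational polyhedron attains its infimum at a vertex, applied to the linear program $\min\{c^\top x \mid x \in \Ptop_I\}$ rather than to the IP directly. For this to make sense I first need $\Ptop_I$ itself to be a (rational) polyhedron, so that the notion of vertex from the preliminaries agrees with the usual extreme-point characterisation. This is provided by Meyer's theorem: the integer hull of a rational polyhedron is a rational polyhedron. In particular $\Ptop_I$ has finitely many vertices and its faces are polyhedra.

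Next, I would show that the IP and the relaxation $\min\{c^\top x \mid x \in \Ptop_I\}$ share the same optimum $v^\ast$. The bound $\min_{x \in \Ptop_I} c^\top x \le v^\ast$ is immediate from $\Ptop \cap \ZZ^n \subseteq \Ptop_I$. For the reverse bound, any $x \in \Ptop_I$ can be written as $x = \sum_i \alpha_i y_i$ with $y_i \in \Ptop \cap \ZZ^n$, $\alpha_i \ge 0$ and $\sum_i \alpha_i = 1$; each $y_i$ is feasible for the IP, so $c^\top y_i \ge v^\ast$, and convexity gives $c^\top x \ge v^\ast$.

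I would then pass to the optimal face $F = \{x \in \Ptop_I \mid c^\top x = v^\ast\}$. It is non-empty because the finite optimum is attained by some integer feasible point, which lies in $\Ptop_I$. Since $\Ptop \subseteq \RRgeq^n$ we have $\Ptop_I \subseteq \RRgeq^n$, which contains no affine line, so $\Ptop_I$ is pointed; any face of a pointed polyhedron is pointed, so $F$ is pointed as well. A non-empty pointed rational polyhedron has at least one vertex, so I pick a vertex $v$ of $F$. Because $F$ is a face of $\Ptop_I$, every vertex of $F$ is also a vertex of $\Ptop_I$ in the sense of the preliminaries, i.\,e.\ $v \notin \Conv(\Ptop_I \setminus \{v\})$. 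By construction $c^\top v = v^\ast$, so $v$ is the desired optimal vertex.

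The only genuine obstacle is the initial polyhedrality step: without Meyer's theorem, $\Ptop_I$ is a priori just a convex subset of $\RR^n$, and arguing directly that an extreme point attaining the optimum exists would essentially mean redoing a piece of Meyer's argument. Once $\Ptop_I$ is known to be a rational polyhedron, the remainder is routine LP bookkeeping and uses no ingredients beyond convex combinations and pointedness.
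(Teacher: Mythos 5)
Your proof is correct. The paper does not prove Fact~\ref{fact:optimal} at all---it is stated as a known fact with a citation to Schrijver's book---and your argument (Meyer's theorem to get polyhedrality of $\Ptop_{I}$, equality of the IP optimum with $\min\{c^{\top}x \mid x\in \Ptop_{I}\}$ via convex combinations, then pointedness of $\Ptop_{I}\subseteq \RRgeq^{n}$ and of the optimal face to extract a vertex) is exactly the standard proof that citation stands in for. The only step worth making explicit is why the finite optimum is attained; here integrality of $c$ and of the feasible points makes the objective values integers bounded below, so attainment is immediate.
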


It is easy to see that a point $p\in \Ptop_{I}$ is not a vertex of the integer
hull if there exists a non-zero solution of a certain integer program. 

\begin{lemma}
  \label{lem:eisenbrand:shmonin}
  If $v$ is a vertex of the integer hull $\Ptop_{I}$, the following system does not have a
  non-zero solution:
\begin{align} \label{IP:kernel}
    A x = 0; \quad   
    -1 \leq\  &x_i \leq 1\ \ \forall i \in \supp(v); \quad 
    x_{i} = 0\ \ \forall i\not\in \supp(v); \quad 
    x \in \ZZ^{n}
  \end{align}
  \end{lemma}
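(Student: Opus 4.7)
The plan is to prove the contrapositive: I assume the system~\eqref{IP:kernel} admits a non-zero integral solution $x$, and show that $v$ can be written as a non-trivial convex combination of two other integer points in $\Ptop_I$, contradicting the assumption that $v$ is a vertex. This is the classical ``perturb in both directions'' argument.

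First I would form the two candidate points $v^{+}=v+x$ and $v^{-}=v-x$. Since $v\in \Ptop_I\subseteq \ZZ^{n}$ and $x\in \ZZ^{n}$, both $v^{+}$ and $v^{-}$ are integral. Since $Ax=0$, I have $Av^{\pm}=Av\pm Ax=b$, so the equality constraint is preserved.

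The next step is to verify non-negativity, which is where the two types of constraints in~\eqref{IP:kernel} come in. For any $i\notin \supp(v)$, the constraint $x_{i}=0$ gives $v^{\pm}_{i}=v_{i}=0\geq 0$. For any $i\in \supp(v)$, integrality of $v$ yields $v_{i}\geq 1$, while $-1\leq x_{i}\leq 1$ gives $|x_{i}|\leq 1$, so $v^{\pm}_{i}\geq 0$. Hence $v^{+},v^{-}\in \Ptop\cap \ZZ^{n}\subseteq \Ptop_I$.

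Finally, since $x\neq 0$, the points $v^{+}$ and $v^{-}$ are distinct, and both differ from $v$. The identity $v=\tfrac{1}{2}v^{+}+\tfrac{1}{2}v^{-}$ then expresses $v$ as a convex combination of points in $\Ptop_I\setminus\{v\}$, contradicting $v\notin \Conv(\Ptop_I\setminus\{v\})$. The argument is essentially direct; there is no real obstacle beyond being careful about the case split between indices inside and outside $\supp(v)$, which is precisely what the two families of constraints in~\eqref{IP:kernel} are designed to handle.
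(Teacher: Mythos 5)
Your proposal is correct and follows exactly the same argument as the paper: form $v+x$ and $v-x$, check feasibility, and write $v$ as their midpoint. You merely spell out the non-negativity check (which the paper leaves implicit) in slightly more detail.
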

  \begin{proof}
    Suppose there is a non-zero solution $x$ of program~\eqref{IP:kernel}, then $v-x$ and
    $v+x$ are solutions of $\{x\mid Ax=b, x\in \ZZgeq^{n}\}$  and therefore $v-x$ and $v+x$ are
    feasible points of $\Ptop_I$. 
    As $v$ can be written as the convex combination $v = \frac{v-x}{2} +
    \frac{v+x}{2}$, the point $v$ is not a vertex of the integer hull $\Ptop_I$.
  \end{proof}

\section{The Knapsack Polytope}
\label{sec:knapsack}
In this section, we consider the special case that the matrix $A$ consists of a single
row, i.e. $A = a^{\top}$ for some non-zero vector $a \in \ZZ^n$. 
The right-hand side $b\in \ZZ$ is thus a single integer. 
In this case, $\Ptop_I =
\Conv( \{ x \mid a^{\top} x = b, x\geq 0\} \cap \ZZ^n)$ is called the knapsack polytope
and the vector $a$ corresponds to the sizes of the respective items (note that
we also allow negative entries). 
Using Lemma~\ref{lem:eisenbrand:shmonin} in combination with a pigeonhole
argument, one can easily show that the support of vertices in $\Ptop_I$ is
bounded (see \cite{eisenbrand2006caratheodory}). To see this, suppose that a point $p \in \Ptop_I$ has a
support with cardinality $s:= |\supp(p)|$ with $s > \log{\norone{a}} +1$ and
hence $|\hyper_{s}| = 2^s > \norone{a}+1$. However, looking at all $x\in \hyper_{s}$, the sum
$\sum_{i=1}^n a_i x_i$ can take at most $\norone{a}+1$ different values. By the pigeonhole principle, there exist two different points $x', x'' \in \hyper_s$ with $a^{\top} x' = a^{\top} x''$. However, this means that $x''-x'$ is a non-zero solution of IP \eqref{IP:kernel}. By Lemma~\ref{lem:eisenbrand:shmonin}, the point $p$ can hence not be a vertex.

\subsection{Characterizing the Vertices}
In order to improve upon the above technique, we elaborate on the observation
that the value of $a^{\top} x$ for most $x \in \hyper_{n}$ is around $\sum_{i=1}^n a_i /2$.
In other words, if we choose a point $x \in \hyper_{n}$ randomly, then the value of
$a^{\top} x$ is likely to be close to the expected value, which is $\sum_{i=1}^n a_i
/2$. The improved bound for the support can then be derived by using the above
pigeonhole argument on a smaller area around the expected value. In other words
we choose a subset $\hyper'_{n}$ of the hypercube $\hyper_{n}$ that maps onto the smaller
area $a^{\top}\cdot \hyper'_{n}$ around the expected value. To choose this subset, we make use
of the well-known \emph{Hoeffding bound}.
\begin{theorem}[Theorem 4.12 in \cite{mitzenmacher2017probability}]
  \label{thm:hoeffding}
Let $X_{1},\ldots,X_{s}$ be independent random variables such that $\ell_{i} \leq
X_{i}\leq u_{i}$ for all $i$. Let $X=\sum_{i=1}^{s}X_{i}$ be its sum. 
 For all
$\delta > 0$, we have
\begin{align*}
  \Pr[|X - \mu| \geq \delta] \leq 2\exp(-(2\delta^{2})/\sum_{i=1}^{s}(u_{i}-\ell_{i})^{2}), 
\end{align*}
where $\mu$ is the expected value of $X$.
\end{theorem}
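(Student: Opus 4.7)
The plan is to follow the classical Chernoff--Hoeffding argument, which reduces the concentration bound to a moment-generating-function estimate on each individual summand. First, I would establish \emph{Hoeffding's lemma}: for any random variable $Y$ with $\Exp[Y]=0$ and $a \leq Y \leq b$ almost surely, and any $t \in \RR$,
\[
\Exp\!\left[e^{tY}\right] \;\leq\; \exp\!\left(\frac{t^{2}(b-a)^{2}}{8}\right).
\]
To prove this, write $Y = \lambda b + (1-\lambda)a$ with $\lambda = (Y-a)/(b-a) \in [0,1]$; convexity of $y \mapsto e^{ty}$ gives $e^{tY} \leq \lambda e^{tb} + (1-\lambda)e^{ta}$, and taking expectations (using $\Exp[Y]=0$) yields the bound $\Exp[e^{tY}] \leq \varphi(t)$ where $\varphi(t) = \tfrac{-a}{b-a}e^{tb} + \tfrac{b}{b-a}e^{ta}$. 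A short computation shows $\log \varphi(0) = 0$, $(\log \varphi)'(0) = 0$, and $(\log \varphi)''(s) \leq (b-a)^{2}/4$ for all $s$, so Taylor's theorem gives $\log \varphi(t) \leq t^{2}(b-a)^{2}/8$.

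Second, I would apply the Chernoff technique to $X$. Fix $t > 0$ and apply Markov's inequality to $e^{t(X-\mu)}$:
\[
\Pr[X-\mu \geq \delta] \;\leq\; e^{-t\delta}\,\Exp\!\left[e^{t(X-\mu)}\right].
\]
By independence of the $X_{i}$, the right-hand expectation factorises, and applying Hoeffding's lemma to each centred variable $X_{i}-\Exp[X_{i}] \in [\ell_{i}-\Exp[X_{i}],\,u_{i}-\Exp[X_{i}]]$ (an interval of length $u_{i}-\ell_{i}$) yields
\[
\Exp\!\left[e^{t(X-\mu)}\right] \;\leq\; \exp\!\left(\frac{t^{2}}{8}\sum_{i=1}^{s}(u_{i}-\ell_{i})^{2}\right).
\]
Optimising the resulting bound by choosing $t = 4\delta / \sum_{i=1}^{s}(u_{i}-\ell_{i})^{2}$ gives $\Pr[X-\mu \geq \delta] \leq \exp\!\bigl(-2\delta^{2}/\sum_{i}(u_{i}-\ell_{i})^{2}\bigr)$. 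The symmetric bound on $\Pr[X-\mu \leq -\delta]$ follows by the same argument applied to $-X$, and a union bound over the two tails produces the factor of $2$ in the statement.

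The main obstacle is obtaining the sharp constant $1/8$ in Hoeffding's lemma: the naive bound $|Y| \leq \max(|a|,|b|)$ combined with a direct series expansion of $e^{tY}$ only gives $\Exp[e^{tY}] \leq \cosh(t\max(|a|,|b|))$, which is not strong enough to yield the coefficient $2$ in the final exponent. The convexity-plus-second-derivative argument outlined above is essential; everything after that is routine manipulation of the Chernoff bound and an optimisation in a single variable $t$.
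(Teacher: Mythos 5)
Your proof is correct: it is the standard Chernoff--Hoeffding argument (Hoeffding's lemma via convexity plus the second-derivative bound on $\log\varphi$, then Markov's inequality on the exponentiated sum, factorisation by independence, optimisation in $t$, and a union bound for the two tails), and the constant bookkeeping ($t=4\delta/\sum_i(u_i-\ell_i)^2$ yielding the exponent $-2\delta^2/\sum_i(u_i-\ell_i)^2$) checks out. The paper does not prove this statement at all — it imports it verbatim as Theorem 4.12 of Mitzenmacher--Upfal — so your argument simply reproduces the textbook proof of the cited result.
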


Using those ideas, we obtain the following support bound for vertices of the
integer hull of a knapsack polytope.
\begin{theorem}
  \label{thm:kp:supp_bound}
  For each vertex $v$ of the integer hull of the knapsack polytope $\Ptop_{I}$ we have that
  $|\supp(v)| \leq \log(3{.}51\cdot \nor{a[\supp(v)]}_2)$. 

\end{theorem}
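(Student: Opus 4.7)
The plan is to derive a contradiction with Lemma~\ref{lem:eisenbrand:shmonin}. Assume $s := |\supp(v)| > \log(3{.}51 \cdot \|a'\|_2)$, where $a' := a[\supp(v)]$. Note first that every entry of $a'$ must be a nonzero integer, since if some $a_i = 0$ for $i \in \supp(v)$ then the standard basis vector $e_i$ would already be a non-zero solution of~\eqref{IP:kernel}. Hence $\|a'\|_2 \geq \sqrt{s}$, which will matter for absorbing lower-order terms at the end.

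Next I would reduce finding a kernel element to a pigeonhole statement on the hypercube $\hyper_s$: if two distinct $x', x'' \in \hyper_s$ satisfy $(a')^\top x' = (a')^\top x''$, then $y := x'' - x'$, extended by zeros outside $\supp(v)$, is a non-zero vector in $\{-1,0,1\}^n$ satisfying $Ay = 0$ and supported in $\supp(v)$, contradicting Lemma~\ref{lem:eisenbrand:shmonin}. To produce such a collision, I would draw $x \in \hyper_s$ uniformly and apply Theorem~\ref{thm:hoeffding} to $X := \sum_{i=1}^s a'_i x_i$, whose summands $a'_i x_i$ take values in an interval of length $|a'_i|$. With $\mu = \tfrac{1}{2}\sum_i a'_i$ and any $\delta > 0$ we get
\begin{align*}
\Pr\!\left[|X - \mu| \geq \delta\right] \;\leq\; 2\exp\!\left(-2\delta^2/\|a'\|_2^2\right).
\end{align*}
Consequently at least $2^s\bigl(1 - 2e^{-2\delta^2/\|a'\|_2^2}\bigr)$ points of $\hyper_s$ map under $(a')^\top$ into the open interval $(\mu-\delta, \mu+\delta)$; because $(a')^\top x$ is always an integer, this interval contains at most $2\delta + 1$ possible values. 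Pigeonhole therefore yields the desired collision as soon as
\begin{align*}
2^s \left(1 - 2e^{-2\delta^2/\|a'\|_2^2}\right) \;>\; 2\delta + 1.
\end{align*}

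What is left is a one-parameter optimization: setting $\delta = c\,\|a'\|_2$, I would pick $c$ so that the displayed inequality is implied by the hypothesis $2^s > 3{.}51\,\|a'\|_2$. Numerical inspection of $f(c) = 2c/(1-2e^{-2c^2})$ shows an optimum near $c \approx 1{.}1$; the slack between this value and $3{.}51$ is exactly what is needed to swallow the additive ``$+1$'' coming from the integer count, using $\|a'\|_2 \geq \sqrt{s}$. The main obstacle, then, is not any conceptual difficulty but careful bookkeeping: verifying that a single choice of $c$ works uniformly for all $\|a'\|_2 \geq \sqrt{s}$, and checking the very small cases $s \in \{1,2\}$ directly, where the Hoeffding bound is not informative but the conclusion $s \leq \log(3{.}51\,\|a'\|_2)$ can be read off from $\|a'\|_2 \geq \sqrt{s}$.
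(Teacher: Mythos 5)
Your proposal is correct and follows essentially the same route as the paper: restrict to a Hoeffding-concentrated subset of the hypercube $\hyper_s$, compare its size to the number of integer values the form $(a')^{\top}x$ can take there, and invoke the pigeonhole principle together with Lemma~\ref{lem:eisenbrand:shmonin}. Your bookkeeping is in fact slightly more careful than the paper's (you keep the additive $+1$ in the integer count and absorb it via $\lVert a' \rVert_{2}\geq \sqrt{s}$, checking $s\in\{1,2\}$ separately, whereas the paper drops the $+1$ and folds the slack into its choice of constant).
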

\begin{proof}
  We show that for each $x \in \Ptop_{I}$, the inequality $\supp(x) >
  \log(3{.}51\cdot \nor{a[\supp(x)]}_2)$ implies that x is not a vertex.  Consider any integral solution $x^{*}$ of $\{x\mid a^{\top}x=b, x\in \ZZgeq^{n}\}$ with support
$S=\supp(x^{*})$ of size $s=|S|$. 
Let $a[S]$ be the subvector of $a$ projected to the elements indexed by $S$.
  
\textbf{The random process:} Note that choosing a random point in $\hyper_s$ is
equivalent to the following: For each $i \in \supp(x^*)$, consider the random
variable $X_{i}$ that is equal to $0$ with probability $1/2$ and equal to $a[i]$
with probability $1/2$. Let $X_{S}=\sum_{i\in S}X_{i}$ be the sum of these
variables. The fact that $X_{S}$ does not deviate much from its expected value is
directly implied by Theorem~\ref{thm:hoeffding}.


\textbf{Construct $\hyper'_{s}$:} 
To use the concentration of measure approach, we choose $\hyper'_{s}=\{x\in
\hyper_{s}: 
| a[S]\cdot x - \sum_{i\in S}a[i] /2| \leq \sqrt{\lVert a[S]
  \rVert_{2}}\}$. 


\textbf{Analyze $|\hyper'_{s}|$: }As all $X_{i}$ are independent and the expected value of $X_{S}$ is $
\mu_{S} := \sum_{i\in S}a[i] / 2$,  Theorem~\ref{thm:hoeffding} implies that 
\begin{align*}
  \Pr[|X-\mu_{S}| \geq \delta]\leq
  2\exp(-(2\delta^{2})/\lVert a[S] \rVert^{2}_{2}). 
\end{align*}
Hence, for $\delta=\alpha\cdot \lVert a[S] \rVert_{2}$ for some $\alpha\in
\mathbb{R}_{> 0}$, we have
$ \Pr[|X_{S}-\mu_{S}| \geq \alpha\cdot \lVert a[S] \rVert_{2}]\leq
2\exp(-2\alpha^{2})$. 
This directly implies that $|\hyper'_{s}|\geq (1-2\exp(-2\alpha^{2}))\cdot 2^{s}$.

\textbf{Analyze $|a^{\top}\cdot \hyper'_{s}|$:} On the other hand, $a[S]\cdot \hyper'_{s}\subseteq \left[\mu_{S} -
\alpha \lVert a[S] \rVert_{2}, \mu_{S}+\alpha\lVert a[S]
  \rVert_{2}\right]$ and thus $|a^{\top}\cdot \hyper'_{s}|\leq 2\alpha\lVert a[S]
\rVert_{2}$.

\textbf{Compare $|\hyper'_{s}|$ and $|a^{\top}\hyper'_{s}|$:}
Hence, if $(1-2\exp(-2\alpha^{2}))\cdot 2^{s} > 2\alpha\lVert a[S]
\rVert_{2}$, we have $|\hyper'_{s}|> |a^{\top}\cdot \hyper'_{s}|$ and can thus use
Lemma~\ref{lem:eisenbrand:shmonin}. Solving for $s$ gives
\begin{align*}
  s > \log\left[ \frac{2\alpha}{(1-2\exp(2-2\alpha^{2}))} \cdot \lVert a[S] \rVert_{2} \right]. 
\end{align*}
Minimizing the function $\alpha\mapsto \frac{2\alpha}{(1-2\exp(2-2\alpha^{2}))}$ on
the positive reals shows that it has a minimum smaller than $3{.}51$ at
$x=\frac{1}{2}\sqrt{-2W_{-1}\left( -\frac{1}{4\exp(5/2)} \right)-1}\approx 1{.}5986$, where
$W_{-1}$ is the negative real branch of the Lambert $W$ function defined by
$W_{-1}(x\exp(x))=x$ for $x\leq -1$. 
Choosing $\alpha\approx 1{.}5986$ accordingly gives the simpler inequality $s > \log(3{.}51\cdot
\lVert a[S] \rVert_{2})$. 

Hence, for each $x\in \Ptop_{I}$,  the inequality $|\supp(x)| > \log(3{.}51\cdot
\lVert a[\supp(x)] \rVert_{2})$ implies that $x$ is not a
vertex. 
\end{proof}

To compare this bound with the known bounds in literature, let $\Delta=\lVert a
\rVert_{\infty}$. As $\lVert a[S] \rVert_{2}\leq \sqrt{|S|}\cdot \Delta$, we
know that  $s > \log(3{.}51\cdot \sqrt{s}\cdot \Delta)$ implies $|\hyper'_{s}|> |A\cdot
\hyper'_{s}|$. We thus have the following lemma:
\begin{lemma}
  \label{lem:kp:supp_bound}
  For each vertex $v$ of the integer hull of the knapsack polytope $\Ptop_{I}$ with  $|\supp(v)|=s$, we have $s \leq
  \log(3{.}51\cdot \sqrt{s}\cdot \Delta)$. 
\end{lemma}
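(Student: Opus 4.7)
The plan is to derive this lemma as a direct corollary of Theorem~\ref{thm:kp:supp_bound}, by replacing the $\ell_{2}$-norm $\nor{a[\supp(v)]}_{2}$ with a simple upper bound in terms of the $\ell_{\infty}$-norm $\Delta = \nor{a}_{\infty}$. Indeed, $\log$ is monotone and the right-hand side of Theorem~\ref{thm:kp:supp_bound} involves only this norm, so the entire argument reduces to bounding $\nor{a[\supp(v)]}_{2}$ from above.

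First I would recall the elementary inequality $\nor{w}_{2} \leq \sqrt{k}\cdot \nor{w}_{\infty}$ for any vector $w \in \mathbb{R}^{k}$, which follows from $\sum_{i=1}^{k} w_{i}^{2} \leq k\cdot \max_{i} w_{i}^{2}$. Applying this to $w = a[\supp(v)]$, a vector of length $s = |\supp(v)|$ whose entries are a subset of the entries of $a$, gives $\nor{a[\supp(v)]}_{2} \leq \sqrt{s}\cdot \nor{a[\supp(v)]}_{\infty} \leq \sqrt{s}\cdot \Delta$.

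Next I would invoke Theorem~\ref{thm:kp:supp_bound}, which guarantees that $s = |\supp(v)| \leq \log(3{.}51\cdot \nor{a[\supp(v)]}_{2})$ for every vertex $v$ of $\Ptop_{I}$. Substituting the above upper bound into the logarithm, and using monotonicity of $\log$, yields
\begin{align*}
  s \;\leq\; \log(3{.}51\cdot \nor{a[\supp(v)]}_{2}) \;\leq\; \log(3{.}51\cdot \sqrt{s}\cdot \Delta),
\end{align*}
which is exactly the statement of the lemma.

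I do not anticipate any real obstacle: the entire content of the lemma is the passage from the $\ell_{2}$-bound in Theorem~\ref{thm:kp:supp_bound} to the $\ell_{\infty}$-bound stated in terms of $\Delta$, and the only tool needed is the standard norm inequality $\nor{\cdot}_{2} \leq \sqrt{\dim}\cdot \nor{\cdot}_{\infty}$. The slight subtlety worth noting is that the resulting inequality $s \leq \log(3{.}51 \sqrt{s}\cdot \Delta)$ is implicit in $s$, but this is acceptable as the statement: solving it explicitly (to get, say, $s = O(\log(\sqrt{\log \Delta}\cdot \Delta))$) is deferred to the discussion following the lemma rather than being part of the lemma itself.
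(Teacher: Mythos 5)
Your proposal is correct and matches the paper's own derivation: the paper obtains Lemma~\ref{lem:kp:supp_bound} from Theorem~\ref{thm:kp:supp_bound} precisely by bounding $\lVert a[S]\rVert_{2}\leq \sqrt{|S|}\cdot \Delta$ and using monotonicity of the logarithm. The only cosmetic difference is that the paper phrases the step via the contrapositive ($s > \log(3{.}51\sqrt{s}\Delta)$ implies $v$ is not a vertex), which is logically equivalent to your direct substitution.
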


As $\Delta$ denotes the largest absolute value of a number in $a$, and a vertex
cannot use variable corresponding to a number $k$ and a variable corresponding
to a number $-k$ at the same time,  we clearly have $s\leq \Delta$ and
can thus obtain the following simple corollary:
\begin{corollary}
  \label{cor:kp:supp_bound}
  For each vertex $v$ of the integer hull of the knapsack polytope $\Ptop_{I}$ with
  $|\supp(v)|=s$, we have $s \leq (3/2)\log(2{.}4\cdot \Delta)$. 
\end{corollary}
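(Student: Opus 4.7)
The plan is to combine Lemma~\ref{lem:kp:supp_bound} with a direct structural observation that forces $s \leq \Delta$, and then verify the resulting inequality reduces to the stated closed form after elementary algebra.

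First I would argue that in any vertex $v$, the multiset of values $\{a_{i} : i \in \supp(v)\}$ actually consists of distinct integers, and moreover cannot contain both $k$ and $-k$. The reason is Lemma~\ref{lem:eisenbrand:shmonin}: if two indices $i \neq j$ in $\supp(v)$ satisfied $a_{i}=a_{j}$, then $e_{i}-e_{j}$ would be a nonzero integer vector lying in the kernel system~\eqref{IP:kernel}; if $a_{i}=-a_{j}$, the same holds for $e_{i}+e_{j}$. Either would contradict $v$ being a vertex. Consequently the set $\{|a_{i}| : i \in \supp(v)\}$ consists of $s$ distinct positive integers, each at most $\Delta$, so $s \leq \Delta$.

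Next I would plug this into Lemma~\ref{lem:kp:supp_bound}, which gives $s \leq \log(3{.}51 \cdot \sqrt{s}\cdot \Delta)$. Substituting $\sqrt{s} \leq \sqrt{\Delta}$ yields
\[
s \;\leq\; \log(3{.}51) + \tfrac{3}{2}\log(\Delta).
\]
The goal is to conclude $s \leq (3/2)\log(2{.}4\cdot \Delta) = (3/2)\log(2{.}4) + (3/2)\log(\Delta)$, so it suffices to check the constant inequality $\log(3{.}51) \leq (3/2)\log(2{.}4)$, equivalently $3{.}51^{2} \leq 2{.}4^{3}$, i.e.\ $12{.}3201 \leq 13{.}824$, which holds.

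There is essentially no obstacle here; the only subtle point is justifying $s \leq \Delta$, which requires the two-line kernel argument using Lemma~\ref{lem:eisenbrand:shmonin}. Everything else is book-keeping on the bound from Lemma~\ref{lem:kp:supp_bound} together with a numerical check of the constant.
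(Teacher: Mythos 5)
Your proof is correct and follows essentially the same route as the paper: establish $s\leq \Delta$ via the kernel argument of Lemma~\ref{lem:eisenbrand:shmonin} (the paper states this more tersely), substitute $\sqrt{s}\leq\sqrt{\Delta}$ into Lemma~\ref{lem:kp:supp_bound}, and absorb the constant using $3{.}51^{2}\leq 2{.}4^{3}$. Your write-up in fact supplies the distinctness and sign details that the paper only gestures at.
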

Note that the best known bound before was due to Aliev et
al.~\cite{DBLP:journals/siamjo/AlievLEOW18} and was $2\log(2\Delta)$.
In Section~\ref{sec:inequality}, we provide an analysis to obtain tighter
bounds from inequalities such as those of Theorem~\ref{thm:kp:supp_bound} depending
on the relation between $\Delta$ and $m$. For example, the results of
Theorem~\ref{thm:ineq:main} imply that a bound of of $s\leq \log(2\Delta\cdot
\sqrt{(3/2)\cdot \log(2\Delta)})$ can be constructed also.

\paragraph*{A Lower Bound on the Support}

Here, we present a simple lower bound for the support of a vertex of $\Ptop_I$.

Consider the polytope $\Ptop = \{x \mid a^{\top} x = b, x \geq 0 \}$ with
$a^{\top} = (2^0, 2^1, \ldots , 2^{d-1})$ and $b = 2^d-1$. Clearly, $v^{\top} =
(1, \ldots , 1)$ is a vertex of $\Ptop_I$ as $v$ is the unique optimal integral
solution that maximizes the objective function $c^{\top} = (2^0, 2^1, \ldots ,
2^{d-1})$. For the support of $v$, we obtain $|supp(v)| = n = d=\log(\Delta)+1$. 
Furthermore, we have $\lVert a
\rVert_{2}=\sqrt{\sum_{i=0}^{d-1}2^{2i}}=\sqrt{\sum_{i=0}^{d-1}4^{i}}=\sqrt{(4^{d}-1)/3}$ and hence
$  \log(\lVert a \rVert_{2}) = (1/2)\log([4^{d}-1]/3)
  \leq (1/2)\log(4^{d}/(3-\epsilon)) = d-(1/2)\log(3-\epsilon)$ for a
  sufficiently small $\epsilon \leq 3/(4^{d}-1)$. 
  We thus obtain the following lemma:
  \begin{lemma}
    \label{lem:kp:lower_bound}
    There is an integer hull of a knapsack polytope $\Ptop_{I}$ with a vertex $v$ with 
      $|\supp(v)| \geq \log(\lVert a \rVert_{2})+(1/2)\log(3-\epsilon)$
      for $\epsilon \leq 3/(4^{d}-1)$. 
  \end{lemma}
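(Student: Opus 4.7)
The plan is to exhibit an explicit knapsack instance witnessing the lower bound, using the binary representation of integers as the coefficient vector. Concretely, I will take $a^{\top}=(2^0,2^1,\ldots,2^{d-1})$ and $b=2^d-1$, so that $\Ptop=\{x\in\RRgeq^n \mid a^{\top}x=b\}$. The integral solutions of $a^{\top}x=b$ with $x\in\ZZgeq^n$ correspond essentially to representing $b$ as a nonnegative integer combination of powers of two; the all-ones vector $v=(1,\ldots,1)$ is an obvious feasible integer point.

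Next I would argue that $v$ is a vertex of $\Ptop_I$ by producing an explicit linear objective it uniquely optimizes. The natural choice is $c=a$, which means we maximize $\sum_i 2^{i-1} x_i$ subject to $\sum_i 2^{i-1}x_i = b$; every feasible integer point attains the same value, so instead I would take $c=-\mathbf{1}$ (or equivalently maximize $\mathbf{1}^{\top}x$). Among all nonnegative integer solutions of $\sum_i 2^{i-1}x_i=2^d-1$, the all-ones vector has the largest coordinate sum because increasing any $x_i$ for $i\geq 2$ forces an even bigger decrease elsewhere by a standard binary-representation argument; verifying uniqueness reduces to a short case analysis on the binary expansion of $b=2^d-1$. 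By Fact~\ref{fact:optimal} (or directly from uniqueness), $v$ is then a vertex of $\Ptop_I$, and $|\supp(v)|=d$.

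It then remains to compare $d$ with $\log\lVert a\rVert_2$. Using the finite geometric series,
\begin{align*}
\lVert a\rVert_2^2 = \sum_{i=0}^{d-1} 2^{2i} = \sum_{i=0}^{d-1}4^i = \frac{4^d-1}{3},
\end{align*}
so $\log \lVert a\rVert_2 = \tfrac{1}{2}\log\bigl((4^d-1)/3\bigr)$. For any $\epsilon$ with $0<\epsilon\leq 3/(4^d-1)$ one has $(4^d-1)/3 \leq 4^d/(3-\epsilon)$, which after taking logarithms gives $\log\lVert a\rVert_2 \leq d - \tfrac{1}{2}\log(3-\epsilon)$. Rearranging this yields $|\supp(v)| = d \geq \log\lVert a\rVert_2 + \tfrac{1}{2}\log(3-\epsilon)$, as desired.

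The only slightly delicate step is the vertex verification: I need to be sure that the all-ones vector is not a convex combination of other integer points in $\Ptop$, which is why I want to exhibit a linear objective it uniquely optimizes rather than merely stating that it is feasible. Everything else is routine algebra with geometric sums and a small $\epsilon$-estimate.
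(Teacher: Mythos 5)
Your construction ($a^{\top}=(2^0,\ldots,2^{d-1})$, $b=2^d-1$, $v=(1,\ldots,1)$) and the norm computation with the $\epsilon$-estimate are exactly the paper's, and that part is correct. You also rightly sense that the certificate $c=a$ used in the paper is degenerate, since $a^{\top}x$ is constant on the feasible set. But your replacement certificate points the wrong way: the all-ones vector does \emph{not} maximize $\mathbf{1}^{\top}x$ over the integral points of $\Ptop$. The point $(2^d-1,0,\ldots,0)$ is feasible and has coordinate sum $2^d-1\gg d$; more generally, trading one unit of a high-weight coordinate for $2^{i-1}$ units of $x_1$ \emph{increases} the coordinate sum, so maximizing $\mathbf{1}^{\top}x$ drives everything into the first coordinate. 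Your heuristic that increasing some $x_i$ "forces an even bigger decrease elsewhere" describes a decrease of the \emph{weighted} sum, not of the coordinate sum, and in fact supports the opposite conclusion (for $d=3$, starting from $(1,1,1)$ and increasing $x_2$ yields the feasible point $(3,2,0)$ with larger coordinate sum $5>3$).

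The fix is to reverse the objective: $v$ is the \emph{unique minimizer} of $\mathbf{1}^{\top}x$ over $\{x\in\ZZgeq^{d}\mid \sum_i 2^{i-1}x_i=2^d-1\}$. Any multiset of powers of two summing to $2^d-1$ can be reduced to the binary representation by carries (replace two copies of $2^i$ by one $2^{i+1}$), and each carry strictly decreases the number of summands; hence every feasible $x$ satisfies $\mathbf{1}^{\top}x\geq d$ (the binary digit sum of $2^d-1$), with equality only for the binary representation, which is $v$. Consequently every other integral point has coordinate sum at least $d+1$, so $v$ cannot lie in the convex hull of the remaining integral points and is a vertex of $\Ptop_I$. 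With this one correction the rest of your argument goes through; as a side remark, the final inequality $(4^d-1)/3\leq 4^d/(3-\epsilon)$ actually holds for every $\epsilon\geq 0$, so the restriction on $\epsilon$ is harmless but not needed for the bound as stated.
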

  Theorem~\ref{thm:kp:supp_bound} thus gives us an upper bound of
  $\log(3{.}51\cdot \lVert a \rVert_{2})\leq \log(\lVert a \rVert_{2})+1{.}811$,
  while Lemma~\ref{lem:kp:lower_bound} gives an lower bound converging to
  $\log(\lVert a \rVert_{2})+(1/2)\log(3) \geq \log(\lVert a
  \rVert_{2})+0{.}79$. The additive error of Theorem~\ref{thm:kp:supp_bound} is
  thus at most $1{.}021$.

\subsection{Bounding the Number of Integer Vertices}
\label{sec:kp:number}

The result of Corollary~\ref{cor:kp:supp_bound} directly shows that the number of
vertices of the integer hull of a knapsack polytope $\Ptop_{I}=\Conv(\{x\mid a^{\top}x = b,
x\geq 0\}\cap \ZZ^{n})$ is at most $\binom{n}{\ell}\cdot (b+1)^{\ell}\leq (n\cdot (b+1))^{\ell}$, where
$\ell=(3/2)\log(2{.}4\cdot \Delta)$, 
but we can significantly improve on this.
To do so, we show that Corollary~\ref{cor:kp:supp_bound} can be easily combined
with the techniques from Hayes and Larman~\cite{hayes1983vertices}.

First, we get rid of the dependency on $b$ completely by using well-known
results on the proximity of integral solutions and fractional solutions, as
shown in~\cite{DBLP:journals/siamjo/AlievLEOW18}.
Fix any vertex $v$ of $\Ptop_{I}$. 
Now, as $v$ is a vertex, there is $c\in \ZZ^{n}$ such that $v$ is the unique
optimum solution of $\min \{c^{\top}x\mid ax=b, x\in \ZZ^{n}_{\geq 0}\}$. 
Now consider an optimal basic feasible solution $y$ of the relaxed program $\min
\{c^{\top}x\mid ax=b, x\in \RR^{n}_{\geq 0}\}$. The proximity result
of~\cite{DBLP:journals/talg/EisenbrandW20} then shows that one can bound 
$  \lVert v-y \rVert_{1}\leq 2\Delta+1$. 
Hence, every vertex solution of $\Ptop_{I}$ is near a basic feasible solution of the
relaxed program.
Fix some basic feasible solution $y$ of the relaxed program.
Let $C_{y}=\{z\in \Ptop_{I}\mid \lVert y-z \rVert_{1}\leq 2\Delta+1\}$ be the
surrounding box around $y$. 
We partition $C_{y}$ into smaller boxes such that each box contains at most one
vertex. 
Let $L\in \ZZ^{n}$ be the lower corner of $C_{y}$, i.\,e.~$L_{i}=\max\{\lceil
y_{i}-(2\Delta+1) \rceil,0\}$. 
Now, let $d = \lceil \log(8(2\Delta+1))\rceil$. 
We now construct exponentially growing intervals $I_{j}$ with $I_{0}=[0,1)$ and
$I_{j}=[2^{j-1},2^{j})$ for $j\geq 1$.
For a vector $\vec{k}=(k[1],\ldots,k[n])\in \ZZ^{n}$ with $k[i]\leq d$,
we define the \emph{Box} $B_{\vec{k}}\subseteq \ZZ^{n}$ with
$B_{\vec{k}}=L+\bigtimes_{i=1}^{n}I_{k[i]}$.
Let $\mathcal{B}'=\{B_{\vec{k}}\mid \vec{k}\in \ZZ^{n}\mid k[i]\leq d\}$
the set of all of these boxes and $\mathcal{B}\subseteq \mathcal{B}'$ be the set
of all boxes that intersect $C_{y}$. 
Clearly, all of these boxes are disjoint by construction. Furthermore, $C_{y}\subseteq
\bigcup_{B\in \mathcal{B}}B$:
Consider any $z\in C_y$. 
By definition, we have $\lVert y-z \rVert_{1}\leq 2\Delta+1$. 
Hence, there is a vector $v\in \ZZgeq^n$ such that $L+v=z$, where $\lVert v\rVert_{\infty}\leq 2(2\Delta+1)$.
Hence, for each $i=1,\ldots,n$, there is an integer $k[i]$ such that $v_i\in I_{k[i]}$. 
As $v_i\leq 2(2\Delta+1)$, we have $k[i]\leq \log(4(2\Delta+1)) < \lceil \log(8(2\Delta+1)) \rceil  = d$. 
Hence, $z\in B_{\vec{k}}$ and thus $C_y\subseteq
\bigcup_{B\in \mathcal{B}}B$. 
We will now show that a box containing a vertex
cannot contain any more integral points. 

\begin{lemma}
  \label{lem:kp:boxes}
  No box $B\in \mathcal{B}$ containing a vertex $v$ can contain another
  integral point of $\Ptop_{I}$. 
\end{lemma}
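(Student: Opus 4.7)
The plan is to argue by contradiction: suppose the box $B = B_{\vec{k}} \in \mathcal{B}$ contains the vertex $v$ and a distinct integral point $v' \in \Ptop_I \cap \ZZ^n$, and set $w = v - v' \in \ZZ^n \setminus \{0\}$. Then $a^{\top} w = 0$ because both $v$ and $v'$ satisfy $a^{\top} x = b$. I will show that $v + w$ and $v - w$ are both in $\Ptop_I \cap \ZZ^n$ and distinct from $v$, so that $v = \frac{1}{2}(v+w) + \frac{1}{2}(v-w)$ realizes $v$ as a convex combination of other integral points of $\Ptop_I$, directly contradicting the vertex property used in the preliminaries.

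The heart of the argument is the coordinatewise bound $|w_i| < v_i$ whenever $w_i \neq 0$, which I would establish by splitting on the exponent $k[i]$. If $k[i] = 0$, then $I_0 = [0,1)$ contains only the integer $0$, so $v_i - L_i = v'_i - L_i = 0$ and $w_i = 0$. If $k[i] \geq 1$, then both $v_i - L_i$ and $v'_i - L_i$ lie in $[2^{k[i]-1}, 2^{k[i]})$, an interval of width $2^{k[i]-1}$ with integer endpoints; two integers inside it can differ by at most $2^{k[i]-1} - 1$, so $|w_i| \leq 2^{k[i]-1} - 1$, whereas $v_i \geq L_i + 2^{k[i]-1} \geq 2^{k[i]-1}$, yielding the strict inequality $|w_i| < v_i$. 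This width-versus-lower-endpoint comparison is exactly what the exponential partition $\{I_j\}_{j \geq 0}$ was designed to provide.

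The coordinates $i \notin \supp(v)$ deserve a separate but easy check, which I expect to be the subtlest bookkeeping step rather than the main obstacle. Here $v_i = 0$, so $-L_i = v_i - L_i \in I_{k[i]} \subseteq [0, \infty)$ combined with $L_i \geq 0$ forces $L_i = 0$ and $k[i] = 0$; the $k[i] = 0$ case then also gives $v'_i = 0$ and $w_i = 0$. Hence $\supp(v \pm w) \subseteq \supp(v)$, so no coordinate outside $\supp(v)$ can contribute. Integrality now upgrades $|w_i| < v_i$ to $v_i \geq |w_i| + 1$, so $v_i \pm w_i \geq 0$ at every coordinate; combined with $a^{\top}(v \pm w) = b$ this places both $v + w$ and $v - w$ in $\Ptop_I \cap \ZZ^n$. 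Since $w \neq 0$ they differ from $v$, yet $v$ is their midpoint, contradicting the assumption that $v$ is a vertex of $\Ptop_I$ and completing the argument.
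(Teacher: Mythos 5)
Your proposal is correct and is essentially the paper's proof: your $v+w = 2v - v'$ is exactly the reflected point $q = 2v - p$ used in the paper, and your interval analysis (integers in $L_i + I_{k[i]}$ differ by at most $2^{k[i]-1}-1$ while $v_i \geq L_i + 2^{k[i]-1}$) is the same non-negativity argument the paper phrases as $v[i] > p[i]/2$. The only cosmetic differences are your explicit bound on $|w_i|$ and the (redundant but harmless) separate check for coordinates outside $\supp(v)$.
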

\begin{proof}
  Suppose that some box $B=B_{\vec{k}}$ contains a vertex $v$ and another
  integral point $p$.
  We will argue that $q=2v-p$ is also an integral point of $\Ptop_{I}$ which
  contradicts the fact that $v$ is a vertex, as $v=(p+q)/2$. 
  It is easy to see that $q$ is integral and that
  $a^{\top}q=2a^{\top}v-a^{\top}p=2b-b=b$ holds. 
  We only need to show that $q\geq 0$.
  For indices $i$ with $v[i]\geq p[i]$, this clearly holds.
  Note that $v[i]=p[i]$ holds for all $i$ with $k[i]=0$, as $I_{0} = [0,1)$ and thus contains only a single integer.  
  The only remaining case to consider is an index $i$ with $v[i] < p[i]$ and
  $k[i] > 0$.
  As $v,p\in B_{\vec{k}}$, we know that $v[i],p[i]\in I_{k[i]}$ and thus $v[i]
  > p[i]/2$. We can thus conclude that $q[i] = 2 v[i] - p[i]  \geq 0$.
\end{proof}

Lemma~\ref{lem:kp:boxes} directly shows that the number of integer vertices in
box 
$C_{y}$ is at most $d^{n}$, i.\,e.~at most
$\lceil \log(8(2\Delta+1)\rceil ^{n}$, as there are at most this many boxes in
$\mathcal{B}'$. 
But Corollary~\ref{cor:kp:supp_bound} shows that all boxes $B_{\vec{k}}$ with
$|\supp(\vec{k})| > \ell$ can also not contain any
integer vertices.
The number of vectors $\vec{k}\in \ZZ^{n}$ with (i) $k[i] \leq d$ and (ii)
$|\supp(\vec{k})| \leq  \ell$ is bounded by
\begin{align*}
  \sum_{j=1}^{\ell}\binom{n}{j}\cdot d^{j}  < 
  \ell \cdot 
  n^{\ell}\cdot d^{\ell}\leq 
 \ell\cdot (n\cdot \lceil \log(8(2\Delta+1)) \rceil )^{\ell}.
\end{align*}
Hence, there are at most $\ell\cdot (n\cdot
\lceil \log(8(2\Delta+1)) \rceil )^{\ell}$ vertices in $C_{y}$. 
As there are at most $n$ basic feasible solutions of the relaxed program, the total
number of vertices of the integer hull of a knapsack polytope is at most
$\ell\cdot  (n\cdot \lceil \log(8(2\Delta+1)) \rceil)^{\ell+1}$. 

\begin{theorem}
  \label{thm:kp:vertices}
  The number of vertices of the integer hull of a  knapsack polytope $\Ptop_{I}$ is
  at most $(n\cdot \log(\Delta))^{O(\log(\Delta))}$. 
\end{theorem}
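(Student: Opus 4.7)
The plan is essentially to collect the ingredients already assembled in the section and perform the final arithmetic. The support bound from Corollary~\ref{cor:kp:supp_bound} gives $\ell = (3/2)\log(2{.}4\Delta)$ as a cap on $|\supp(v)|$ for any vertex $v$ of $\Ptop_{I}$, and the proximity statement invoked from~\cite{DBLP:journals/talg/EisenbrandW20} localizes each such vertex inside the $\ell_{1}$-ball $C_{y}$ of radius $2\Delta+1$ around some basic feasible solution $y$ of the LP relaxation. Since the LP relaxation has at most $n$ basic feasible solutions, it suffices to count vertices of $\Ptop_{I}$ contained in a single $C_{y}$ and multiply by $n$ at the end.

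Next, I would exploit the dyadic partition $\{B_{\vec{k}}\}$ of $C_{y}$ together with Lemma~\ref{lem:kp:boxes}, which says each box $B_{\vec{k}}$ holds at most one vertex. The key observation is that Corollary~\ref{cor:kp:supp_bound} additionally forces any box that actually contains a vertex to have $|\supp(\vec{k})|\leq \ell$: if a coordinate $i$ satisfies $k[i]\geq 1$ then $v[i]\geq 1$, so the support of $\vec{k}$ is contained in the support of the vertex it hosts. Therefore only the boxes with at most $\ell$ nonzero coordinates of $\vec{k}$ need to be enumerated.

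The final step is a routine count: the number of choices for the support of $\vec{k}$ is $\sum_{j=0}^{\ell}\binom{n}{j}\leq \ell\cdot n^{\ell}$, and each chosen coordinate takes at most $d=\lceil \log(8(2\Delta+1))\rceil$ values. This gives $\ell\cdot (n\cdot d)^{\ell}$ candidate boxes per LP basis, hence at most $n\cdot \ell\cdot (n\cdot d)^{\ell}\leq \ell\cdot (n\cdot d)^{\ell+1}$ vertices in total. Substituting $\ell=O(\log\Delta)$ and $d=O(\log\Delta)$ collapses the expression to $(n\log\Delta)^{O(\log\Delta)}$, which is the claimed bound.

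I do not anticipate any real obstacle: Lemma~\ref{lem:kp:boxes}, the proximity bound, and the support bound are already in place, so the argument amounts to combining them and verifying the two asymptotic simplifications. The only spot requiring a line of care is the observation that $\supp(\vec{k})\subseteq \supp(v)$ for any vertex $v\in B_{\vec{k}}$, which ensures we really are counting only $\ell$-supported index vectors rather than all $\vec{k}$ with coordinates bounded by $d$.
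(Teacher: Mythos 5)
Your proposal is correct and follows essentially the same route as the paper: proximity to LP basic solutions, the dyadic box partition with Lemma~\ref{lem:kp:boxes}, the support restriction from Corollary~\ref{cor:kp:supp_bound}, and the final count of $\ell$-supported index vectors. Your explicit remark that $\supp(\vec{k})\subseteq\supp(v)$ for a vertex $v\in B_{\vec{k}}$ (since $k[i]\geq 1$ forces $v[i]\geq 1$) is exactly the step the paper uses implicitly, so nothing is missing.
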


\subsection{Enumerating the Vertices}
\label{sec:kp:enumerating}
While Theorem~\ref{thm:kp:vertices} gives us an upper bound on the number of
vertices of $\Ptop_{I}$, it does not directly lead to an algorithm enumerating
all of them. 
As above, we split the polytope into boxes $C_{y}$ and then into smaller boxes
$\mathcal{B}$ and define $\ell=(3/2)\log(2{.}4\cdot \Delta)$. 
Lemma~\ref{lem:kp:boxes} shows that if a box $B\in \mathcal{B}$ contains two
integral points, it does not contain a vertex. 
We will now use this lemma to algorithmically enumerate the vertices.

\begin{theorem}
\label{thm:knapsack:number}
  All integer vertices of the integer hull of the  knapsack polytope $\Ptop_{I}$ can be enumerated in time
    $(n\cdot \log(\Delta))^{O(\log(\Delta))}$. 

\end{theorem}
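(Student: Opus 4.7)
The plan is to execute the box decomposition from the proof of Theorem~\ref{thm:kp:vertices} algorithmically: enumerate every candidate box, then certify or discard the (at most one) integer vertex it can contain. Concretely, for each of the (at most $n$) basic feasible solutions $y$ of the LP relaxation I would enumerate all index vectors $\vec{k}$ with $k[i] \leq d = \lceil \log(8(2\Delta+1)) \rceil$ and $|\supp(\vec{k})| \leq \ell$, where $\ell = (3/2)\log(2{.}4\Delta)$. The counting argument already used to prove Theorem~\ref{thm:kp:vertices} bounds the number of such boxes by $\ell \cdot (n\log\Delta)^{O(\log\Delta)}$, so it suffices to process each box in time polynomial in $n$ and $\Delta$.

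Processing a single box $B_{\vec{k}}$ splits into two sub-problems. First, I would enumerate the integer points $v \in \Ptop \cap B_{\vec{k}}$. Since the coordinates outside $\supp(\vec{k})$ are pinned to $L_i$ and the coordinates inside $\supp(\vec{k})$ range over integer intervals of size at most $2^d = O(\Delta)$, this reduces to the single equation $\sum_{i \in \supp(\vec{k})} a_i x_i = b - \sum_{i \notin \supp(\vec{k})} a_i L_i$ with at most $|\supp(\vec{k})| \leq \ell$ bounded integer variables, which a textbook knapsack-type dynamic program solves in time polynomial in $n$ and $\Delta$. The search can be aborted as soon as two integer points are discovered, since by Lemma~\ref{lem:kp:boxes} such a box cannot house any vertex.

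Second, each box in which the search returns a unique candidate $v$ needs a vertex test. I would directly invoke Lemma~\ref{lem:eisenbrand:shmonin}: brute-force all integer vectors $x \in \{-1,0,1\}^{\supp(v)}$, of which there are $3^{|\supp(v)|} \leq 3^\ell = \Delta^{O(1)}$, and check whether any non-zero one satisfies $a^{\top} x = 0$. If none does, then $v$ is certified as a vertex; otherwise it is discarded. The certified vertices are stored in a hash set to avoid duplicates that can appear when a vertex lies in the $C_y$-neighbourhood of several LP basic solutions, and the contents of the set are output at the end.

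The main obstacle is keeping the per-box overhead inside the claimed exponent. Both the knapsack DP and the kernel search cost at most $\Delta^{O(1)} = 2^{O(\log\Delta)}$, and since $(\log\Delta)^{\log\Delta} = 2^{\log\Delta \cdot \log\log\Delta}$ dominates $2^{c\log\Delta}$ for every constant $c$ once $\Delta$ is large (with bounded $\Delta$ handled trivially), these polynomial factors are absorbed into $(n\log\Delta)^{O(\log\Delta)}$. Multiplying the box count by the per-box cost therefore yields the claimed total running time of $(n\log\Delta)^{O(\log\Delta)}$.
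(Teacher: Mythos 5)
Your box decomposition, the counting of candidate boxes, and the replacement of the Lenstra--Kannan calls by a knapsack-type dynamic program (aborting once two integral points are found) all work and match the structure of the paper's argument; the DP is in fact a reasonable elementary substitute in the single-constraint case, and your accounting of the $\Delta^{O(1)}$ per-box cost against the $(n\log\Delta)^{O(\log\Delta)}$ box count is fine.

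The gap is in your vertex test. Lemma~\ref{lem:eisenbrand:shmonin} is only a \emph{necessary} condition for $v$ to be a vertex: it says that if $v$ is a vertex then system~\eqref{IP:kernel} has no non-zero solution. You are using the converse --- ``no non-zero $x\in\{-1,0,1\}^{\supp(v)}$ with $a^{\top}x=0$ implies $v$ is a vertex'' --- and that converse is false. A point $v$ can fail to be a vertex because it is a convex combination of integer points whose differences from $v$ have entries of absolute value larger than $1$. For instance, with $a=(1,2)$ and $b=4$, the point $v=(2,1)$ is the midpoint of $(4,0)$ and $(0,2)$ and hence not a vertex, yet the only $x\in\{-1,0,1\}^{2}$ with $x_{1}+2x_{2}=0$ is $x=0$, so your test would certify $v$. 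Consequently your algorithm outputs a superset of the vertices rather than the vertices themselves. The paper avoids this by collecting the set $W$ of all surviving candidates (which provably contains every vertex) and then testing, for each $w\in W$, whether $w\in\Conv(W\setminus\{w\})$ via a linear program; since $\Ptop_{I}$ is the convex hull of its vertices and all vertices lie in $W$, this membership test is both necessary and sufficient. You need this global test (or some other genuinely sufficient criterion) in place of the kernel search to complete the proof; the kernel search can at best be used as an additional filter to discard some non-vertices early.
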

\begin{proof}

First, note that Corollary~\ref{cor:kp:supp_bound} shows that no box $B_{\vec{k}}$
with $|\supp(\vec{k})| > \ell$ contains any vertex. 
In the following, we thus iterate through all boxes $B_{\vec{k}}$ with
$|\supp(\vec{k})|\leq \ell$ and filter out all boxes containing
either at least two integral points or none. 
For all the remaining boxes that contain exactly one integral point, we check
whether this single point is a vertex.

Fix some box $B_{\vec{k}}$ with $|\supp(\vec{k})|\leq \ell$. 
We first use the classical algorithm of Lenstra~\cite{lenstra1983integer} and
Kannan~\cite{kannan1987minkowski} to check whether $B_{\vec{k}}$ contains an 
integral point. 
If no such integral point exists, we discard the box. 
If an integral solution $x^{*}$ exists, we can search for another integral
solution $x$ where we additionally force $x_{i} \leq x^{*}_{i} -1$
resp.~$x_{i}\geq x^{*}_{i}+1$ for each $i$ individually. 
If any other integral solution exists, we also discard the box. 
In total, we make at most $2|\supp(\vec{k})|+1$ calls of the algorithm of
Lenstra and Kannan. 
This can be done in time $|\supp(\vec{k})|^{O(|\supp(\vec{k})|)}\cdot
\log(\Delta)^{O(1)}\leq \log(\Delta)^{O(\log \Delta)}$. 
If $B$ only contains a single solution, we still need to check whether this is a
vertex. 
This can be done as in Hayes and Larman~\cite{hayes1983vertices}.
Let $\mathcal{B}'\subseteq \mathcal{B}$ be the set of boxes containing exactly
one integral solution and let $W$ be these solutions.
The proof of Theorem~\ref{thm:knapsack:number} shows that $W$ contains all vertices. 
Now, $w\in W$ is a vertex iff $w\not\in \Conv(W\setminus \{w\})$. Hence, to
determine whether $w$ is a vertex, we just need to check whether the following
linear program (with variables $\lambda_{w}$) has a solution:
\begin{align*}
  w = \sum_{w'\in W\setminus \{w\}}\lambda_{w'}w'; \quad 
  \sum_{w'\in W\setminus \{w\}}\lambda_{w'} = 1; \quad 
  \lambda_{w'}\geq 0 \  \forall w'\in W\setminus \{w\}
\end{align*}
This linear program can be solved in time $|W|^{O(1)}\cdot
\log(\Delta)^{O(1)}\leq (n\log(\Delta))^{O(\log \Delta)}$ via the ellipsoid
algorithm~\cite[Theorem (6.6.5)]{grotschel2012geometric}, as Theorem~\ref{thm:knapsack:number} shows that $|W|\leq (n\log(\Delta))^{O(\log \Delta)}$.

In total, the complete running time to enumerate the vertices is
   $(n\cdot \log(\Delta))^{O(\log(\Delta))}$.
\end{proof}

\section{Handling General Polytopes}
\label{sec:general}
In this section, we generalize the results of Section~\ref{sec:knapsack}. We consider the integer hull $\Ptop_I = \operatorname{Conv}(\Ptop \cap \ZZ^n)$ for polytopes $\Ptop = \{Ax=b, x\in \RRgeq^{n}\}$ where $A$ is an integral
$m\times n$-matrix consisting of 
rows $a^{(1)},\ldots,a^{(m)}$. 

In this case, the pigeonhole argument can be applied in the same way: Assume
that a point $x \in \Ptop_I$ has support with cardinality $s := |\operatorname{supp}(x)|$ with
$s > m \log \Delta + 1$ and hence $|\hyper_s| = 2^s > 2\Delta ^{m}$. Note that the 
sum of column vectors $\sum_{i\in S} A_i x_i$ can have at most
$\prod_{i=1}^{m}(\lVert a^{(i)} \rVert_{1}+1)\leq (s\cdot \Delta+1)^{m}$
different values for $x\in \hyper_{s}$.
Hence, if $2\Delta^m > (s\cdot \Delta+1)^m$, we have $|\hyper_s| > |\hyper_s\cdot A_S|$ and thus two different points $x,x'\in \hyper_s$ with $A_{S} x = A_{S}x'$.
Lemma~\ref{lem:eisenbrand:shmonin} thus implies a bound of $|\supp(v)|\leq m\cdot
\log(|\supp(v)\cdot \Delta+1)$ for each vertex $v$. 

A simple approach to generalize Theorem~\ref{thm:kp:supp_bound} is the use of
the \emph{union bound} to handle all $m$ constraints simultaneously.
This would introduce an additional term of $\log(m)$ and the bound on the
support would be roughly $|\supp(v)| \leq O(\log(\prod_{j=1}^{m}[\sqrt{\lVert
  a^{(j)} \rVert_{2}} \cdot \log(m)]))$.
In the following, we will show that we can actually get rid of this $\log(m)$
term. 

\subsection{Characterizing the Vertices}
The general strategy for the proof of the bound on the support of general
polytopes is similar to the proof of Theorem~\ref{thm:kp:supp_bound}: 
We again observe that the value of $A\cdot x$ for most $x\in \hyper_{n}$ is
centered around $\sum_{i=1}^{n}A_{i}/2$.
In other words, if we choose a point $x \in \hyper_{n}$ randomly, then the value
of $A\cdot x$ is likely to be close to the expected value, which is
$\sum_{i=1}^{n}A_{i}/2$. 
The improved bound for the support can then be derived by using the above
pigeonhole argument on a smaller area around the expected value. 
In other words we choose a subset $\hyper'_{n}$ of the hypercube $\hyper_{n}$
that maps onto the smaller area $A\cdot \hyper'_{n}$ around the expected value. 
For the knapsack polytope, we used Theorem~\ref{thm:hoeffding}~---~the
Hoeffding bound~---~to construct this subset. 
Unfortunately, this bound is not guaranteed to hold for vector-valued random
variables. 
We thus prove the following Hoeffding-type theorem for vector-valued random
variables and postpone its proof to Section~\ref{sec:measure}. 
\begin{restatable}{theorem}{hoeffdingvector}
  \label{thm:hoeffding:vector}
  Let $Y_{1},\ldots,Y_{n}$ be independent random variables that
each output a $n$-dimensional vector with $\Pr[Y_{i}[j]\in
[\,\ell_{i}[j],u_{i}[j]\,]]=1$. Furthermore, let $Y=\sum_{i}Y_{i}$ and $\mu :=
\Exp[Y]$. 
  Then, for every $\delta > 0$, we have 
  \begin{align*}
&\Pr\left[\lVert Y-\mu \rVert_{1} \geq   1{.}12\cdot  \sum_{j=1}^{m}\sqrt{\sum_{i=1}^{n}(\ell_{i}[j]-u_{i}[j])^{2}} + \delta\right]\leq\\ &\quad \quad 2\exp(-(2\delta^2)/\sum_{i=1}^{n}[\sum_{j=1}^{m} (u_{i}[j]-\ell_{i}[j])]^{2}).
\end{align*}
\end{restatable}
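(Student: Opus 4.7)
The plan is to reduce the vector-valued concentration to the concentration of the scalar random variable $Z := \lVert Y - \mu \rVert_{1}$ around its mean, obtained via the bounded-differences (McDiarmid) inequality, and then separately upper bound $\Exp[Z]$.

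First, I would verify that $f(Y_{1},\ldots,Y_{n})= \lVert \sum_{i}Y_{i} - \mu \rVert_{1}$ has bounded differences with parameters $c_{i}=\sum_{j}(u_{i}[j]-\ell_{i}[j])$: replacing $Y_{i}$ by any $Y_{i}'$ whose coordinates still lie in the boxes $[\ell_{i}[j],u_{i}[j]]$ shifts $Y$ by $Y_{i}'-Y_{i}$, and the triangle inequality for $\lVert \cdot \rVert_{1}$ bounds the change in $f$ by $\lVert Y_{i}-Y_{i}' \rVert_{1} \leq \sum_{j}(u_{i}[j]-\ell_{i}[j])$. McDiarmid's inequality then yields
\begin{align*}
  \Pr[Z-\Exp[Z] \geq \delta] \leq \exp\!\left(-\frac{2\delta^{2}}{\sum_{i}(\sum_{j}(u_{i}[j]-\ell_{i}[j]))^{2}}\right),
\end{align*}
which matches the exponential term in the claim; the factor of $2$ appearing on the right-hand side of the theorem is harmlessly absorbed (e.g.\ by using the two-sided version).

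Second, to establish $\Exp[Z] \leq 1{.}12 \sum_{j}\sqrt{\sum_{i}(u_{i}[j]-\ell_{i}[j])^{2}}$, I would decompose $\Exp[Z] = \sum_{j}\Exp[|Y[j]-\mu[j]|]$ and bound each coordinate individually. Because $Y[j]=\sum_{i}Y_{i}[j]$ is a sum of independent scalars with $Y_{i}[j]\in[\ell_{i}[j],u_{i}[j]]$, Popoviciu's inequality gives $\mathrm{Var}(Y_{i}[j])\leq (u_{i}[j]-\ell_{i}[j])^{2}/4$, so Jensen's inequality gives
\begin{align*}
  \Exp[|Y[j]-\mu[j]|] \leq \sqrt{\mathrm{Var}(Y[j])} \leq \frac{1}{2}\sqrt{\sum_{i}(u_{i}[j]-\ell_{i}[j])^{2}}.
\end{align*}
Summing over $j$ produces a constant of $1/2 < 1{.}12$, with slack to spare. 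Plugging $\Exp[Z]+\delta$ into the McDiarmid estimate then yields the theorem.

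The main obstacle, and the reason that applying scalar Hoeffding coordinate-wise together with a union bound does not work, is that such a bound inserts an unwanted $\log(m)$ factor into the deviation, which is exactly what the theorem is designed to avoid. Replacing the union bound by a single application of McDiarmid to the scalar $Z$ absorbs the multi-coordinate dependence into the additive shift $\Exp[Z]$, whose magnitude is controlled by the sum-of-$\sqrt{\cdot}$ expression in the statement. Verifying that this shift is bounded by $1{.}12$ times that expression is the only numerical step requiring care, and the Jensen/variance argument above supplies it with substantial room to spare.
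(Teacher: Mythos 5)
Your proposal is correct, and its overall skeleton coincides with the paper's: both apply McDiarmid's inequality to the scalar $Z=\lVert Y-\mu\rVert_{1}$ with the bounded-difference parameters $c_{i}=\sum_{j}(u_{i}[j]-\ell_{i}[j])$, and both reduce the remaining work to bounding $\Exp[Z]$ coordinate-wise via linearity. Where you diverge is in the key lemma used for that expectation bound. The paper first applies the scalar Hoeffding inequality to each coordinate $Y[j]$ to get a sub-Gaussian tail, and then invokes a separate claim (proved in the appendix by integrating the tail bound $\Exp[X]=\int_{0}^{\infty}\Pr[X\geq t]\,\mathrm{d}t$ and optimizing the split point) to convert that tail into $\Exp[|Y[j]-\mu[j]|]\leq 1{.}12\sqrt{\sum_{i}(u_{i}[j]-\ell_{i}[j])^{2}}$. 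You instead use Jensen's inequality together with independence (so variances add) and Popoviciu's bound $\mathrm{Var}(Y_{i}[j])\leq (u_{i}[j]-\ell_{i}[j])^{2}/4$, which yields the constant $1/2$ in place of $1{.}12$. Your route is both more elementary (no tail integration, no optimization of the split point) and quantitatively stronger; it would in fact allow the constant $1{.}12$ in the theorem statement, and hence in the $\Gamma$ of Theorem~\ref{thm:general:supp_bound}, to be replaced by $1/2$. The only caveat is cosmetic: to recover the theorem exactly as stated you should note, as you do, that $1/2<1{.}12$ and that the two-sided form of McDiarmid supplies the factor $2$ on the right-hand side.
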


Using those ideas, we obtain the following support bound for vertices of the
general integer hull.
\begin{theorem}
  \label{thm:general:supp_bound}
  For each vertex $v$ of the integer  hull $\Ptop_{I}$ with $s=|\supp(v)|$, we have that 
      $|\supp(v)| \leq m\cdot \log(2e\Gamma/m+2e)$,
    where $\Gamma=1{.}12\sum_{j=1}^{m}\lVert
a^{(j)} \rVert_{2}+ \sqrt{\sum_{i=1}^{s}\lVert A_{i}
  \rVert^{2}_{1}}$, and $A_{1},\ldots,A_{s}$ are the columns of $A[\supp(v)]$ and
$a^{(1)},\ldots,a^{(m)}$ are the rows of $A[\supp(v)]$ and $e=\exp(1)$ is Euler's
number. 
\end{theorem}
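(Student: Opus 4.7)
The plan is to mirror the knapsack argument of Theorem~\ref{thm:kp:supp_bound} in the vector-valued setting, using Theorem~\ref{thm:hoeffding:vector} in place of the scalar Hoeffding bound. Fix a vertex $v$ of $\Ptop_I$, let $S=\supp(v)$, $s=|S|$, and write $A[S]$ for the submatrix of columns indexed by $S$. By Lemma~\ref{lem:eisenbrand:shmonin}, no two distinct points $x,x'\in\hyper_s$ can satisfy $A[S]\,x = A[S]\,x'$, since their difference would be a nonzero $\{-1,0,1\}$-solution to $A[S]\,y=0$. So it suffices to exhibit a subset $\hyper_s'\subseteq\hyper_s$ with $|\hyper_s'|>|A[S]\cdot\hyper_s'|$ whenever $s$ exceeds the claimed bound.

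For each $i\in S$, I would define a random vector $Y_i\in\RR^m$ that takes the value $A_i$ (the $i$-th column of $A$) with probability $1/2$ and the value $0$ with probability $1/2$; sampling $x\in\hyper_s$ uniformly is equivalent to sampling $Y=\sum_{i\in S}Y_i$, which has mean $\mu=\tfrac{1}{2}\sum_{i\in S}A_i$. Setting $\ell_i[j]=\min\{0,A_i[j]\}$ and $u_i[j]=\max\{0,A_i[j]\}$ gives $(u_i[j]-\ell_i[j])^2=A_i[j]^2$, so the two quantities appearing in Theorem~\ref{thm:hoeffding:vector} evaluate to $\sqrt{\sum_{i\in S}(\ell_i[j]-u_i[j])^2}=\nor{a^{(j)}}_2$ and $\sum_{j=1}^m(u_i[j]-\ell_i[j])=\norone{A_i}$.

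Choosing $\delta=\sqrt{\sum_{i=1}^s\norone{A_i}^2}$ in Theorem~\ref{thm:hoeffding:vector} makes the right-hand tail probability at most $2e^{-2}$, and the threshold becomes exactly the quantity $\Gamma$ from the statement. Defining $\hyper_s' = \{x\in\hyper_s : \norone{A[S]\,x-\mu} \leq \Gamma\}$ therefore yields $|\hyper_s'|\geq(1-2e^{-2})\cdot 2^s$. On the image side, $A[S]\cdot\hyper_s'$ consists of integer points of $\ZZ^m$ lying in an $\ell_1$-ball of radius $\Gamma$ around $\mu$, and a standard estimate (using $\binom{N+m}{m}\leq(e(N+m)/m)^m$ for nonnegative integer solutions of $\sum y_i\leq N$, together with $2^m$ sign patterns) bounds the number of such integer points by $(2e\Gamma/m+2e)^m$.

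Putting the two estimates together, if $s>m\log(2e\Gamma/m+2e)$, then $|\hyper_s'|>|A[S]\cdot\hyper_s'|$, so two distinct $x,x'\in\hyper_s'$ must collide under $A[S]$, contradicting Lemma~\ref{lem:eisenbrand:shmonin}. The main obstacle I expect is the careful bookkeeping: one has to thread the constant $1{.}12$ and the additive $\delta$-term out of Theorem~\ref{thm:hoeffding:vector} in exactly the right form so that the tail threshold aligns with $\Gamma$, and one has to count integer points in an $\ell_1$-ball centered at the non-integral point $\mu$ without losing constants in front of $(2e\Gamma/m+2e)^m$; the $(1-2e^{-2})$ factor from the bound on $|\hyper_s'|$ must then be absorbed into the final estimate without weakening the stated inequality.
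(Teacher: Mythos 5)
Your proposal is correct and follows essentially the same route as the paper: the same $\{0,A_i\}$ coin-flip process (the paper merely centers it as $\pm A_i/2$), the same application of Theorem~\ref{thm:hoeffding:vector} with $\delta=\sqrt{\sum_i\lVert A_i\rVert_1^2}$ to get $|\hyper_s'|\geq(1-2e^{-2})2^s$, the same $2^m\binom{\Gamma+m}{m}\leq(2e\Gamma/m+2e)^m$ count of integer points in the $\ell_1$-ball, and the same pigeonhole via Lemma~\ref{lem:eisenbrand:shmonin}. Your choice $\ell_i[j]=\min\{0,A_i[j]\}$, $u_i[j]=\max\{0,A_i[j]\}$ is in fact a slightly cleaner way to obtain the widths $|A_i[j]|$ than the paper's $u_i=A_i$, $\ell_i=-A_i$, and the two bookkeeping caveats you flag (the non-integral center of the $\ell_1$-ball and absorbing the factor $1-2e^{-2}$) are precisely the points where the paper itself is loose by an additive constant.
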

 
 \begin{proof}
  We show that for each $x\in \Ptop_{I}$, the inequality $\supp(x) >
  m\cdot \log(2e\Gamma/m+2e)$ implies that x is not a vertex. 
Consider any integral solution $x^{*}$ of $\{x\in \ZZgeq^{n}\mid Ax=b\}$ with support
$S=\supp(x^{*})$. 
Let $A_{1},\ldots,A_{s}$ be the columns of matrix $A[S]$, where $s=|S|$, and
$a^{(1)},\ldots,a^{(m)}$ be its rows.

\textbf{The random process:}
We will show that the output of the random process of choosing each
column with probability $1/2$ independently will not likely deviate from $L := (1/2)\sum_{i=1}^{s}A_i$.
For $i=1,\ldots,s$, we consider the random variable $Y_{i}$ which is equal to
$A_{i}/2$ with probability $1/2$ and equal to $-A_{i}/2$ with probability $1/2$.
Let $Y=\sum_{i} Y_{i}$ be the sum of these random variables. 
Note that $Y+L$ is exactly the random process where each column is chosen with
probability $1/2$. 

\textbf{Construct $\hyper'_{s}$:}
To use the  concentration of measure approach, we choose $\hyper'_{s}=\{x\in
\hyper_{s}: \lVert L-A[S]x \rVert_{1}\leq \Gamma\}$. 

\textbf{Analyze $|\hyper'_{s}|$: }
Now, using Theorem~\ref{thm:hoeffding:vector} and choosing
$\delta=\sqrt{\sum_{i=1}^{n}[\sum_{j=1}^{m} (u_{i}[j]-\ell_{i}[j])]^{2}}$, 
where $u_{i}=A_{i}$ and $\ell_{i}=-A_{i}$, 
we obtain 
\begin{align*}
&\Pr\left[\lVert Y \rVert_{1}\geq  1{.}12\cdot  \sum_{j=1}^{m}(\sum_{i=1}^{n}(\ell_{i}[j]-u_{i}[j])^{2})+ \sum_{i=1}^{n}[\sum_{j=1}^{m} (u_{i}[j]-\ell_{i}[j])]^{2}\right] \leq\\
& 2\exp(-2)\leq 0{.}28.
\end{align*}
Note that the expected value $\mu$ of $Y$ is the all-zero vector. 
As indicated above, this shows that $|\hyper'_{s}|\geq 0{.}72\cdot 2^{s}$.

\textbf{Analyze $|A\cdot \hyper'_{s}|$:}
Note that to upper bound the number of vectors $x\in \hyper_{s}$ with $\lVert L-A[S]x \rVert_{1}\leq
\Gamma$, we can count the integral vectors $y\in \mathbb{Z}^{m}$
 with
$\lVert y \rVert_{1}\leq \Gamma$. 
There are at most $\binom{\Gamma+m}{m}$ non-negative solutions
$x=(x_{1},\ldots,x_{m})$ to the inequality $\sum_{i=1}^{m}x_{i}\leq \Gamma$: Write
$\Gamma$ in unary and choose $m$ separating symbols. Hence the number of integral vectors $y$ with $\lVert y
\rVert_{1}\leq \Gamma$ is strictly less than 
$2^{m}\cdot \binom{\Gamma+m}{m}$, as vectors containing a zero are counted
multiple times. 
 As $\binom{m}{k}\leq \left( \frac{m\cdot e}{k}
\right)^{k}$, where $e$ is Euler's number, the
number of such vectors is strictly less than
\begin{align*}
  2^{m}\left( \frac{e (\Gamma+m)}{m} \right)^{m} =(2e\Gamma/m + 2e)^{m}.
\end{align*}

\textbf{Compare $|\hyper'_{s}|$ and $|a^{\top}\hyper'_{s}|$:}
If $|\hyper'_{s}| > |A[S]\cdot \hyper'_{s}|$,
we can again use Lemma~\ref{lem:eisenbrand:shmonin}. 

Hence, as $|A[S]\cdot \hyper'_{s}|\leq (2e \Gamma/m+2e)^{m}$ and
$|\hyper'_{s}|\geq 0{.}72\cdot 2^{s}$, for each $x\in \Ptop_{I}$, the inequality
$|\supp(x)| > m\cdot \log(2e\Gamma/m+2e)$ implies that $x$ is not a vertex. 
\end{proof}

To compare this bound with the known bounds in literature, let $\Delta=\lvert A
\rvert_{\infty}$. 
Now, for each row $a^{(j)}$ of $A[S]$, we have $\lVert a^{(j)} \rVert_{2}\leq
\sqrt{s}\Delta$. 
Hence, the sum of these norms is at most $m\sqrt{s}\Delta$. 
Furthermore, for each column $A_{i}$ of $A[S]$, we have $\lVert A_{i}
\rVert^{2}_{1}\leq m^{2}\Delta^{2}$. 
Summing up these norms and taking the square root thus gives
$\sqrt{\sum_{i=1}^{n}\lVert A_{i} \rVert^{2}_{1}}\leq m\sqrt{s}\Delta$.
Hence, $\Gamma\leq 2{.}12\cdot m\cdot \sqrt{s}\cdot \Delta$. 
We thus know that  $s > m\log(4{.}24\cdot e\sqrt{s}\Delta+2e)$ implies $|\hyper'_{s}|> |A\cdot
\hyper'_{s}|$. We thus have the following lemma:
\begin{lemma}
  \label{lem:general:supp_bound}
  For each vertex $v$ of the integer  hull $\Ptop_{I}$ with  $|\supp(v)|=s$,
  we have $s \leq m\log(4{.}24\cdot e\sqrt{s}\Delta+2e)$. 
\end{lemma}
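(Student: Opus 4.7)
The plan is to derive Lemma~\ref{lem:general:supp_bound} as a direct corollary of Theorem~\ref{thm:general:supp_bound}, by plugging in worst-case bounds on the quantity $\Gamma=1{.}12\sum_{j=1}^{m}\lVert a^{(j)} \rVert_{2}+ \sqrt{\sum_{i=1}^{s}\lVert A_{i} \rVert^{2}_{1}}$ in terms of $\Delta=\lVert A \rVert_{\infty}$. Since Theorem~\ref{thm:general:supp_bound} already gives the bound $s\leq m\cdot \log(2e\Gamma/m+2e)$, all that is needed is a clean upper bound on $\Gamma$, after which the lemma follows by substitution.

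First I would bound the row-norms: each row $a^{(j)}$ of the submatrix $A[\supp(v)]$ has $s$ entries of absolute value at most $\Delta$, so $\lVert a^{(j)} \rVert_{2}\leq \sqrt{s}\,\Delta$, and summing over the $m$ rows gives $\sum_{j=1}^{m}\lVert a^{(j)} \rVert_{2}\leq m\sqrt{s}\,\Delta$. Next I would bound the column norms: each column $A_{i}$ of $A[\supp(v)]$ has $m$ entries of absolute value at most $\Delta$, hence $\lVert A_{i} \rVert_{1}\leq m\Delta$, which yields $\lVert A_{i} \rVert_{1}^{2}\leq m^{2}\Delta^{2}$. Summing this over the $s$ columns and taking the square root gives $\sqrt{\sum_{i=1}^{s}\lVert A_{i} \rVert_{1}^{2}}\leq \sqrt{s\cdot m^{2}\Delta^{2}}=m\sqrt{s}\,\Delta$.

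Combining the two pieces produces $\Gamma\leq 1{.}12\cdot m\sqrt{s}\,\Delta + m\sqrt{s}\,\Delta = 2{.}12\cdot m\sqrt{s}\,\Delta$. Substituting into the bound from Theorem~\ref{thm:general:supp_bound} yields
\[
s \leq m\cdot \log\!\left(\tfrac{2e}{m}\cdot 2{.}12\,m\sqrt{s}\,\Delta + 2e\right) = m\cdot \log(4{.}24\cdot e\sqrt{s}\,\Delta+2e),
\]
which is exactly the desired inequality.

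There is essentially no genuine obstacle here: the proof is a routine substitution once Theorem~\ref{thm:general:supp_bound} is in hand. The only thing one needs to be careful about is using the correct submatrix $A[\supp(v)]$ when estimating the row and column norms, rather than the full matrix $A$, so that the dimension of the rows really is $s$ (this is what makes the $\sqrt{s}$ rather than $\sqrt{n}$ factor appear).
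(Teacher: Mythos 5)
Your proposal is correct and follows exactly the paper's own derivation: bound each row norm by $\sqrt{s}\,\Delta$ and each column $\ell_1$-norm by $m\Delta$, conclude $\Gamma\leq 2{.}12\cdot m\sqrt{s}\,\Delta$, and substitute into the bound of Theorem~\ref{thm:general:supp_bound}. No differences worth noting.
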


Using a similar approach to Eisenbrand and
Shmonin~\cite{eisenbrand2006caratheodory}, we can obtain a more useful bound (see Section~\ref{sec:inequality:warmup}).  
\begin{corollary}
  \label{cor:general:supp_bound}
  For each vertex $v$ of the integer  hull $\Ptop_{I}$ with  $|\supp(v)|=s$,
  we have $s \leq 2m\log(24 \sqrt{m} \Delta)$. 
\end{corollary}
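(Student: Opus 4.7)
The plan is to remove the self-reference in Lemma~\ref{lem:general:supp_bound}, which states $s \leq m\log(4{.}24e\sqrt{s}\Delta + 2e)$, and bootstrap it to an explicit bound, in the same spirit as the argument of Eisenbrand and Shmonin~\cite{eisenbrand2006caratheodory}. The key observation is that the right-hand side depends on $s$ only through $\tfrac{1}{2}\log s$, which grows much more slowly than the linear $s$ on the left, so after one substitution the inequality must force $s$ to be small.

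First I would simplify the right-hand side. Because $A$ is integral and $\Delta \geq 1$, and because $s \geq 1$, we have $\sqrt{s}\Delta \geq 1$, so $4{.}24e\sqrt{s}\Delta + 2e \leq (4{.}24e + 2e)\sqrt{s}\Delta \leq 17\sqrt{s}\Delta$. Substituting into Lemma~\ref{lem:general:supp_bound} and dividing by $m$ gives the cleaner inequality
\[
  \frac{s}{m} \leq \log(17\sqrt{m}\,\Delta) + \frac{1}{2}\log\!\left(\frac{s}{m}\right).
\]

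Next I would close the argument via a short case analysis, using the standard calculus fact $\log x \leq x/2$ for $x \geq 4$ (base-$2$ logarithm, with equality at $x=4$). If $s/m < 4$, then $s < 4m$, and one checks directly that $4m \leq 2m\log(24\sqrt{m}\,\Delta)$ since $24\sqrt{m}\,\Delta \geq 24 \geq 4$, giving the desired bound. Otherwise $s/m \geq 4$, so $\tfrac{1}{2}\log(s/m) \leq s/(4m)$; substituting into the displayed inequality and rearranging yields $s \leq (4/3)\,m\log(17\sqrt{m}\,\Delta)$, which is bounded by $2m\log(24\sqrt{m}\,\Delta)$ because $17 < 24$ and $4/3 < 2$.

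There is no real technical obstacle here beyond bookkeeping: the main care required is ensuring that the constants ($17 \to 24$ and $4/3 \to 2$) leave enough slack to absorb the case split cleanly, and that $\log x \leq x/2$ is invoked in the regime $s/m \geq 4$ where it is valid. Once Lemma~\ref{lem:general:supp_bound} is in hand, the rest is routine manipulation of a self-referential inequality.
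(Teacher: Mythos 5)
Your proof is correct and follows essentially the same route as the paper: the paper resolves the self-referential inequality of Lemma~\ref{lem:general:supp_bound} via the Eisenbrand--Shmonin-style bootstrapping of Lemma~\ref{lem:ineq:supp_bound} (instantiated with $\epsilon=1$ and $\Delta\geq 1$, giving the constant $8{.}48e<24$), while you carry out the same bootstrapping directly via a two-case split on $s/m$ and the bound $\log x\leq x/2$ for $x\geq 4$. Your constants check out ($6{.}24e<17$, and $(4/3)\log(17\sqrt{m}\Delta)\leq 2\log(24\sqrt{m}\Delta)$), so the argument is complete.
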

A more refined version of this bound and a corresponding proof for it is given
in Lemma~\ref{lem:ineq:supp_bound}.
Again, in Section~\ref{sec:inequality}, we provide an analysis to obtain tighter
bounds from inequalities such as those of Theorem~\ref{thm:general:supp_bound} depending
on the relation between $\Delta$ and $m$.

\paragraph*{A Lower Bound for General Polytopes}
The lower bound from the previous section for $m=1$ can easily be generalized to
arbitrary $m$.
By this, we improve upon the lower bound given by Aliev et
al.~\cite{DBLP:journals/siamjo/AlievLEOW18} which showed that a support of
at least $m\log(\Delta)^{1/(1+\epsilon)}$ is always possible for all $\epsilon > 0$.

Consider the polytope $\Ptop = \{x \mid Ax = b, x \geq 0 \}$ with right hand side vector $b^T = (2^d -1, \ldots , 2^{d}-1)$ and constraint matrix 
\begin{align*}
    A = \begin{pmatrix} 2^0 & \cdots & 2^{d-1} & 0 & & \cdots&  & & 0 \\
    0 & \cdots & 0 & 2^0 & \cdots & 2^{d-1} & 0 & \cdots &0 \\
    & \vdots & & & &  & \ddots \end{pmatrix},
\end{align*}
which contains $m$ copies of the vector $(2^0, \ldots 2^{d-1})$ on the diagonal
line. Let $a^{(1)},\ldots,a^{(m)}$ be the rows of this matrix and
$A_{1},\ldots,A_{n}$ be its columns. 
Clearly, $v^T = (1, \ldots , 1)$ is a vertex of $\Ptop_I$ as it is the unique
optimal integral solution that maximizes the objective vector $c^T = (2^0,
\ldots , 2^{d-1}, 2^0 , \ldots , 2^{d-1} , \ldots )$. 
The support of $v$ is obviously exactly $n=m\cdot d = m \log(\Delta) + m$.

  \begin{lemma}
    \label{lem:general:lower_bound}
    There is an integer hull $\Ptop_{I}$ with a vertex $v$ with 
    $|\supp(v)| \geq m\log(\Delta)+m$. 
    \end{lemma}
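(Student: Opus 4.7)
The plan is to use Lemma~\ref{lem:eisenbrand:shmonin} to verify that the candidate point $v = (1,\ldots,1) \in \ZZ^n$ with $n = md$ is indeed a vertex of $\Ptop_I$. Feasibility is immediate: each row of $A$ contains the entries $2^0,2^1,\ldots,2^{d-1}$ in disjoint positions, and summing these gives $2^d-1$, matching the corresponding component of $b$. Hence $v \in \Ptop \cap \ZZ^n$, its support is the whole index set $\{1,\ldots,n\}$, and $|\supp(v)| = n = md$. Since $\Delta = \norinf{A} = 2^{d-1}$, this is exactly $m\log(\Delta)+m$, as required.

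To certify that $v$ is a vertex via Lemma~\ref{lem:eisenbrand:shmonin}, I would check that the only integer $x \in \ZZ^n$ satisfying $Ax = 0$ together with $x_i \in \{-1,0,1\}$ for all $i \in \supp(v)$ is the zero vector. Because $A$ is block-diagonal with $m$ copies of the row $(2^0,\ldots,2^{d-1})$, the kernel system decouples into $m$ independent subproblems, so it suffices to treat a single block: show that any integer solution of $\sum_{i=0}^{d-1} 2^i x_i = 0$ with $x_i \in \{-1,0,1\}$ must be $x = 0$.

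This reduces to the classical strict-dominance argument for the binary number system. Suppose some component is nonzero and let $k$ be the largest index with $x_k \neq 0$. Then $|2^k x_k| = 2^k$, while $|\sum_{i<k} 2^i x_i| \leq \sum_{i=0}^{k-1} 2^i = 2^k - 1 < 2^k$, so the sum is nonzero, a contradiction. Consequently each block-kernel admits only the trivial solution, the full kernel system admits only the trivial solution, and Lemma~\ref{lem:eisenbrand:shmonin} yields that $v$ is a vertex of $\Ptop_I$. The only substantive step is the strict-dominance argument just sketched, but it is elementary; the remainder is simply unwinding the block structure and counting $n = md = m\log(\Delta)+m$, so no real obstacle appears.
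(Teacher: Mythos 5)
Your feasibility check and the count $|\supp(v)| = md = m\log(\Delta)+m$ are fine, but the certification step rests on the \emph{converse} of Lemma~\ref{lem:eisenbrand:shmonin}, which is not what that lemma says and is false in general. The lemma states: if $v$ is a vertex, then the restricted kernel system~\eqref{IP:kernel} has no non-zero solution. You invoke the reverse implication — that triviality of the $\{-1,0,1\}$-kernel implies $v$ is a vertex — and that fails. A point can be a convex combination of \emph{distant} integral points without admitting any $\pm 1$ perturbation inside the kernel. Concretely, take $m=1$, $a^{\top}=(2,3)$, $b=12$: the integral points are $(6,0)$, $(3,2)$, $(0,4)$, and $(3,2)$ is the midpoint of the other two, hence not a vertex, yet $2x_1+3x_2=0$ with $x_i\in\{-1,0,1\}$ forces $x=0$. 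So your strict-dominance argument on each block, while correct as a computation, certifies nothing about vertexhood.

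What is actually needed is a sufficient condition, and the paper supplies one: exhibit an objective for which $v$ is the \emph{unique} integral optimum (a unique optimizer cannot be a proper convex combination of other feasible integral points, since all of them would have strictly worse objective value). For this instance one can minimize $\mathbf{1}^{\top}x$ over each block: the minimum number of powers of two from $\{2^0,\dots,2^{d-1}\}$ summing (with repetition) to $2^d-1$ equals the binary digit sum $d$, attained only by the all-ones vector, so $v=(1,\dots,1)$ is the unique optimum and hence a vertex. Your kernel computation can be discarded; the uniqueness-of-optimum argument is the missing idea.
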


Now, consider the support bound of Theorem~\ref{thm:general:supp_bound} that
guarantees a solution with support at most
\begin{align*}
  m\cdot \log\left( 2e\left[ 1{.}12\sum_{j=1}^{m}\lVert
a^{(j)} \rVert_{2}+ \sqrt{\sum_{i=1}^{s}\lVert A_{i}
  \rVert^{2}_{1}}) \right]/m +2e \right). 
\end{align*}
As shown in the lower bound described in 
Section~\ref{sec:knapsack}, the $\ell_{2}$-norm of each row is at most
$2^{d}$
and thus $\sum_{j=1}^{m}\lVert a^{(j)} \rVert_{2}\leq m\cdot 2^{d}$. Furthermore, we
have $\sqrt{\sum_{i=1}^{n}\lVert A_{i} \rVert_{1}^{2}}=\sqrt{m\cdot
  (4^{d}-1)/3}\leq \sqrt{m}\cdot 2^{d}$. Hence,
Theorem~\ref{thm:general:supp_bound} guarantees a solution with support at most
\begin{align*}
  &m\cdot \log(2\cdot e\cdot (1{.}12\cdot m\cdot 2^{d}+\sqrt{m}2^{d})/m+2e)=\\
  &m\cdot \log(2\cdot e\cdot (1{.}12\cdot 2^{d}+2^{d}/\sqrt{m})+2e)\leq\\
  &m\cdot \log(6e2^{d})=m\cdot d\cdot \log(2\sqrt[d]{6e}).
\end{align*}
This implies that the bound of Theorem~\ref{thm:general:supp_bound} is
asymptotically optimal for sufficiently large values of $d$, as $\lim_{d\to
  \infty}\log(2\sqrt[d]{6e})=\log(2)=1$.

\subsection{Bounding and Enumerating the Integer Vertices}
By adapting the techniques from Section~\ref{sec:kp:number} and
Section~\ref{sec:kp:enumerating}, we can now apply the support bound of
Theorem~\ref{thm:general:supp_bound} in a generalized setting.
As above, the proximity result of Eisenbrand and
Weismantel~\cite{DBLP:journals/talg/EisenbrandW20} implies that for each vertex
$v\in \Ptop_{I}$, there is an optimal basic feasible solution $y$ of the relaxed
program such that
  $\lVert v-y \rVert_{1}\leq m(2m\Delta+1)^{m}$. 
The box approach of Hayes and Larman~\cite{hayes1983vertices} can be simply
transferred to the $m$-dimensional setting (see
e.\,g.~\cite{schrijver1998theory,DBLP:journals/combinatorica/CookHKM92}). 
Using Theorem~\ref{thm:general:supp_bound}, we can discard all boxes with support
larger than $2m\log(O(\sqrt{m}\Delta))$. 
Hence, for each optimal basic feasible solution $y$, we have at most
\begin{align*}
  2m\log(O(\sqrt{m}\Delta))\cdot n^{2m\log(O(\sqrt{m}\Delta))}\cdot (\log([m(2m\Delta+1)]^{m}))^{2m\log(O(\sqrt{m}\Delta))}
\end{align*}
boxes that each might contain at most one vertex.
As the number of optimal basic feasible solutions $y$ is at most $\binom{n}{m}\leq n^{m}$, the
number of vertices of $\Ptop_{I}$ is at most
\begin{align*}
  n^{m}\cdot 2m\log(O(\sqrt{m}\Delta))\cdot n^{2m\log(O(\sqrt{m}\Delta))}\cdot (\log([m(2m\Delta+1)]^{m}))^{2m\log(O(\sqrt{m}\Delta))}.
\end{align*}
\begin{theorem}
  \label{thm:general:vertices}
  The number of vertices of the integer hull $\Ptop_{I}$ of a general polytope $\Ptop$ is
  at most
  $(n\cdot m \cdot \log(m\Delta))^{O(m\log(\sqrt{m}\Delta))}$. 
\end{theorem}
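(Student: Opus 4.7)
The plan is to adapt the box-covering argument from Section~\ref{sec:kp:number} to the $m$-row setting, replacing the knapsack support bound by Theorem~\ref{thm:general:supp_bound} and the scalar proximity bound $2\Delta+1$ by its $m$-dimensional analogue. Concretely, I would first invoke the proximity result of Eisenbrand and Weismantel~\cite{DBLP:journals/talg/EisenbrandW20}, which guarantees that every integer vertex $v$ of $\Ptop_I$ lies within $\ell_1$-distance $R := m(2m\Delta+1)^m$ of some optimal basic feasible solution $y$ of the LP relaxation of $\Ptop$. Since the LP relaxation has at most $\binom{n}{m}\leq n^{m}$ basic feasible solutions, it suffices to bound, for each $y$, the number of integer vertices in the proximity ball $C_y = \{z\in \Ptop_I \mid \lVert y-z\rVert_{1}\leq R\}$, and then multiply by $n^{m}$.

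Next, I would cover $C_y$ by the same exponentially growing boxes $B_{\vec k}$ introduced in Section~\ref{sec:kp:number}, now with the box depth $d := \lceil \log(8R)\rceil = O(m\log(m\Delta))$ so that $C_y\subseteq \bigcup_{B\in \mathcal{B}} B$. The crucial structural fact is the $m$-row version of Lemma~\ref{lem:kp:boxes}: if a box $B_{\vec k}$ contained both a vertex $v$ and another integer point $p$ of $\Ptop_I$, then $q := 2v-p$ is integral with $Aq = 2Av - Ap = 2b - b = b$; the interval definition forces $v[i] = p[i]$ whenever $k[i] = 0$ and $v[i] > p[i]/2$ whenever $k[i] > 0$, so $q\geq 0$; therefore $q\in \Ptop_I$ and $v = (p+q)/2$ would contradict $v$ being a vertex. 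Hence each box contains at most one integer vertex. This transfer is essentially mechanical because neither $Aq = b$ nor $q\geq 0$ is sensitive to the number of rows.

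Finally, I would apply Theorem~\ref{thm:general:supp_bound} to prune: no box $B_{\vec k}$ with $|\supp(\vec k)| > \ell := 2m\log(O(\sqrt m\Delta))$ can contain a vertex. The number of admissible index vectors is then
\begin{align*}
  \sum_{j=0}^{\ell}\binom{n}{j} d^{\,j}\;\leq\; \ell\,(n\cdot d)^{\ell}\;=\;\bigl(n\cdot m\cdot \log(m\Delta)\bigr)^{O(m\log(\sqrt m\Delta))}.
\end{align*}
Multiplying by the $n^{m}$ LP basic feasible solutions and absorbing this factor into the exponent (which is valid since $\ell \geq m$) yields the stated bound.

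The main obstacle is not the counting but checking that the generalized box lemma is genuinely tight enough: one must ensure the proximity radius $R$ is used \emph{only} inside a logarithm, so that blowing it up to $m(2m\Delta+1)^{m}$ costs merely an $O(m\log(m\Delta))$ factor in $d$, rather than appearing as a polynomial base. Once $d$ is confirmed to be $O(m\log(m\Delta))$, the remainder is a routine reorganization of the knapsack bookkeeping from Section~\ref{sec:kp:number}.
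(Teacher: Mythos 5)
Your proposal is correct and follows essentially the same route as the paper: the Eisenbrand--Weismantel proximity bound $m(2m\Delta+1)^{m}$, the Hayes--Larman exponential box decomposition with the two-integer-points-per-box argument $q=2v-p$, pruning via the support bound of Theorem~\ref{thm:general:supp_bound}, and multiplying by the $n^{m}$ basic feasible solutions. Your explicit observation that the proximity radius enters only through $d=O(m\log(m\Delta))$ inside the logarithm is exactly the point the paper relies on implicitly when it states the box approach ``can be simply transferred.''
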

Note that the best known bound before was $(n\cdot m\cdot
\Delta)^{O(m^{2}\log(\sqrt{m}\Delta))}$ due to Aliev et
al.~\cite{DBLP:journals/siamjo/AlievLEOW18}. 

Using the exact same algorithm described in Section~\ref{sec:kp:enumerating}, we
can also enumerate the vertices in the same running time.
\begin{theorem}
  All integer vertices of the integer hull $\Ptop_{I}$ can be enumerated in time
  $(n\cdot m \cdot \log(m\Delta))^{O(m\log(\sqrt{m}\Delta))}$. 
\end{theorem}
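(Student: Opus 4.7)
The plan is to mirror the enumeration algorithm from the proof of Theorem~\ref{thm:knapsack:number}, replacing the one-dimensional box decomposition of Section~\ref{sec:kp:number} by its $m$-dimensional version, which has already been invoked in the preceding counting argument for Theorem~\ref{thm:general:vertices}. Concretely, I would first compute all optimal basic feasible solutions $y$ of the LP relaxation (there are at most $\binom{n}{m}\leq n^{m}$ of them), and, appealing to the proximity bound $\lVert v-y\rVert_{1}\leq m(2m\Delta+1)^{m}$ of Eisenbrand and Weismantel, cover each surrounding box $C_{y}$ by the exponentially growing boxes $B_{\vec k}$ already defined in Section~\ref{sec:kp:number}, generalised coordinate-wise to $\mathbb{R}^{n}$.

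Next I would establish the direct analogue of Lemma~\ref{lem:kp:boxes}: if a box $B_{\vec k}$ contains a vertex $v$ of $\Ptop_{I}$ and a second integer point $p\in\Ptop_{I}$, then $q=2v-p$ still satisfies $Aq=2Av-Ap=2b-b=b$ and, by the exponential structure of the intervals $I_{j}$, we have $q\geq 0$. So $q\in\Ptop_{I}$ and $v=(p+q)/2$ contradicts $v$ being a vertex. Combining this with Theorem~\ref{thm:general:supp_bound}, it suffices to iterate over boxes $B_{\vec k}$ with $|\supp(\vec k)|\leq 2m\log(O(\sqrt{m}\Delta))$; the count of such boxes per $C_{y}$ is the one already bounded before Theorem~\ref{thm:general:vertices}, so the total number of candidate boxes is $(n\cdot m\cdot \log(m\Delta))^{O(m\log(\sqrt{m}\Delta))}$.

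For each candidate box $B_{\vec k}$, I would invoke the algorithm of Lenstra and Kannan on $\{Ax=b,\;x\in B_{\vec k}\cap\ZZgeq^{n}\}$ to test feasibility. If an integer point $x^{*}$ exists, I would rerun Lenstra--Kannan $2|\supp(\vec k)|$ further times, each time adding a single inequality $x_{i}\leq x_{i}^{*}-1$ or $x_{i}\geq x_{i}^{*}+1$ for some $i\in\supp(\vec k)$, to test for a second integer point; because only the support-coordinates carry any freedom, the effective dimension of every call is at most $|\supp(\vec k)|$, so each call runs in time $|\supp(\vec k)|^{O(|\supp(\vec k)|)}\cdot \log(m\Delta)^{O(1)}\leq (m\log(m\Delta))^{O(m\log(\sqrt{m}\Delta))}$. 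Boxes that pass (i.e.\ contain exactly one integer point) contribute their point to a set $W$, which by Theorem~\ref{thm:general:vertices} has size at most $(n\cdot m\cdot \log(m\Delta))^{O(m\log(\sqrt{m}\Delta))}$ and contains every vertex of $\Ptop_{I}$. Finally, for each $w\in W$ I would test $w\notin \Conv(W\setminus\{w\})$ by the same ellipsoid-based LP as in Theorem~\ref{thm:knapsack:number}, which runs in time polynomial in $|W|$ and $\log(m\Delta)$.

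Multiplying the number of candidate boxes by the per-box cost, and adding the final polynomial-size LPs, gives the claimed running time $(n\cdot m\cdot \log(m\Delta))^{O(m\log(\sqrt{m}\Delta))}$. The one point requiring care is verifying that the analogue of Lemma~\ref{lem:kp:boxes} really survives the passage from a single row to an $m$-row matrix; everything else is routine bookkeeping on top of Theorems~\ref{thm:general:supp_bound} and~\ref{thm:general:vertices}. This step is not actually hard because the coordinate-wise box structure immediately forces $q[i]\geq 0$ on every coordinate $i$ with $k[i]>0$ and gives $q[i]=v[i]=p[i]$ on coordinates with $k[i]=0$, so the combinatorial obstacle dissolves and the algorithm's analysis reduces to counting candidate boxes and summing per-box Lenstra--Kannan costs.
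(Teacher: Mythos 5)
Your proposal is correct and follows exactly the route the paper intends: the paper's own proof of this theorem is just the remark that one runs the knapsack enumeration algorithm of Section~\ref{sec:kp:enumerating} on the $m$-dimensional box decomposition already set up for Theorem~\ref{thm:general:vertices}, and your write-up fills in precisely those details (the $m$-row analogue of Lemma~\ref{lem:kp:boxes} via $q=2v-p$, the restriction to boxes of support $O(m\log(\sqrt{m}\Delta))$, the Lenstra--Kannan calls in effective dimension $|\supp(\vec k)|$, and the final convex-hull membership LPs). No gaps; the argument and the running-time bookkeeping both match the paper.
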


\section{Concentration of Measure for vectors}
\label{sec:measure}
In the following, let $Y_{1},\ldots,Y_{n}$ be independent random variables that
each output a $m$-dimensional vector with $\Pr[Y_{i}[j]\in
[\, \ell_{i}[j],u_{i}[j]\, ]]=1$, i.\,e., the $j$th entry in the vector produced
by the $i$th random variable is at least $\ell_{i}[j]$ and at most
$u_{i}[j]$. 
Let $Y=\sum_{i}Y_{i}$ be the sum of these random variables and $\mu := \Exp[Y]$
be its expected value. Ideally, one wants to show that the maximal derivation
$\lVert Y-\mu \rVert_{\infty}$ is also bounded, but for growing $m$, this
probability shrinks very fast. We thus concentrate on the total sum of the derivations
$\lVert Y-\mu \rVert_{1}$. 
We  show that the random variable $\lVert Y-\mu \rVert_{1}$
has the  concentration of measure effect with regard to the number of variables
$d$ which follows from the following
theorem.

\hoeffdingvector*

Note the difference between the summation orders here. If
$\ell_{i}[j]-u_{i}[j]\leq \Delta$, we can simplify this bound to
\begin{align*}
&\Pr[\lVert Y-\mu \rVert_{1} \geq   1{.}12\cdot  m\cdot \sqrt{n}\cdot \Delta) + \delta]\leq 2\exp(-(2\delta^2)/(nm^{2}\Delta^{2})).
\end{align*}

To prove this theorem, we need a variation on the Azuma-Hoeffding inequality,
often called McDiarmid's inequality. 
\begin{theorem}[Theorem 13.7 in \cite{mitzenmacher2017probability}]
  \label{thm:mcdiarmid}
Let $X_1,X_2,\ldots,X_n$ be independent random variables with range $\Omega$ and
$f$ be any function such that for each $i=1,\ldots, n$, there is some value
$c_{i}$ with 
\begin{align*}
|f(x_1,\ldots,x_i,\ldots,x_n)-f(x_1,\ldots,x_{i-1},\hat{x_i},x_{i+1},\ldots,x_n)|\leq c_{i}
\end{align*}
for all $x_1,\ldots,x_n,\hat{x_i}\in \Omega$. Then
\begin{align*}
\Pr[|f(X_1,\ldots,X_n)-\Exp[f(X_1,\ldots,X_n)]|\geq \delta]\leq 2\exp(-(2\delta^2)/\sum_{i=1}^{n}c_{i}^{2}). 
\end{align*}
\end{theorem}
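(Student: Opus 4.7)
The plan is to prove McDiarmid's inequality by reducing it to the Azuma--Hoeffding martingale concentration bound via the \emph{Doob martingale} of $f(X_{1},\ldots,X_{n})$. Concretely, set $\mathcal{F}_{i}=\sigma(X_{1},\ldots,X_{i})$ (with $\mathcal{F}_{0}$ trivial) and define $Z_{i}=\Exp[f(X_{1},\ldots,X_{n})\mid \mathcal{F}_{i}]$. By the tower property $(Z_{i})_{i=0}^{n}$ is a martingale with $Z_{0}=\Exp[f(X_{1},\ldots,X_{n})]$ and $Z_{n}=f(X_{1},\ldots,X_{n})$, so it suffices to bound $\Pr[|Z_{n}-Z_{0}|\geq \delta]$ for the telescoping sum $Z_{n}-Z_{0}=\sum_{i=1}^{n}(Z_{i}-Z_{i-1})$.

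The heart of the argument is to show that each martingale difference $D_{i}=Z_{i}-Z_{i-1}$ almost surely lies in an interval of width at most $c_{i}$ (conditionally on $\mathcal{F}_{i-1}$). Here independence of the $X_{j}$ enters decisively: it lets us rewrite
\[
Z_{i}=g_{i}(X_{1},\ldots,X_{i}),\qquad Z_{i-1}=\Exp_{X_{i}'}\bigl[g_{i}(X_{1},\ldots,X_{i-1},X_{i}')\bigr],
\]
where $g_{i}(x_{1},\ldots,x_{i})=\Exp[f(x_{1},\ldots,x_{i},X_{i+1},\ldots,X_{n})]$ and $X_{i}'$ is an independent copy of $X_{i}$. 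The bounded-differences hypothesis on $f$ transfers to $g_{i}$ in its $i$th coordinate: $|g_{i}(x_{1},\ldots,x_{i-1},x_{i})-g_{i}(x_{1},\ldots,x_{i-1},\hat{x}_{i})|\leq c_{i}$ for all valid arguments. Hence, as $X_{i}$ ranges over $\Omega$, $g_{i}(X_{1},\ldots,X_{i-1},X_{i})$ stays inside an interval of length $c_{i}$, and so does $D_{i}$, which is the centering of this expression by its conditional mean.

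With the bounded differences established, I would invoke the standard Azuma--Hoeffding machinery. Apply Hoeffding's lemma conditionally to obtain $\Exp[\exp(\lambda D_{i})\mid \mathcal{F}_{i-1}]\leq \exp(\lambda^{2}c_{i}^{2}/8)$ (valid because $D_{i}$ has conditional mean zero and conditional range $c_{i}$), then iterate with the tower property to get $\Exp[\exp(\lambda(Z_{n}-Z_{0}))]\leq \exp\bigl(\lambda^{2}\sum_{i=1}^{n}c_{i}^{2}/8\bigr)$. A Chernoff/Markov step optimized at $\lambda=4\delta/\sum_{i}c_{i}^{2}$ yields $\Pr[Z_{n}-Z_{0}\geq \delta]\leq \exp(-2\delta^{2}/\sum_{i}c_{i}^{2})$. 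The symmetric bound for the lower tail, obtained by applying the same argument to $-f$, contributes the factor of $2$ and completes the claimed inequality.

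The main obstacle is the second step: lifting the pointwise bounded-differences property of $f$ to a pathwise bound on the Doob increments $D_{i}$. Without independence, conditioning on $\mathcal{F}_{i-1}$ could correlate $X_{i}$ with the still-unrevealed variables $X_{i+1},\ldots,X_{n}$, so the oscillation of $Z_{i}$ in $X_{i}$ need not be controlled by $c_{i}$ alone. Independence is precisely what allows one to pull the conditional expectation under the function $g_{i}$ and reduce the analysis to a single-coordinate oscillation of $f$; once this reduction is in place, the remainder of the proof is classical scalar Hoeffding-type concentration applied coordinate by coordinate along the martingale.
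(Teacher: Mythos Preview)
Your proof is correct and follows the standard Doob-martingale-plus-Azuma--Hoeffding route that underlies McDiarmid's inequality. Note, however, that the paper does not prove this theorem at all: it is quoted as Theorem~13.7 from Mitzenmacher and Upfal's textbook and used as a black box, so there is no ``paper's own proof'' to compare against. Your argument is essentially the textbook proof one would find in that reference.
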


\begin{proof}[Proof of Theorem~\ref{thm:hoeffding:vector}]
We prove this theorem in two steps: First, we analyze the expected value $\Exp[\lVert Y-\mu \rVert_{1}]$ of $Y$ and second,
we show that $\lVert Y \rVert_{1}$ does not deviate much from its expected
value. Combining these two statements then gives the desired bound.

\subparagraph*{Bounding $\Exp[\lVert Y-\mu \rVert_{1}]$:}
We will first show that a single coordinate $j=1,\ldots,m$ does not deviate much
from its expected value.
Let $Y_i[j]$, $Y[j]$, and $\mu[j]$ be the corresponding projections on
the $j$-th coordinate of $Y_{i}$ (resp.~$Y$ or $\mu$). 
Clearly, we have $\Exp[Y[j]]=\sum_{i}\Exp[Y_{i}[j]]$, but we are
interested in the term $\Exp[\lVert Y[j]\rVert_{1}]=\Exp[|Y[j]|]$ which
might be much more complicated.
We will again use the \emph{Hoeffding bound} of Theorem~\ref{thm:hoeffding} to
bound this term. 

As the $Y_{i}[j]$ are independent for $i=1,\ldots,n$, we conclude
\begin{align*}
  \Pr[|Y[j] - \mu[j]| \geq \delta] \leq 2\exp(-(2\delta^{2})/\sum_{i=1}^{n}(u_{i}[j]-\ell_{i}[j])^{2}.
\end{align*}

The random variable $|Y[j]|$ thus has strong tail bounds which we will use in
the following to bound $\Exp[|Y[j]-\mu[j]|]$:
\begin{restatable}{claim}{boundedexp}
Let $X$ be a non-negative random variable such that for all $\delta > 0$, we have
\begin{align*}
\tag{$\ast$}
\Pr[X\geq \delta]\leq 2\exp(-(2\delta^{2})/b)
\end{align*}
for some $b > 0$. 
Then
\begin{align*}
\Exp[X] \leq 1{.}12\sqrt{b}.
\end{align*}
\end{restatable}

Due to a lack of space, the proof of this claim can be found in the appendix in
Section~\ref{sec:proofs}. 

As $\Pr[|Y[j] - \mu[j]| \geq \delta] \leq 2\exp(-(2\delta^{2})/\sum_{i=1}^{n}(\ell_{i}[j]-u_{i}[j])^{2})$, we have
$\Exp[|Y[j]-\mu[j]|]\leq 1{.}12\cdot \sqrt{\sum_{i=1}^{n}(\ell_{i}[j]-u_{i}[j])^{2}}$. 
Using the linearity of the expected value, we can bound the $1$-norm of the
complete vector $Y -\mu$: 
\begin{align*}
\Exp[\lVert Y -\mu \rVert_{1}] = \Exp[\sum_j |Y[j] - \mu[j]|]=\sum_j \Exp[|Y[j]-\mu[j]|]\leq 1{.}12\cdot  \sum_{j=1}^{m}\sqrt{\sum_{i=1}^{n}(\ell_{i}[j]-u_{i}[j])^{2}}. 
\end{align*}

\subparagraph*{Deviation from the expected value:}
We will now use Theorem~\ref{thm:mcdiarmid}, McDiarmid's inequality, to show that $\lVert Y \rVert_{1}$ will not likely deviate from its expected value.

Using the function $f(Y_1,\ldots,Y_n)=\lVert \sum_{i} Y_{i}
-\mu\rVert_{1}=\lVert Y -\mu \rVert_{1}$, we have
\begin{align*}
|f(x_1,\ldots,x_i,\ldots,x_n)-f(x_1,\ldots,x_{i-1},\hat{x_i},x_{i+1},\ldots,x_n)|\leq \sum_{j=1}^{m} (u_{i}[j]-\ell_{i}[j]). 
\end{align*}

We can thus conclude that
\begin{align*}
  \Pr[|\lVert Y -\mu \rVert -\Exp[\lVert Y -\mu \rVert]|\geq \delta]\leq 2\exp(-(2\delta^{2})/\sum_{i=1}^{n}[\sum_{j=1}^{m} (u_{i}[j]-\ell_{i}[j])]^{2}). 
\end{align*}

Furthermore, we have $\Exp[f(Y_1,\ldots,Y_n)]=\Exp[\lVert Y -\mu \rVert_{1}]\leq
1{.}12\cdot  \sum_{j=1}^{m}\sqrt{\sum_{i=1}^{n}(\ell_{i}[j]-u_{i}[j])^{2}}$
and thus
\begin{align*}
&\Pr\left[\lVert Y-\mu \rVert_{1} \geq   1{.}12\cdot  \sum_{j=1}^{m}\sqrt{\sum_{i=1}^{n}(\ell_{i}[j]-u_{i}[j])^{2}} + \delta\right]\leq\\ &2\exp(-(2\delta^2)/\sum_{i=1}^{n}[\sum_{j=1}^{m} (u_{i}[j]-\ell_{i}[j])]^{2}).\qedhere
\end{align*}

\end{proof}

\bibliographystyle{abbrv}
\bibliography{lib}

\newpage

\section*{Appendix}
\appendix
\section{Missing proofs}
\label{sec:proofs}

\boundedexp*

\begin{proof}
As $X$ is non-negative, we have for every $a\geq 0$ that
\begin{align*}
  \Exp[X] &= \int_{0}^{\infty}\Pr[X\geq t]\mathop{\mathrm{d}t}  = \int_{0}^{a}\Pr[X\geq t]\mathop{\mathrm{d}t} + \int_{a}^{\infty}\Pr[X\geq t]\mathop{\mathrm{d}t}\\
  &\leq a+\int_{a}^{\infty}\Pr[X\geq t]\mathop{\mathrm{d}t}. 
\end{align*}
Assumption ($\ast$) gives us that $\Pr[X\geq t]\leq 2\exp(-(2t^{2})/b)$. Hence
\begin{align*}
a+\int_{a}^{\infty}\Pr[X\geq t]\mathop{\mathrm{d}t}\leq
a+\int_{a}^{\infty}2\exp(-(2t^{2})/b)\mathop{\mathrm{d}t}. 
\end{align*}
As $\lim_{t\to \infty}2\exp(-(2t^{2})/b) = 0$, we have
\begin{align*}
&a+\int_{a}^{\infty}2\exp(-(2\delta^{2})/b)\mathop{\mathrm{d}t} = 
a+2\frac{b}{2a}\cdot\exp(- (2a^{2})/b) =\\
&a+\frac{b}{a}\cdot\exp(- (2a^{2})/b).
\end{align*}
Choosing $a=\alpha\cdot \sqrt{b}$ thus gives
\begin{align*}
  a+\frac{b}{a}\cdot\exp(- (2a^{2})/b) = \sqrt{b}\cdot (\alpha+\exp(-2\alpha^{2})/\alpha). 
\end{align*}
For $\alpha=0{.}9$, we obtain $\alpha+\exp(-2\alpha^{2})/\alpha \leq 1{.}12$ 
and hence $\Exp[X]\leq 1{.}12\sqrt{b}$. 
\end{proof}

\section{Handling General Polytopes with Structures}
\label{sec:structure}

While many integer programs do have a matrix with bounded coefficients, programs
arising from many applications also have a strong structure in addition. 
This structure is usually not well-captured in the support bounds. 
In the following, we show how such a structure can be used to give better bounds
on the support. 
As before, we want to compare $\hyper_{n}$ and $A\cdot \hyper_{n}$. 
Above, we used the  concentration of measure effect to show that a certain
subset $\hyper'_{n}$ exists such that $\hyper'_{n}$ has about the same size as
$\hyper_{n}$, but $|A\cdot \hyper'_{n}| \ll |A\cdot \hyper_{n}|$.
Now, we want to use structural information about the matrix $A$ to show that
$|A\cdot \hyper_{n}|$ cannot be arbitrarily large.

For instance, suppose that the matrix $A$ does contain two identical rows
$a^{(i)}$ and $a^{(j)}$. 
Clearly, we can simply remove $a^{(j)}$ from $A$ and thus reduce the support
bound.
Now, consider the situation, where $a^{(j)}\neq a^{(i)}$, but they are close in
some sense. 
Intuitively, $a^{(i)}x$ and $a^{(j)}x$ should then also be close and thus reduce
the number possible values of $A\cdot \hyper_{n}$.

Fix some matrix $A$ with rows $a^{(1)},\ldots,a^{(m)}$ and consider $B=A\cdot
\hyper_{n}$. For a vector $\beta\in \mathbb{Z}^{m'}$ with $m' < m$, let $S_{A,\beta}$
be the set of different numbers that occur at position $m'+1$ in some element of
$B$, i.\,e.
\begin{align*}
  S_{A,\beta}:=\{b[m'+1]\ \colon\ b\in B, b[i]=\beta[i]\ \forall i=1,\ldots,m'\}. 
\end{align*}
Furthermore, let $n_{A,m'+1}=\max_{\beta\in \mathbb{Z}^{m'}}\{|S_{A,\beta}|\}$ be
the size of a largest such set. 
Furthermore, define $n_{A,1} := |\{b[1]\ \colon\ b\in B\}$ as the possible
values occurring on the first position of an element in~$B$.
For each $b\in B$, there is an $x\in \hyper_{n}$ with $a^{(1)}x=b[1]$. 
The total number of different values of $b[1]$ is thus at most $\lVert a^{(1)}
\rVert_{1}+1$, as $x\in \hyper_{n}$. Hence, $n_{A,1}\leq \lVert a^{(1)}
\rVert_{1}+1$.
For the remaining values $n_{A,m'+1}$, we define for $k=2,\ldots,m$ and for a
subset $\mathbb{X}\subseteq \mathbb{R}$ the value 
$\operatorname{dist}_{1,\mathbb{X}}(a^{(k)};(a^{(1)},\ldots,a^{(k-1)}))$
as the $\ell_{1}$-distance between $a^{(k)}$ and
the closes vector that can be constructed by a linear combination of
$a^{(1)},\ldots,a^{(k-1)}$ with coefficients from $\mathbb{X}$, i.\,e.
\begin{align*}
 \operatorname{dist}_{1,\mathbb{X}}(a^{(k)};(a^{(1)},\ldots,a^{(k-1)}))=\min_{\lambda_{1},\ldots,\lambda_{k-1}\in \mathbb{X}}\{\lVert a^{(k)}-\sum_{i=1}^{k-1}\lambda_{i}a^{(i)} \rVert_{1}\}. 
\end{align*}
We will now see that $n_{A,m'+1}$ can be bounded by this distance for any subset
$\mathbb{X}\subseteq \mathbb{R}$. 

\begin{lemma}
  \label{lem:structure:general_bound}
  For all $m'=1,\ldots,m-1$, we have
  \begin{align*}
  n_{A,m'+1}\leq \lceil
  \operatorname{dist}_{1,\mathbb{X}}(a^{(m'+1)};(a^{(1)},\ldots,a^{(m')}))
  \rceil+1.  
  \end{align*}
  
\end{lemma}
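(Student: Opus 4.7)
The plan is to reduce the bound on $n_{A,m'+1}$ to a Hölder-type estimate. Fix some $\beta\in\mathbb{Z}^{m'}$, and consider any two vectors $x, x'\in\hyper_n$ with $a^{(i)}x = a^{(i)}x' = \beta[i]$ for all $i=1,\ldots,m'$. Then $a^{(i)}(x-x')=0$ for $i\le m'$, so for every choice of coefficients $\lambda_1,\ldots,\lambda_{m'}\in\mathbb{X}$ we have the crucial identity
\begin{align*}
  a^{(m'+1)}(x-x') \;=\; \Bigl(a^{(m'+1)} - \sum_{i=1}^{m'}\lambda_i a^{(i)}\Bigr)(x-x').
\end{align*}

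The second step is to apply Hölder's inequality. Since $x,x'\in\{0,1\}^n$, we have $\lVert x-x'\rVert_{\infty}\le 1$, hence
\begin{align*}
  \bigl|a^{(m'+1)}(x-x')\bigr| \;\le\; \Bigl\lVert a^{(m'+1)} - \sum_{i=1}^{m'}\lambda_i a^{(i)}\Bigr\rVert_{1}.
\end{align*}
Minimizing the right-hand side over $\lambda_1,\ldots,\lambda_{m'}\in\mathbb{X}$ gives
\begin{align*}
  \bigl|a^{(m'+1)}x - a^{(m'+1)}x'\bigr| \;\le\; \operatorname{dist}_{1,\mathbb{X}}\bigl(a^{(m'+1)};(a^{(1)},\ldots,a^{(m')})\bigr) \;=:\; D.
\end{align*}

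The final step is a counting argument. The set $S_{A,\beta}$ consists of the integer values $a^{(m'+1)}x$ as $x$ ranges over hypercube vectors satisfying the fixed constraints $a^{(i)}x=\beta[i]$ for $i\le m'$. The previous inequality shows that any two elements of $S_{A,\beta}$ differ by at most $D$, so $S_{A,\beta}$ is contained in an interval of length $D$. Since $A$ is integral, every element of $S_{A,\beta}$ lies in $\mathbb{Z}$, and an integer interval of length $D$ contains at most $\lfloor D\rfloor+1\le\lceil D\rceil+1$ integers. Taking the maximum over $\beta\in\mathbb{Z}^{m'}$ yields $n_{A,m'+1}\le\lceil D\rceil+1$, as claimed.

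There is no real obstacle — the proof is essentially Hölder plus a counting argument — but the one subtle point is the freedom to choose the $\lambda_i$'s from an arbitrary subset $\mathbb{X}\subseteq\mathbb{R}$. That freedom is exploited exactly because the $\lambda_i$'s only need to cancel the contribution along the direction $x-x'$, not recover $a^{(m'+1)}$ globally, so no integrality or rationality condition on $\mathbb{X}$ is needed in the argument.
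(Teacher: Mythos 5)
Your proof is correct and follows essentially the same route as the paper: both exploit that on the fibre $\{x\in\hyper_n : a^{(i)}x=\beta[i],\ i\le m'\}$ one may subtract any combination $\sum_i\lambda_i a^{(i)}$ without changing the relevant values, and then bound by the $\ell_1$-norm of the residual. Your pairwise-difference formulation (bounding $|s-s'|\le D$ via H\"older on $x-x'$) is in fact a slightly cleaner way to get the interval of length $D$ than the paper's anchored inequality $|v-s|\le D$, which read literally only confines $s$ to an interval of length $2D$.
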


\begin{proof}
  Fix some $m'$ and let $\beta\in \mathbb{Z}^{m'}$ with
  $n_{A,m'+1}=|S_{A,\beta}|$. Hence, there is a non-empty subset
  $\hyper'_{n}\subseteq \hyper_{n}$ with $a^{(i)}x=\beta[i]$ for all
  $i=1,\ldots,m'$ and all $x\in \hyper'_{n}$.
    Let $\lambda_{1}\ldots,\lambda_{m'}\in \mathbb{X}$ be coefficients that attain the
  minimal $\ell_{1}$-distance, i.\,e.
\begin{align*}
  \lVert \sum_{i=1}^{m'}\lambda_{i} a^{(i)}  -a^{(m'+1)} \rVert_{1} = \operatorname{dist}_{1,\mathbb{X}}(a^{(m'+1)},;(a^{(1)},\ldots,a^{(m')})). 
\end{align*}
Now, for all $x\in \hyper'_{n}$, we have
$\sum_{i=1}^{m'}\lambda_{i}a^{(i)}x  =
\sum_{i=1}^{m'}\lambda_{i}\beta[i]$. Denote this value by $v$.
For all $s\in S_{A,\beta}$, we thus have
$(\sum_{i=1}^{m'}\lambda_{i}a^{(i)}-a^{m'+1})x=v-a^{m'+1}x=v-s$ for some $x\in
\hyper'_{n}$. As $x\in \hyper_{n}$, we can conclude that
\begin{align*}
  |v-s|\leq \lVert \sum_{i=1}^{m'}\lambda_{i}a^{(i)}-a^{m'+1} \rVert_{1}. 
\end{align*}
Hence, $s$ can take at most
\begin{align*}
\sum_{i=1}^{m'}\lambda_{i}a^{(i)}-a^{m'+1} \rVert_{1} \rceil+1 = \lceil
\operatorname{dist}_{1,\mathbb{X}}(a^{(m'+1)};(a^{(1)},\ldots,a^{(m')}))
\rceil+1
\end{align*}
 possible values and thus $n_{A,m'+1} = |S_{A,\beta}|\leq \lceil
\operatorname{dist}_{1,\mathbb{X}}(a^{(m'+1)};(a^{(1)},\ldots,a^{(m')}))
\rceil+1$. 
\end{proof}

Furthermore, note that we can insert the rows in an arbitrary order and that
$\mathbb{X}=\mathbb{R}$ gives us the smallest distances. 
As $|A\cdot \hyper_{n}|\leq \prod_{i=1}^{m}n_{A,i}$, we obtain the following useful lemma.

\begin{lemma}
  \label{lem:structure:hyper_bound}
  Let $a^{(1)},\ldots,a^{(m)}$ be the rows of an integral matrix $A$.
  For a permutation $\pi$ on $\{1,\ldots,m\}$, 
  define $d^{(\pi)}_{1}=\lVert a^{(\pi(1))} \rVert$ and
$d^{(\pi)}_{i+1}=\lceil  \operatorname{dist}_{1,\mathbb{R}}(a^{(\pi(i+1))};(a^{(\pi(1))},\ldots,a^{(\pi(i))}))\rceil$
  for $i=1,\ldots,m-1$. Then, $|A\cdot \hyper_{n}|\leq \min_{\pi}\{\prod_{i=1}^{m}
  (d^{(\pi)}_{i}+1)\}$. 
\end{lemma}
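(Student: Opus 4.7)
The plan is to fix an arbitrary permutation $\pi$, establish the bound $|A\cdot\hyper_{n}|\leq \prod_{i=1}^{m}(d^{(\pi)}_{i}+1)$ for this particular $\pi$, and then take the minimum. Permuting the rows of $A$ only permutes the coordinates of the vectors in $A\cdot\hyper_{n}$, so $|A\cdot\hyper_{n}|$ is invariant under this operation; hence I may assume without loss of generality that $\pi$ is the identity and write $d_{i}=d^{(\pi)}_{i}$.

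The core step is a telescoping coordinate-by-coordinate counting argument. For $m'=1,\ldots,m$, define the projection $B_{m'}=\{(b[1],\ldots,b[m'])\mid b\in A\cdot\hyper_{n}\}$. I would show that $|B_{m'+1}|\leq |B_{m'}|\cdot n_{A,m'+1}$ for every $m'\geq 1$. Indeed, for each prefix $\beta\in B_{m'}$, the set of possible extensions to the $(m'+1)$-th coordinate is by definition precisely $S_{A,\beta}$, which has cardinality at most $n_{A,m'+1}$. Iterating the inequality from $m'=1$ up to $m'=m-1$ yields
\begin{align*}
|A\cdot\hyper_{n}|=|B_{m}|\leq |B_{1}|\cdot \prod_{i=2}^{m}n_{A,i}\leq \prod_{i=1}^{m}n_{A,i},
\end{align*}
since $|B_{1}|\leq n_{A,1}$ by the definition of $n_{A,1}$.

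It remains to replace each $n_{A,i}$ by $d_{i}+1$. For the first factor, I use the elementary bound $n_{A,1}\leq \lVert a^{(1)}\rVert_{1}+1=d_{1}+1$ that was already established in the discussion preceding Lemma~\ref{lem:structure:general_bound}: there are at most $\lVert a^{(1)}\rVert_{1}+1$ integer values attainable by $a^{(1)}x$ for $x\in\hyper_{n}$. For the remaining factors, I apply Lemma~\ref{lem:structure:general_bound} with $\mathbb{X}=\mathbb{R}$ to get $n_{A,i+1}\leq \lceil\operatorname{dist}_{1,\mathbb{R}}(a^{(i+1)};(a^{(1)},\ldots,a^{(i)}))\rceil+1=d_{i+1}+1$ for $i=1,\ldots,m-1$. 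Substituting gives $|A\cdot\hyper_{n}|\leq \prod_{i=1}^{m}(d_{i}+1)$ for this particular ordering; taking the minimum over all permutations yields the stated bound.

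The argument is essentially bookkeeping once the preceding definitions and Lemma~\ref{lem:structure:general_bound} are in hand, so there is no serious obstacle. The only point requiring a moment of care is the legitimacy of permuting the rows in the first place, i.e.\ that $|A\cdot\hyper_{n}|$ does not depend on the order of the rows of $A$; this follows because permuting rows only permutes coordinates, preserving distinctness. Everything else is a telescoping count followed by a direct substitution.
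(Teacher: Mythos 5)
Your proof is correct and follows the same route the paper takes: the paper derives the lemma directly from the observation $|A\cdot \hyper_{n}|\leq \prod_{i=1}^{m}n_{A,i}$ together with the bound $n_{A,1}\leq \lVert a^{(1)}\rVert_{1}+1$ and Lemma~\ref{lem:structure:general_bound} applied with $\mathbb{X}=\mathbb{R}$. Your telescoping projection argument and the remark on permutation invariance merely spell out the details that the paper leaves implicit, and both are sound.
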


Now, consider a solution $x^{*}$ of $\{x\in \ZZgeq^{n}\mid Ax=b\}$ with support
$\supp(x^{*})=S$ of size $s=|S|$. 
If $\prod_{i=1}^{m}(d^{(\pi)}_{i}+1) < 2^{s}$, there are two points $x,x'\in
\hyper_{s}$ with $A[S]x=A[S]x'$.

Using Lemma~\ref{lem:eisenbrand:shmonin}, we know that $x^{*}$ is not a vertex.
We thus obtain the following bound. 
\begin{lemma}
  \label{lem:structure:supp_bound}
  For each vertex $v$ of the integer hull $\Ptop_{I}$ we have that 
  \begin{align*}
      |\supp(v)| \leq \min_{\pi}\{\sum_{i=1}^{m}\log(d^{(\pi)}_{i}+1)\}. 
  \end{align*}
\end{lemma}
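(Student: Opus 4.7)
The plan is a direct pigeonhole argument combining Lemma~\ref{lem:structure:hyper_bound} with the vertex characterization of Lemma~\ref{lem:eisenbrand:shmonin}. Let $v$ be a vertex of $\Ptop_{I}$ and set $S=\supp(v)$ with $s=|S|$. I will argue the contrapositive: if $s$ exceeds the stated bound for some permutation $\pi$, then $v$ cannot be a vertex. The proof mirrors the outline already given immediately before the lemma.

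First I would restrict attention to the submatrix $A[S]\in \mathbb{Z}^{m\times s}$ and the hypercube $\hyper_{s}$, so that every column of $A[S]$ corresponds to an index in $\supp(v)$. Applying Lemma~\ref{lem:structure:hyper_bound} to $A[S]$ gives $|A[S]\cdot \hyper_{s}|\leq \min_{\pi}\prod_{i=1}^{m}(d^{(\pi)}_{i}(A[S])+1)$. Since restricting columns to $S$ only shortens each row in $\ell_{1}$-norm and cannot increase any of the distances $\operatorname{dist}_{1,\mathbb{R}}(a^{(\pi(i+1))};(a^{(\pi(1))},\ldots,a^{(\pi(i))}))$, the corresponding bound for the full matrix $A$ is at least as large, so
\[
|A[S]\cdot \hyper_{s}|\leq \min_{\pi}\prod_{i=1}^{m}(d^{(\pi)}_{i}+1).
\]
Now suppose, for contradiction, that $s>\min_{\pi}\sum_{i=1}^{m}\log(d^{(\pi)}_{i}+1)$. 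Picking a minimizing $\pi$ and exponentiating yields $|\hyper_{s}|=2^{s}>\prod_{i=1}^{m}(d^{(\pi)}_{i}+1)\geq |A[S]\cdot \hyper_{s}|$.

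By the pigeonhole principle there exist distinct $x,x'\in \hyper_{s}$ with $A[S]x=A[S]x'$. Define $y\in \ZZ^{n}$ by $y_{i}=x_{i}-x'_{i}$ for $i\in S$ and $y_{i}=0$ otherwise. Then $y$ is a non-zero integer vector satisfying $Ay=A[S](x-x')=0$, $-1\leq y_{i}\leq 1$ for all $i\in \supp(v)$, and $y_{i}=0$ for all $i\notin \supp(v)$. Hence $y$ is a non-zero integral solution of system~\eqref{IP:kernel}, and Lemma~\ref{lem:eisenbrand:shmonin} implies that $v$ is not a vertex of $\Ptop_{I}$, a contradiction. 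Thus $s\leq \min_{\pi}\sum_{i=1}^{m}\log(d^{(\pi)}_{i}+1)$, as required.

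The argument is essentially a one-liner once Lemma~\ref{lem:structure:hyper_bound} is in hand, so I do not expect a real obstacle. The only subtle point worth stating explicitly is the monotonicity observation that restricting $A$ to the columns in $S$ cannot enlarge any of the $\ell_{1}$-distances $d^{(\pi)}_{i}$; this justifies using the distances computed from the full matrix $A$ in the final bound while invoking the pigeonhole count on the smaller hypercube $\hyper_{s}$ associated with $A[S]$.
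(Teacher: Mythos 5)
Your proof is correct and follows essentially the same route as the paper: the text immediately preceding the lemma derives it by exactly this pigeonhole argument, applying Lemma~\ref{lem:structure:hyper_bound} to obtain $|A[S]\cdot \hyper_{s}| \leq \prod_{i}(d^{(\pi)}_{i}+1)$ and then invoking Lemma~\ref{lem:eisenbrand:shmonin}. The only difference is that you make explicit the monotonicity of the distances $d^{(\pi)}_{i}$ under restriction of $A$ to the columns in $S$, a point the paper leaves implicit; this is a correct and worthwhile clarification.
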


To give a simple example where such a bound can be useful, consider the matrix
\begin{align*}
  A=
  \begin{pmatrix}
    1 & 0 & 0 & \ldots & 0\\
    1 & 1 & 0 & \ldots & 0\\
    1 & 1 & 1 & \ldots & 0\\
    \vdots & \vdots & \vdots &\ldots & \vdots
  \end{pmatrix}
\end{align*}
having $m$ rows. Using Theorem~\ref{thm:general:supp_bound} gives a support bound
of about $m\log(m)$, while Lemma~\ref{lem:structure:supp_bound} directly gives a
better bound of $m$, as $d_i=1$.

\paragraph*{Using Minkowski's Second Theorem}
To obtain a bound that is easier to handle, we can make use of Minkowski's
second theorem. For a set of $m\leq $ linear independent vectors
$B=\{B_{1},\ldots,B_{n}\}\subseteq \mathbb{R}^{n}$,
the \emph{lattice} $\Lambda(B)$  of rank $m$ is defined as
\begin{align*}
  \Lambda(B)=\{\sum_{i=1}^{m}\alpha_{i}B_{i}\mid \alpha_{1},\ldots,\alpha_{n}\in \mathbb{Z}\}. 
\end{align*}
Let $C\subseteq \mathbb{R}^{n}$ be a central symmetric convex body.
For $i=1,\ldots,m$, the $i$th \emph{successive minimum} (with regard to
$\Lambda(B)$ and $C$) $\lambda_{i}$ is defined as the smallest
positive real $\lambda$ such that $\lambda C$ contains at least $i$ linearly
independent points of $\Lambda(B)$. Alternatively, $\lambda_{1}$ is the length
of the shortest non-zero vector in $\Lambda(B)$ and for $i > 1$, the value
$\lambda_{i}$ is the length of the shortest vector linear independent of the
vectors corresponding to $\lambda_{1},\ldots,\lambda_{i}$.

\begin{theorem}[Minkowski's second theorem \cite
  {gruber1987geometry} (Chapter 2, Paragraph 9.1, Theorem 1)]
  Let $\Lambda(B)\subseteq \mathbb{R}^{n}$ be a lattice of rank $m$ and $C\subseteq
  \mathbb{R}^{n}$ be a central symmetric convex body. Furthermore, let
  $\lambda_{1},\ldots,\lambda_{m}$ be the successive minima with regard to
  $\Lambda(B)$ and $C$. Then
  \begin{align*}
    \prod_{i=1}^{m}\lambda_{i}\leq 2^{m}\cdot \frac{\det(\Lambda(B))}{\operatorname{vol}(C\cap \operatorname{span}(\Lambda(B)))}. 
  \end{align*}
\end{theorem}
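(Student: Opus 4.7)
\medskip

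\noindent\textbf{Proof proposal for Minkowski's second theorem.}

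The plan is to follow Minkowski's classical argument, reducing step by step to a setting where Minkowski's first theorem can be applied. First I would reduce to the full-rank case: since $\Lambda(B)$ has rank $m$, I restrict attention to the $m$-dimensional subspace $V=\operatorname{span}(\Lambda(B))$, replace $C$ by $C\cap V$, and view $\Lambda(B)$ as a full-rank lattice inside $V$. The successive minima and the right-hand side of the claimed inequality are unchanged by this reduction, so I may as well assume $m=n$ and $\operatorname{vol}(C\cap\operatorname{span}(\Lambda(B)))=\operatorname{vol}(C)$ throughout.

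Next I would pick linearly independent lattice vectors $v_{1},\ldots,v_{n}\in\Lambda(B)$ with $v_{i}\in\lambda_{i}C$, existing by definition of the successive minima. The goal is to compare $\operatorname{vol}(C)$ to $\det(\Lambda(B))/\prod\lambda_{i}$. To do this I would introduce a linear change of coordinates $T$ adapted to the flag $\operatorname{span}(v_{1})\subset\operatorname{span}(v_{1},v_{2})\subset\cdots\subset\operatorname{span}(v_{1},\ldots,v_{n})$, for instance by Gram--Schmidt, so that in the new coordinates $v_{i}$ is upper-triangular with diagonal entries controlling the $\lambda_{i}$. I would then define the rescaled body $\widetilde C$ obtained from $C$ by dividing the $i$-th coordinate by $\lambda_{i}$, so that $\widetilde C$ simultaneously contains the images of all the $v_{i}/\lambda_{i}$ on its boundary; crucially, $\operatorname{vol}(\widetilde C)=\operatorname{vol}(C)/\prod\lambda_{i}$.

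Now I would argue that $\widetilde C$, viewed together with the transformed lattice $\widetilde\Lambda$, is a Minkowski-admissible body in the sense that $\tfrac{1}{2}\widetilde C$ contains no non-zero lattice point. The key geometric lemma, which requires the most care, says: if $\lambda\in(0,1)$ and $w\in\tfrac{\lambda}{2}\widetilde C\cap\widetilde\Lambda$ is non-zero, then $w$ must lie in the span of the first $k$ standard basis vectors for some $k$ such that $\lambda\lambda_{k}<\lambda_{k}$, contradicting the definition of $\lambda_{k}$ as the smallest scale at which $C$ captures a $k$-th independent lattice point. Once $\tfrac{1}{2}\widetilde C$ is verified to miss $\widetilde\Lambda\setminus\{0\}$, Minkowski's first theorem yields $\operatorname{vol}(\tfrac{1}{2}\widetilde C)\leq\det(\widetilde\Lambda)=\det(\Lambda(B))$, and multiplying through by $2^{n}\prod\lambda_{i}$ gives the desired inequality.

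The main obstacle is precisely this geometric lemma: the vectors $v_{1},\ldots,v_{n}$ achieving the successive minima are in general \emph{not} a basis of $\Lambda(B)$, so lattice points can appear outside of the parallelepiped they generate. Handling this is exactly where the standard proof invests effort, typically by combining the triangular structure of $T$ with the central symmetry of $C$ to show that any alleged short lattice vector in $\tfrac{1}{2}\widetilde C$ could be projected onto one of the first $k$ coordinates to manufacture a shorter independent lattice point than $\lambda_{k}$ allows. Once this lemma is in hand, everything else reduces to volume bookkeeping and one appeal to Minkowski's first theorem.
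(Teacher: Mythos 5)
The paper itself offers no proof of this statement: it is imported verbatim from Gruber--Lekkerkerker as a classical black box, so I can only judge your sketch on its own terms. There are two problems. The first is a normalization error: you divide the $i$-th adapted coordinate by $\lambda_{i}$, so $\operatorname{vol}(\widetilde C)=\operatorname{vol}(C)/\prod_{i}\lambda_{i}$, and Minkowski's first theorem (``if $\tfrac12 K$ meets $\Lambda$ only in $0$ then $\operatorname{vol}(K)\leq 2^{m}\det\Lambda$'') would deliver $\operatorname{vol}(C)\leq 2^{m}\det(\Lambda)\prod_{i}\lambda_{i}$, which is the wrong (essentially the reverse) inequality; already for $m=1$, $C=[-a,a]$, $\Lambda=\mathbb{Z}$, $\lambda_{1}=1/a$, your normalization asserts $a^{2}\leq 1$. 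The correct construction \emph{multiplies} the $i$-th adapted coordinate by $\lambda_{i}$, so that $\operatorname{vol}(\widetilde C)=\operatorname{vol}(C)\prod_{i}\lambda_{i}$ and the first theorem yields the desired bound.

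The second problem is fatal and is exactly the point where the second theorem stops being a corollary of the first: with the correct normalization, your ``key geometric lemma'' asserts that the \emph{globally linear} anisotropic rescaling $\tfrac12\widetilde C$ contains no nonzero lattice point, and this is false in general. Unwinding the claim, for a nonzero $w\in\Lambda\cap\tfrac12\widetilde C$ with $w\in\operatorname{span}(e_{1},\dots,e_{k})\setminus\operatorname{span}(e_{1},\dots,e_{k-1})$ you know that the point with adapted coordinates $(2w_{i}/\lambda_{i})_{i}$ lies in $C$, and you need the point with coordinates $(2w_{i}/\lambda_{k})_{i}$ to lie in $C$ as well, i.e.\ that shrinking each coordinate by its own factor $\lambda_{i}/\lambda_{k}\in(0,1]$ preserves membership in $C$. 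Central symmetry and convexity only give this for a \emph{common} scaling factor; for a long thin body not aligned with the flag, coordinate-wise shrinking exits $C$. No amount of ``triangular structure of $T$ plus central symmetry'' repairs this, which is precisely why all classical proofs (Minkowski's original, Davenport's, Weyl's, Estermann's, and the one in the cited source) replace the linear map by a non-linear, piecewise-defined compression of $\tfrac{\lambda_{m}}{2}C$, defined slab by slab along the flag, whose volume can still be computed and whose image can be shown to inject modulo $\Lambda$. Without that construction, the central step of your argument does not go through.
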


Now, consider the lattice $\Lambda(a^{(1)},\ldots,a^{(m)})$ constructed by the
rows of $A$. By definition, we have 
$\det(\Lambda(a^{(1)},\ldots,a^{(m)}))=\sqrt{\det(AA^{\top})}$.
As we want to concentrate on the $\ell_{1}$-norm, we consider the 
cross-polytope $C^{\times}_{n}=\{x\in \mathbb{R}^{n}\colon \lVert
x \rVert_{1}\leq 1\}$ that is a central symmetric convex body.
As the volume of $C^{\times}_{n}$ is $2^{n}/n!$~\cite{DBLP:reference/cg/HenkRZ04},
and $\Lambda(a^{(1)},\ldots,a^{(m)})$ has rank $m$, we have 
$\operatorname{vol}(C^{\times}_{n}\cap \operatorname{span}(\Lambda(a^{(1)},\ldots,a^{(m)})))\leq \operatorname{vol}(C^{\times}_{m})=
2^{m}/m!$.

Let $v_{1},\ldots,v_{m}$ be the shortest linear independent vectors
corresponding to the successive minima $\lambda_{1},\ldots,\lambda_{m}$ (with
regard to $\Lambda(a^{(1)},\ldots,a^{(m)})$ and $C^{\times}_{n}$). As these
vectors are elements of $\Lambda(a^{(1)},\ldots,a^{(m)})$, we know
$\Lambda(a^{(1)},\ldots,a^{(m)})=\Lambda(v_{1},\ldots,v_{m})$. Let $V$ be the
matrix with rows $v_{1},\ldots,v_{m}$. By Lemma~\ref{lem:structure:hyper_bound},
we know that $|V\cdot \hyper_{n}|\leq \prod_{i=1}^{m}(d_{i}+1)$, where
$d_{1}=\lVert v_{1} \rVert_{1}=\lambda_{1}$ and $d_{i}=\lceil
\operatorname{dist}_{1,\mathbb{R}}(v_{i}; (v_{1},\ldots,v_{i-1}))\rceil\leq
\lVert v_{i} \rVert_{1}=\lambda_{i}$. The last inequality follows from the fact
that we can set all coefficients of $v_{1},\ldots,v_{i-1}$ to $0$. Hence,
$|V\cdot \hyper_{n}|\leq \prod_{i=1}^{m}(\lambda_{i}+1)$. As
$\Lambda(a^{(1)},\ldots,a^{(m)})$ is an integral lattice, we know that all
$v_{i}$ are integral and hence, all $\lambda_{i}$ are integral also. Hence,
$\lambda_{i}+1\leq 2\lambda_{i}$. 

 Minkowski's second theorem
now states that
\begin{align*}
  \prod_{i=1}^{m} \lambda_{i} \leq 2^{m}\cdot \sqrt{\det(AA^{\top})}/(2^{m}/m!) =m!\cdot \sqrt{\det(AA^{\top})}. 
\end{align*}
Hence 
\begin{align*}
  |V\cdot \hyper_{n}|\leq \prod_{i=1}^{m}2\lambda_{i}\leq 
 2^{m}\cdot m!\cdot \sqrt{\det(AA^{\top})}. 
\end{align*}

Now, consider a solution $x^{*}$ of $\{x\in \ZZgeq^{n}\mid Ax=b\}$ with support
$\supp(x^{*})=S$ of size $s=|S|$. 
If $2^{m}\cdot m!\cdot \sqrt{\det(AA^{\top})} < 2^{s}$, there are thus two points $x,x'\in
\hyper_{s}$ with $V[S]x=V[S]x'$. As $v_{i}\in \Lambda(a^{(1)},\ldots,a^{(m)})$,
this implies that $A[S]x=A[S]x'$. 
Using Lemma~\ref{lem:eisenbrand:shmonin}, we know that $x^{*}$ is not a vertex.
We thus obtain the following bound. 
\begin{lemma}
  \label{lem:structure:supp_bound:minkowski}
  For each vertex $v$ of the integer hull $\Ptop_{I}$ we have that 
  \begin{align*}
      |\supp(v)| \leq m+m\log(m)+\log(\sqrt{\det(AA^{\top})}). 
  \end{align*}
\end{lemma}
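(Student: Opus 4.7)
The plan is to apply Minkowski's second theorem to the row lattice of $A$ and combine it with the $\ell_1$-distance bound of Lemma~\ref{lem:structure:hyper_bound} instantiated on a basis of short lattice vectors. In particular, I want to replace the rows of $A$ by vectors realizing the successive minima with respect to the cross-polytope, so that both the individual $\lambda_i$ (controlling the distances) and their product (controlled by Minkowski) give the right trade-off.

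Concretely, let $\Lambda = \Lambda(a^{(1)},\ldots,a^{(m)})$ be the integral lattice of rank $m$ generated by the rows of $A$, so $\det(\Lambda) = \sqrt{\det(AA^\top)}$. Working with the central symmetric convex body $C^\times_n = \{x\in\mathbb{R}^n : \lVert x\rVert_1\le 1\}$, let $\lambda_1\le\cdots\le\lambda_m$ be the successive minima of $\Lambda$ with respect to $C^\times_n$, attained by linearly independent lattice vectors $v_1,\ldots,v_m$ with $\lVert v_i\rVert_1=\lambda_i$. Using the slicing fact $\operatorname{vol}(C^\times_n\cap \operatorname{span}(\Lambda))\le \operatorname{vol}(C^\times_m)=2^m/m!$, Minkowski's second theorem yields
\[
\prod_{i=1}^m \lambda_i \;\le\; 2^m\cdot \frac{\sqrt{\det(AA^\top)}}{2^m/m!} \;=\; m!\cdot\sqrt{\det(AA^\top)}.
\]

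Next, let $V$ be the matrix with rows $v_1,\ldots,v_m$. Apply Lemma~\ref{lem:structure:hyper_bound} with the identity permutation: each distance $d_i = \lceil \operatorname{dist}_{1,\mathbb{R}}(v_i;(v_1,\ldots,v_{i-1}))\rceil$ is at most $\lVert v_i\rVert_1 = \lambda_i$ (choose all coefficients equal to zero), and since $\Lambda$ is integral every $\lambda_i$ is a positive integer, so $\lambda_i+1\le 2\lambda_i$. Hence
\[
|V\cdot \hyper_n| \;\le\; \prod_{i=1}^m(\lambda_i+1) \;\le\; 2^m\prod_{i=1}^m \lambda_i \;\le\; 2^m\cdot m!\cdot \sqrt{\det(AA^\top)}.
\]
Now suppose for contradiction that a vertex $v$ of $\Ptop_I$ has $s=|\supp(v)| > m+\log(m!)+\log(\sqrt{\det(AA^\top)})$, and let $S=\supp(v)$. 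Then $2^s$ exceeds the bound on $|V[S]\cdot\hyper_s|$, so by pigeonhole there exist distinct $x,x'\in\hyper_s$ with $V[S](x-x')=0$. Since each $v_i$ is an integer combination of the $a^{(j)}$, the rows of $V$ and of $A$ span the same $m$-dimensional $\mathbb{Q}$-subspace of $\mathbb{R}^n$; their kernels therefore coincide and $A[S](x-x')=0$. Lemma~\ref{lem:eisenbrand:shmonin} now certifies that $v$ is not a vertex, giving the contradiction. The claimed inequality follows from $\log(m!)\le m\log m$.

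The main obstacle is justifying the volume bound $\operatorname{vol}(C^\times_n\cap\operatorname{span}(\Lambda))\le 2^m/m!$: it is a genuine slicing statement about central sections of the cross-polytope and is not immediate, although it is standard in the geometry of numbers literature. The only other subtle point is the passage from $V[S](x-x')=0$ back to $A[S](x-x')=0$, which works because the row spaces (not necessarily the lattices themselves) coincide as $\mathbb{Q}$-vector spaces.
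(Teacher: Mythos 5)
Your proof is correct and follows essentially the same route as the paper: the row lattice with the cross-polytope, the slicing bound $\operatorname{vol}(C^\times_n\cap\operatorname{span}(\Lambda))\le 2^m/m!$, Minkowski's second theorem, Lemma~\ref{lem:structure:hyper_bound} applied to the vectors attaining the successive minima with $\lambda_i+1\le 2\lambda_i$, and the pigeonhole/Lemma~\ref{lem:eisenbrand:shmonin} conclusion. Your justification of the step $V[S](x-x')=0\Rightarrow A[S](x-x')=0$ via equality of the rational row spaces is in fact slightly more careful than the paper's, which asserts $\Lambda(a^{(1)},\ldots,a^{(m)})=\Lambda(v_1,\ldots,v_m)$ — a claim that can fail in general, though the weaker row-space statement you use is all that is needed.
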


Note that this nearly matches the result of Aliev et al.~\cite{aliev2017sparse}
that obtain a bound of $m+\log(\sqrt{\det(AA^{\top)}})$.

\section{Analysing the Inequalities}
\label{sec:inequality}
In this section we study inequalities of the form
\begin{align}
   |Y| - m/2 \log(|Y|) > m \log(c \Delta) \label{log:inequality}
\end{align}
for $c \ge 2$ depending on the relation of the parameters $c$, $\Delta$, where $Y$ is some finite set. 
Here $m$ is the number of constraints (rows) of $A$ and $\Delta$ is the largest
absolute value of a coefficient in $A$. 
We suppose that all entries in $A$ are integral. 
The main goal is to estimate the smallest cardinality $|Y|$ such that the
inequality above holds. 
As shown above, this implies an upper bound for the support of any optimum ILP solution with
minimum number of positive entries $x_i > 0$. 
Using $\Delta \ge 1$ and $c \ge 2$, we notice that $|Y| \ge 2$.

\subsection{A Warmup}
\label{sec:inequality:warmup}
To get a feeling for the kind of arguments that we will use, we first 
use a similar approach to Eisenbrand and
Shmonin~\cite{eisenbrand2006caratheodory} to simplify the inequality
\begin{align*}
  |Y| \leq m\log(4{.}24\cdot e\sqrt{|Y|}\Delta+2e)
\end{align*}
derived in Theorem~\ref{thm:general:supp_bound}. 

\begin{lemma}
  \label{lem:ineq:supp_bound}
  For each $\epsilon > 0$, the smallest cardinality $|Y|$ to fulfill $|Y| \leq
  m\log(4{.}24\cdot e\sqrt{|Y|}\Delta+2e)$ can be bounded by 
   \begin{align*} 
     |Y| \leq (1+\epsilon)m\cdot \log(4{.}24\cdot e\cdot  (1+\Delta^{-1})\cdot \sqrt{(1+\epsilon)/(2\epsilon)}\cdot \sqrt{m}\cdot \Delta).
   \end{align*} 
 \end{lemma}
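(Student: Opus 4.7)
The plan is to convert the implicit bound $s \leq m\log(4{.}24 e\sqrt{s}\Delta+2e)$ into the explicit bound $T := (1+\epsilon)m\log(C\sqrt{\gamma m})$, with $C := 4{.}24 e(1+\Delta^{-1})\Delta$ and $\gamma := (1+\epsilon)/(2\epsilon)$, by directly verifying that every $s$ satisfying the hypothesis must satisfy $s \leq T$.

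First I would absorb the additive constant in the $\log$. Since $2e \leq 4{.}24 e\sqrt{s}$ whenever $s \geq 1$, the hypothesis implies $s \leq m\log(C\sqrt{s})$, equivalently $g(s) := s - (m/2)\log s \leq m\log C$. The function $g$ is strictly increasing once $s > m/(2\ln 2)$, and $T \geq m$ in the relevant parameter range (using $C \geq 8{.}48\,e$ for $\Delta \geq 1$ together with $\gamma m \geq 1/2$ for $\epsilon > 0,\ m \geq 1$, which give $C\sqrt{\gamma m} \geq 2$). Hence it suffices to prove the single numerical inequality $g(T) \geq m\log C$; if it holds, then any hypothetical $s > T$ satisfies $g(s) > m\log C$, contradicting the (simplified) hypothesis.

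Rewriting $g(T) \geq m\log C$, exponentiating base $2$, and dividing through by $(1+\epsilon)m$, the inequality becomes
\begin{align*}
  (C^{2}\gamma m)^{\epsilon}/(2\epsilon) \geq (1/2)\log(C^{2}\gamma m),
\end{align*}
where the key algebraic cancellation uses the identity $\gamma/(1+\epsilon) = 1/(2\epsilon)$ that is hard-wired into the definition of $\gamma$. Setting $X := C^{2}\gamma m$ and $y := \epsilon \log X$, the inequality collapses to the parameter-free statement $2^{y} \geq y$, which holds for all real $y$: the function $y \mapsto 2^{y} - y$ is convex with strictly positive minimum $\log(e\ln 2)$ attained at $y_{0} = \log\log e$.

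The only nontrivial aspect is recognising that the identity $\gamma/(1+\epsilon) = 1/(2\epsilon)$ is precisely what makes this choice of $\gamma$ work: it is this identity that cancels the $(1+\epsilon)m$ denominator and reduces the tailored numerical inequality to the universal bound $2^{y} \geq y$. Once this collapse is performed, the remainder is routine bookkeeping together with the monotonicity of $g$.
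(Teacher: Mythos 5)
Your proof is correct and follows essentially the same route as the paper's: both argue by contradiction from $s>T$, both absorb the additive $2e$ into the $(1+\Delta^{-1})$ factor, and both reduce to the elementary bound $2^{y}\geq y$ (equivalently $\log x\leq x$) after the cancellation $(1+\epsilon)/\gamma=2\epsilon$ built into the choice $\gamma=(1+\epsilon)/(2\epsilon)$. The only difference is presentational: you verify the key inequality at the single point $s=T$ and extend via monotonicity of $s\mapsto s-(m/2)\log s$ (which requires your correct side-check $T\geq m$), whereas the paper substitutes the bound on $\Delta$ and verifies $s>m\log(C\sqrt{s})$ directly for every $s>T$.
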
 
 \begin{proof} 
   Assume that 
   \begin{align*} 
     |Y| > (1+\epsilon)m\cdot \log(4{.}24\cdot e\cdot  (1+\Delta^{-1})\cdot \sqrt{(1+\epsilon)/(2\epsilon)}\sqrt{m}\Delta)
   \end{align*} 
   We then have 
   \begin{align*} 
     &|Y| > (1+\epsilon)m\cdot \log(4{.}24\cdot e\cdot (1+\Delta^{-1})\cdot \sqrt{(1+\epsilon)/(2\epsilon)}\sqrt{m}\Delta) \Leftrightarrow_{x\to 2^{x}}\\ 
     &2^{|Y|} > 2^{(1+\epsilon)m\cdot \log(4{.}24\cdot e\cdot (1+\Delta^{-1}) \sqrt{(1+\epsilon)/(2\epsilon)}\sqrt{m}\Delta)}\Leftrightarrow\\ 
     &2^{|Y|} > (4{.}24\cdot e\cdot (1+\Delta^{-1}) \sqrt{(1+\epsilon)/(2\epsilon)}\sqrt{m}\Delta)^{(1+\epsilon)m}\Leftrightarrow_{x \to \sqrt[(1+\epsilon)m]{x}}\\ 
     & 2^{|Y|/((1+\epsilon)m)} > 4{.}24\cdot e\cdot (1+\Delta^{-1}) \sqrt{(1+\epsilon)/(2\epsilon)}\sqrt{m}\Delta\Leftrightarrow\\ 
     & 2^{|Y|/((1+\epsilon)m)}/(4{.}24\cdot e\cdot (1+\Delta^{-1}) \sqrt{(1+\epsilon)/(2\epsilon)}\sqrt{m}) > \Delta.\tag{$\ast$} 
   \end{align*} 
   This implies 
   \begin{align*} 
     &m\cdot \log(4{.}24\cdot e\cdot \sqrt{|Y|}\cdot \Delta+2e)= \\
     &m\cdot \log(2e ( 2{.}12\cdot \sqrt{|Y|}\cdot \Delta+1))\leq \\
     &m\cdot \log(2e ( 2{.}12\cdot \sqrt{|Y|}\cdot (1+\Delta^{-1})\Delta))=\\
     &m\cdot \log(4{.}24\cdot e\cdot \sqrt{|Y|}\cdot (1+\Delta^{-1})\Delta) < \\
     &m\cdot \log(4{.}24\cdot e\cdot \sqrt{|Y|}\cdot (1+\Delta^{-1}) 2^{|Y|/((1+\epsilon)m)}/\\
     &\quad\quad (4{.}24\cdot e\cdot (1+\Delta^{-1}) \sqrt{(1+\epsilon)/(2\epsilon)}\sqrt{m})) =\\
     &m\cdot \log(2^{|Y|/((1+\epsilon)m)})+m\cdot \log(4{.}24\cdot e\cdot \sqrt{|Y|}\cdot (1+\Delta^{-1})/\\
     &\quad\quad(4{.}24\cdot e\cdot (1+\Delta^{-1}) \sqrt{(1+\epsilon)/(2\epsilon)}\sqrt{m}))=\\
     &m\cdot \log(2^{|Y|/((1+\epsilon)m)})+m\cdot \log( \sqrt{|Y|}/( \sqrt{(1+\epsilon)/(2\epsilon)}\sqrt{m}))=\\
     &m\cdot \log(2^{|Y|/((1+\epsilon)m)})+(m/2)\cdot \log( |Y|/( (1+\epsilon)/(2\epsilon)m))=\\
     &m\cdot |Y|/((1+\epsilon)m)+(m/2)\cdot \log( |Y|/( (1+\epsilon)/(2\epsilon)m))\leq \\
     &m\cdot |Y|/((1+\epsilon)m)+(m/2)\cdot  |Y|/( (1+\epsilon)/(2\epsilon)m)=\\
     &|Y|/(1+\epsilon) + \epsilon |Y|/(1+\epsilon) = |Y|.
     \end{align*} 
   This in turn is a contradiction to Theorem~\ref{thm:general:supp_bound}. 
 \end{proof}

\subsection{The general Case}
Suppose that $|Y| = m \log(c \Delta) + y$. Then
$|Y| =   m \log(c \Delta) + y > m \log(c \Delta) + (m/2) \log (|Y|) \ge m \log(c \Delta) + m/2$. This implies that $y > m/2 \ge 0$. Now let $y = (m/2) \log(\alpha)$. This is equivalent to $\alpha = 2^{2y/m}$ and implies
that $\log(\alpha) > 1$ or $\alpha > 2$. Now we need that
$$ y = (m/2) \log(\alpha) > (m/2) \log(m \log(c \Delta) + y).$$
This inequality holds when the arguments of the log function are also
larger:
$$ \begin{array}{lcl} \alpha & > & m \log(c \Delta) + (m/2) \log(\alpha) \\
                             & = & (m/2) \log(c \Delta)^2 + (m/2) \log(\alpha) \\
                             & = & (m/2) \log[(c\Delta)^2 \alpha]
                             \end{array} $$
Therefore, we obtain $\alpha >   (m/2) \log[(c\Delta)^2 \alpha]$ and
$\alpha > m \log(c \Delta)$. Now we set $\alpha = m \log(c \Delta) + \beta$
where $\beta > 0$. Using this setting we get
$$ \alpha = m \log(c \Delta) + \beta > m \log(c \Delta) + (m/2) \log[m \log(c \Delta) + \beta].$$
Therefore, $\beta > (m/2) \log[m \log(c \Delta) + \beta]$. Here we use now
$\beta = (m/2) \log[m \log(c \Delta) + \gamma]$ where $\gamma > \beta > 0$.
This implies
$$ (m/2) \log[m \log(c\Delta) + \gamma] > (m/2) \log [m \log(c \Delta) + (m/2) \log[m \log(c \Delta) + \gamma]].$$
This is equivalent to
$$ m \log(c \Delta) + \gamma > m \log(c \Delta) + (m/2) \log[m \log(c \Delta) + \gamma].$$
and
$$ \gamma > (m/2) \log[m \log(c \Delta) + \gamma].$$

\begin{lemma}
Suppose that we have a concrete value for $\gamma$ such that $ \gamma > (m/2) \log[m \log(c \Delta) + \gamma].$ Then, using the definition 
$\beta = (m/2) \log(m/2) \log[m \log(c \Delta) + \gamma]$, $\alpha = m \log(c \Delta) + \beta$, $y = m/2 \log(\alpha)$ and $|Y| = m \log(c \Delta) + y$
we obtain $|Y| - m/2 \log(|Y|) > m \log(c \Delta)$.
\end{lemma}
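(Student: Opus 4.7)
The plan is simply to run the chain of substitutions performed in the discussion just before the lemma in reverse. That chain started from the goal $|Y| - (m/2)\log(|Y|) > m\log(c\Delta)$, unfolded $|Y| = m\log(c\Delta) + y$, then $y = (m/2)\log(\alpha)$, then $\alpha = m\log(c\Delta) + \beta$, and finally $\beta = (m/2)\log[m\log(c\Delta) + \gamma]$, and thereby reduced the conclusion to the scalar inequality $\gamma > (m/2)\log[m\log(c\Delta) + \gamma]$, which is exactly the hypothesis. So the proof amounts to verifying that each implication in that derivation is in fact reversible, which it is because every step either substituted a definition or applied a strictly monotone function.

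The only step that requires any thought is showing that $\beta$ itself satisfies $\beta > (m/2)\log[m\log(c\Delta) + \beta]$. The hypothesis rewrites as $\gamma > (m/2)\log[m\log(c\Delta) + \gamma] = \beta$, so $\gamma > \beta$. Using strict monotonicity of $\log$ on the positive reals, this gives
\begin{align*}
  (m/2)\log[m\log(c\Delta) + \beta] < (m/2)\log[m\log(c\Delta) + \gamma] = \beta,
\end{align*}
which is exactly what is needed. The positivity required for $\log$ to be well-defined and order-preserving follows from the standing assumptions $c \ge 2$, $\Delta \ge 1$, $m \ge 1$, together with the $\gamma > \beta > 0$ convention already adopted in the derivation preceding the lemma.

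Once this key step is established, the rest is purely algebraic manipulation. Substituting $\alpha = m\log(c\Delta) + \beta$ turns $\beta > (m/2)\log[m\log(c\Delta) + \beta]$ into $\alpha - (m/2)\log(\alpha) > m\log(c\Delta)$, i.e.\ $\alpha > m\log(c\Delta) + (m/2)\log(\alpha) = m\log(c\Delta) + y = |Y|$. Applying $(m/2)\log(\cdot)$ to both sides of $\alpha > |Y|$ yields $y = (m/2)\log(\alpha) > (m/2)\log(|Y|)$, and adding $m\log(c\Delta)$ to both sides finally produces $|Y| = m\log(c\Delta) + y > m\log(c\Delta) + (m/2)\log(|Y|)$, the desired conclusion.

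I do not expect any genuine obstacle here: the argument is pure bookkeeping on top of the one monotonicity observation $\gamma > \beta \Rightarrow \log[m\log(c\Delta)+\beta] < \log[m\log(c\Delta)+\gamma]$. The only subtle point to flag is that one should carry enough positivity through the chain to justify the use of $\log$ at each stage; this is immediate under the assumptions $c \ge 2$, $\Delta \ge 1$, $m \ge 1$, and $\gamma > 0$ that are in force throughout the section.
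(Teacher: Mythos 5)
Your proof is correct and is essentially the paper's own argument: the paper "proves" this lemma via the derivation chain immediately preceding it (reducing the $|Y|$-inequality step by step to the $\gamma$-condition using monotonicity of $\log$), and you simply run that chain in the forward direction, correctly isolating the one non-trivial step $\gamma > \beta \Rightarrow \beta > (m/2)\log[m\log(c\Delta)+\beta]$. Note only that the lemma statement's "$\beta = (m/2)\log(m/2)\log[\cdots]$" is a typo for $\beta = (m/2)\log[m\log(c\Delta)+\gamma]$, which is the definition you (correctly) used.
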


\subsubsection{Case 1} 

Here we consider the case that $m < (2/3) \frac{c \Delta}{\log(c \Delta)}$.
If we use $\gamma = (m/2) \log(c \Delta)$ we obtain the inequality
$$ \gamma = (m/2) \log(c \Delta) > (m/2) \log[m \log(c \Delta) + (m/2) \log(c \Delta)]$$
or equivalent
$$ c \Delta > m \log(c \Delta) + (m/2) \log(c \Delta) = (3m/2) \log(c \Delta).$$
Since $m < 2/3 {c \Delta \over \log(c \Delta)}$, the above inequality holds directly. In this case we obtain for
$$ \begin{array}{lcl} |Y| & = & m \log(c \Delta) + (m/2) \log(\alpha) \\
                 & = & m \log(c \Delta) + (m/2) \log[m \log(c \Delta) + \beta] \\
                 & = & m \log(c \Delta) + (m/2) \log[m \log(c \Delta) +
                 (m/2) \log((3m/2) \log(c \Delta))] \\
                 & \le & m \log(c \Delta) + (m/2) \log[m \log(c \Delta) +
                 (m/2) \log(c \Delta)] \\
                 & = & m \log(c \Delta) + (m/2) \log[(3m/2) \log(c\Delta)] \\
                 & = & m \log(c \Delta) + m \log[(3m/2) \log(c \Delta)]^{1/2} \\
                 & = & m \log(c \Delta \sqrt{(3m/2) \log(c \Delta)})
      \end{array} $$
using $ (3 m/2) \log(c \Delta) \le c \Delta$. Notice that the term above can be
bounded also by $m \log(c \Delta)^{3/2} = (3/2 m) \log(c \Delta)$. Alternatively, we can obtain the following bound
$$ |Y|  \le m \log(c \Delta \sqrt{m \log(c \Delta \sqrt{(3m/2) \log(c \Delta)})}).$$
Using $(3m/2) \log(c\Delta) \le c \Delta$, we also get the same bound
$|Y| \le m \log(c \Delta \sqrt{m \log(c \Delta)^{3/2}}) = m \log(c \Delta \sqrt{(3m/2) \log(c \Delta)})$.

\subsubsection{Case 2} Here we consider the case that $m \ge 2/3 {c \Delta \over \log(c \Delta)}$ or equivalently $c \Delta \le (3m/2) \log(c \Delta)$. In addition suppose that $\log(c \Delta) < (c \Delta)^{1/2}$. Notice that
$(c \Delta)^{1/2} (c \Delta)^{1/2} = (c \Delta) \le (3m/2) \log(c \Delta) <
(3 m/2) (c \Delta)^{1/2}$. This implies that $(c \Delta)^{1/2} < 3m/2$ or
$(c \Delta) < (3m/2)^2$.
Therefore, $\log(c \Delta)^{1/2} < \log(3m/2)$. Another consequence is that
$1/2 \log(c \Delta) < \log(3m/2)$ or $\log(c \Delta) < 2 \log(3m/2) = \log(3m/2)^2 < 3m/2$ for $c \Delta \ge 4$. For the last inequality we use the argument in the following lemma about the function
$f(x) = x^{1/2} - \log(x)$. We obtain $\log(3m/2) < 3m/4$ for $c \Delta \ge 4$; otherwise we can bound $\log(3m/2) < 3m/2$.

\begin{lemma} \label{log:bound1} Suppose that $f(x) = x^{1/2} - \log(x)$.
If $f(x) > 0$ for $x \ge 9$ and $y > x $. Then $f(y) > 0$.
\end{lemma}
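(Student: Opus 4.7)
The plan is to reduce the lemma to a strict monotonicity statement for $f$ on $[9,\infty)$: once $f$ becomes positive at some $x \ge 9$, it must remain positive for every larger argument. Given the hypothesis $f(x) > 0$ at some $x \ge 9$ together with $y > x$, this monotonicity immediately yields $f(y) > f(x) > 0$, which is the claim.

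To establish the monotonicity, I would differentiate. Interpreting $\log$ as $\log_2$ (consistent with its use in the support bounds throughout the paper), we obtain
\[
f'(x) \;=\; \frac{1}{2\sqrt{x}} \,-\, \frac{1}{x\,\ln 2}.
\]
Imposing $f'(x) > 0$ and clearing denominators leads to the equivalent condition $\sqrt{x}\,\ln 2 > 2$, i.e.\ $x > 4/(\ln 2)^2 \approx 8{.}32$. In particular, $f'(x) > 0$ for every $x \ge 9$, so $f$ is strictly increasing on $[9,\infty)$.

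With strict monotonicity in hand, for any $y > x \ge 9$ with $f(x) > 0$ we conclude $f(y) > f(x) > 0$, which is exactly the claim. The argument is a direct calculus computation, so there is no real obstacle; the only small subtlety is that the threshold $x \ge 9$ sits only slightly above the critical point $4/(\ln 2)^2 \approx 8{.}32$ of $f$, so $9$ is essentially the smallest integer bound at which this monotonicity-based argument still applies (for natural-log conventions the critical point is $x=4$, so the same argument goes through even more easily).
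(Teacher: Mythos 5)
Your proof is correct and follows essentially the same route as the paper: compute $f'(x)=\tfrac{1}{2\sqrt{x}}-\tfrac{1}{x\ln 2}$, observe that $f'>0$ once $x>(2/\ln 2)^2\approx 8{.}33$, and conclude $f(y)>f(x)>0$ by monotonicity on $[9,\infty)$. The paper additionally notes that $f<0$ on $(4,16)$, so the hypothesis actually forces $x>16$ — a fact used later in the surrounding text but not needed for the lemma itself, so your streamlined version is fine.
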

\begin{proof}
Consider $f'(x) = 1/2 x^{-1/2} - 1/(x \ln(2))$. The function is strong monotone increasing at $x$, if $f'(x) \ge 0$. This means that
$1/2 x^{-1/2} >   1/(x \ln(2))$ or equivalent $x^{1/2} > 2/\ln(2)$
or $x > (2 / \ln(2))^2 \approx 8.3254$. Notice that $f(x) \ge 0$ for $x=4$ and $x = 16$, but $f(x) < 0$ for $x \in (4,16)$. Therefore, $x \ge 9$ implies that $x > 16$. Using the monotonicity, $f(y) > 0$.
\end{proof}

If $\log(c \Delta) < (c \Delta)^{1/2}$ and $c \Delta \ge 4$, then $c \Delta > 16$ and $\log(y) < \sqrt{y}$ for any $y > c \Delta$. In the inequality before the lemma we used $y = (3m/2)^2 > c \Delta$.

Now we calculate the inequality $\gamma > (m/2) \log(m \log(c \Delta) + \gamma)$ for $c \Delta \le (3m/2) \log(c \Delta)$. The right hand side for $c \Delta \ge 4$ is at most
$$ \begin{array}{lcl}
m/2 \log(m \log(3m/2 \log(c \Delta)) + \gamma) &
\le & m/2 \log(m \log(3m \log(3/2 m))+\gamma) \\
  & < &  m/2 \log(m \log(9/4 m^2)+\gamma) \\
  & = &  m/2 \log(2m \log(3/2 m)+ \gamma)
  \end{array} $$

We calculate now the property for $\gamma = m \log(3m/2)$. We obtain the
following condition $m \log(3m/2) \ge m/2 \log(3m \log(3/2 m))$ or
$\log(3/2m) \ge \log(3m \log(3/2 m))^{1/2}$. This inequality holds if
$3/2 m \ge (3m \log(3/2m))^{1/2}$ or $9/4 m^2 = (3/2 m)^2 \ge 3m \log(3/2 m)$
or $3/4 m \ge \log(3/2 m)$.

Without the assumption on $c \Delta$ we obtain
$m/2 \log(2m \log(3/\sqrt{2} m) + \gamma)$. Here we can calculate the
property for $\gamma = m \log(3m/\sqrt{2})$. The condition $m \log(3m/\sqrt{2}) \ge m/2 \log(3m \log(3/\sqrt{2} m))$ or
$\log(3/\sqrt{2} m) \ge \log(3m \log(3/\sqrt{2} m))^{1/2}$. This inequality holds if
$3/\sqrt{2} m \ge (3m \log(3/\sqrt{2} m))^{1/2}$ or $9/2 m^2 = (3/\sqrt{2} m)^2 \ge 3m \log(3/\sqrt{2} m)$
or $3/2 m \ge \log(3/\sqrt{2} m)$. The last inequality holds for all
$m \ge 1$.

\begin{lemma}
We have $\log(x) \le x / \sqrt{2}$ for $x \ge 2.1$.
\end{lemma}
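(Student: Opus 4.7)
The plan is to reduce this to a single-variable calculus exercise by studying the function
\[
f(x) \;=\; \frac{x}{\sqrt{2}} \;-\; \log(x),
\]
where $\log$ denotes the base-$2$ logarithm (as in the rest of the paper), and showing that $f(x) \geq 0$ on $[2.1,\infty)$. The approach splits into two standard steps: verifying monotonicity on the interval of interest, and then checking a single base point.

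First I would compute $f'(x) = \frac{1}{\sqrt{2}} - \frac{1}{x\ln 2}$ and observe that $f'(x) \geq 0$ is equivalent to $x \geq \sqrt{2}/\ln 2$. A direct numerical estimate gives $\sqrt{2}/\ln 2 \approx 2.0404 < 2.1$, so $f$ is monotonically non-decreasing throughout $[2.1,\infty)$. This reduces the claim to checking $f(2.1) \geq 0$.

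Second, I would establish the base case by direct estimation: $2.1/\sqrt{2} > 1.484$ while $\log(2.1) < \log(2.1) \leq \log(2) + \log(1.05) < 1 + 0.0704 < 1.08$, so $f(2.1) > 1.484 - 1.08 > 0$. Combining with monotonicity yields $f(x) \geq f(2.1) > 0$ for every $x \geq 2.1$, which is exactly the desired inequality $\log(x) \leq x/\sqrt{2}$.

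There is essentially no conceptual obstacle here; the only mild subtlety is remembering that $\log = \log_2$ in this paper, so the derivative of $\log(x)$ is $1/(x\ln 2)$ rather than $1/x$, and the threshold $\sqrt{2}/\ln 2$ must be shown to lie below $2.1$. (In fact a slightly stronger statement holds: since the minimum of $f$ occurs at $x = \sqrt{2}/\ln 2$ and $f$ at this minimum is positive, the inequality is actually true for all $x > 0$, but the bound $x \geq 2.1$ stated in the lemma is all that is needed by the surrounding argument.)
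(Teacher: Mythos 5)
Your proof is correct and follows essentially the same route as the paper's: same function $f(x)=x/\sqrt{2}-\log(x)$, same derivative computation, same threshold $\sqrt{2}/\ln 2\approx 2.04$. The only difference is that you evaluate the base point at $x=2.1$ (which is the cleaner choice, since $f$ is increasing on $[2.1,\infty)$), whereas the paper checks $f(2)>0.414$, a point slightly left of the minimum.
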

\begin{proof}
To prove this consider $f(x) = x /\sqrt{2} - \log(x)$.
Here we have $f'(x) = 1 /\sqrt{2} - 1/(x \ln(2)) \ge 0$ iff
$x \ge \sqrt{2}/\ln(2) \approx 2.040$. In addition we have $f(2) = 2 /\sqrt{2} - \log(2) > 0.414$.
\end{proof}

Notice that the argument $x$ for the log function is equal to
$3/\sqrt{2} m \ge 2.121 m \ge 2.121$.

In the following we estimate the value for $|Y|$ for $c \Delta \ge 4$. We obtain
$$ \begin{array}{lcl}
    |Y| & = & m \log(c \Delta) + m/2 \log(\alpha) \\
        & = & m \log(c \Delta) + m/2 \log(m \log(c \Delta) + \beta) \\
        & = & m \log(c \Delta) + m/2 \log(m \log(c \Delta) +
             m/2 \log[m \log(c \Delta) + m \log(3/2 m)]) \\
        &  \le &   m \log(c \Delta) + m/2 \log(m \log(c \Delta) +
             m/2 \log[m \log(3m/2)^2 + m \log(3/2 m)]) \\
        & = &     m \log(c \Delta) + m/2 \log(m \log(c \Delta) +
             m/2 \log[3m \log(3m/2)]) \\
        & \le &  m \log(c \Delta) + m/2 \log(2m \log(3/2 m) +
             m/2 \log[3m \log(3m/2)]) \\
        & < &  m \log(c \Delta) + m/2 \log(3m \log(3m/2)) \\
        & = &  m \log(c \Delta \sqrt{3m \log(3m/2)}).
        \end{array}  $$

In the calculation above we used the inequality $m/2 \log(3m \log(3m/2)) =
m \log(3m \log(3/2m))^{1/2} <
m \log(3m/2)$. To see this, notice that $(3 m \log(3/2 m))^{1/2} < 3/2 m$
or $3 m \log(3/2 m) < (3/2 m)^2 = 9/4 m^2$ or $\log(3/2 m) < 3/4 m$.

In the general case (without the assumption von $c \Delta$) we get
$$ \begin{array}{lcl}
    |Y| & = & m \log(c \Delta) + m/2 \log(\alpha) \\
        & = & m \log(c \Delta) + m/2 \log(m \log(c \Delta) + \beta) \\
        & = & m \log(c \Delta) + m/2 \log(m \log(c \Delta) +
             m/2 \log[m \log(c \Delta) + m \log(3/\sqrt{2} m)]) \\
        &  \le &   m \log(c \Delta) + m/2 \log(m \log(c \Delta) +
             m/2 \log[m \log(3m/2)^2 + m \log(3/\sqrt{2} m)]) \\
        & = &     m \log(c \Delta) + m/2 \log(m \log(c \Delta) +
             m/2 \log[3m \log(3m/\sqrt{2})]) \\
        & \le &  m \log(c \Delta) + m/2 \log(2m \log(3/2 m) +
             m/2 \log[3m \log(3m/\sqrt{2})]) \\
        & <  &   m \log(c \Delta) + m/2 \log(2m \log(3/2 m) +
                   m \log(3/\sqrt{2} m)) \\
        & = &    m \log(c \Delta) + m/2 \log(3m \log(3/\sqrt{2} m)) \\
        & = &  m \log(c \Delta \sqrt{3m \log(3m/\sqrt{2})}).
        \end{array}  $$

Here we can estimate $(3m \log(3m /\sqrt{2}))^{1/2} < 3/\sqrt{2} m$. This holds using $3m \log(3m/\sqrt{2}) < 9/2 m^2$ or $\log(3m / \sqrt{2}) < 3/2 m$.

\subsubsection{Case 3} Here we consider the case that $m \ge 2/3 {c \Delta \over \log(c \Delta)}$ or equivalently $c \Delta \le 3m/2 \log(c \Delta)$. In addition suppose that $(c \Delta)^{1/2} \ge 3m/2$. In this case we obtain
$(c \Delta)^{1/2} \ge 3m/2 \ge {c \Delta \over \log(c \Delta)}$. This implies
$(c \Delta)^{1/2} \log(c \Delta) \ge c \Delta$ or equivalently
$\log(c \Delta) \ge (c \Delta)^{1/2}$.

Now we study $\gamma > (m/2) \log(m \log(c \Delta) + \gamma)$ and use $m  \le (2/3) (c \Delta)^{1/2}$. Therefore let us consider just
$ \gamma > (m/2) \log(2/3 (c \Delta)^{1/2} \log(c \Delta) + \gamma)$ und test
$\gamma = m \log( (c \Delta)^{1/2} \log(c \Delta))$. We obtain the following
condition
$$ m \log( (c \Delta)^{1/2} \log(c \Delta)) > (m/2) \log[2/3 (c \Delta)^{1/2} \log(c \Delta) + m \log( (c \Delta)^{1/2} \log(c \Delta))]$$
The right hand side can be bounded by $ (m/2) \log[(2/3) (c \Delta)^{1/2} \log(c \Delta) + 2/3 (c \Delta)^{1/2} \log( (c \Delta)^{1/2} \log(c \Delta))]$.
Using $(c \Delta)^{1/2} \log(c \Delta) \ge c \Delta$, the value above is at most $ (m/2) \log[(4/3) (c \Delta)^{1/2} \log( (c \Delta)^{1/2} \log(c \Delta))]$. Then, the condition for $\gamma$ holds, if we can prove that
$$ (c \Delta)^{1/2} \log(c \Delta)  > ((4/3) (c \Delta)^{1/2} \log( (c \Delta)^{1/2} \log(c \Delta)))^{1/2}.$$
This can be rewritten as
$$ (c \Delta) \log^2(c \Delta)  > (4/3) (c \Delta)^{1/2} \log( (c \Delta)^{1/2} \log(c \Delta)))$$ or equivalently
$$ (3/4) (c \Delta)^{1/2} \log^2(c \Delta)  >  \log( (c \Delta)^{1/2} \log(c \Delta))).$$
The right hand side is at most $< \log((c \Delta)^{1/2} c \Delta) =
(3/2) \log(c \Delta)$. Therefore, the inequality holds if
$(3/4) (c \Delta)^{1/2} \log^2(c \Delta) \ge  (3/2) \log(c \Delta)$ or
$ (c \Delta)^{1/2} \log(c \Delta) \ge 2$. Since $(c \Delta)^{1/2} \log(c \Delta) \ge c \Delta$, $c \Delta \ge 2$ is sufficient. Notice that
$1 \le m \le (2/3) (c \Delta)^{1/2} \le (2/3) \sqrt{2} < 1$ for $c \Delta \le 2$
gives a in this case contradiction (without the assumption $c \Delta \ge 2$ that we made at the beginning).

Now let us calculate the cardinality for $|Y|$. Using
$(c \Delta)^{1/2} \log(c \Delta) \ge c \Delta$,
$\log( (c \Delta)^{1/2} \log(c \Delta))
< (3/2) \log(c \Delta)$ and
$c \Delta \le (3m/2) \log(c \Delta)$

$$ \begin{array}{lcl}
   |Y| & = & m \log(c \Delta) + m/2 \log[m \log(c \Delta) + m/2 \log(m \log(c \Delta) + m \log ((c \Delta)^{1/2} \log(c \Delta)))]  \\
       & \le & m \log(c \Delta) + m/2 \log[m \log(c \Delta) + m/2 \log(2m \log ((c \Delta)^{1/2} \log(c \Delta)))] \\
       & \le & m \log(c \Delta) + m/2 \log[m \log(c \Delta) + m/2 \log(3m \log (c \Delta))] \\
       & \le & m \log(c \Delta) + m/2 \log[m \log(3m/2 \log(c \Delta)) + m/2 \log(3m \log (c \Delta))] \\
       & \le & m \log(c \Delta) + m/2 \log[3m/2 \log(3m \log(c \Delta))] \\
       & =  &  m \log[c \Delta \sqrt{3m/2 \log(3m \log(c \Delta))}]
       \end{array} $$
 Using $m \le 2/3 \sqrt{c \Delta} \le 2/3 \log(c \Delta)$ the right hand side
 is at most
 $$ \begin{array}{ll}
   & m \log(c \Delta \sqrt{  \log(c \Delta) \log(2 \log^2(c \Delta))}) \\
  \le & m \log(c \Delta \sqrt{ \log^2(2 \log^2(c \Delta))} ) \\
  \le & m \log(c \Delta \log(2 \log^2(c \Delta))) \end{array} $$

Notice that case 3 does not occur in many cases.
First we already observed that $f(x) = (x)^{1/2} - \log(x) \le 0$ only for $x \in [4,16]$. This implies that
$c \Delta$ has to be in the interval $[4,16]$. Next we check first in which cases we have an integral number $m \ge 2$  such that $m \le 2/3 \sqrt{c \Delta}$ and
$m \ge (2/3) {c \Delta \over \log(c \Delta)}$. To analyse this we study
when there is an integral $m \ge 2$ such that $2/3 (x/\log x) \le m \le 2/3 \sqrt{x}$. Using $x \le 16$ we get $ m \le (2/3) \sqrt{16} = 8/3$; this shows that here only $m=2$ is possible. In addition we have $(2/3) (x/\log x) \le 2 \le 2/3 \sqrt{x}$ only for $x \ge 9$ (using $2 \le 2/3 \sqrt{x}$) and $x \le 10$ (using that $g(x) = x/\log x$ is monotone increasing for $x \ge 2.719$ and $(2/3)g(9) < 2$ and $(2/3) g(10) > 2$). This implies that for $m=2$ we need in this case the property $c \Delta \in [9,10]$.
The case $m=1$ can be handled in an easier way. For $m=1$ we get here $x \ge 9/4$ (using $1 \le 2/3 \sqrt{x}$) and $x < 4$ (using that $g$ monotone increasing for $x \ge 2.719$ and $g(4) = 2 > 3/2 \ge x / \log x  $ for each feasible $x$ and $m=1$). Since $x \in [4,16]$, this gives a contradiction.

\subsubsection{Case 4} In  the remaining case we have $m \ge 2/3 {c \Delta \over \log(c \Delta)}$ or equivalently $c \Delta \le (3m/2) \log(c \Delta)$, and in addition suppose that $(c \Delta)^{1/2} <  3m/2$ and $\log(c \Delta) \ge (c \Delta)^{1/2}$. This implies that $c \Delta < (3m/2)^2$ and $(c \Delta)^{1/2} \log(c \Delta) \ge c \Delta$ and $(c \Delta)^{1/2} \ge {c \Delta \over \log(c \Delta)}$.

Now we analyse when $\gamma > m/2 \log(m \log(c \Delta)) + \gamma)$. We bound the right hand side as follows:
$$ \begin{array}{ll}
       & m/2 \log(m \log(c \Delta)) + \gamma) \\
  \le  & m/2 \log(m \log((3m/2) \log(c \Delta)) + \gamma) \\
  =    & m/2 \log(m \log(3m \log(c \Delta)^{1/2}) + \gamma) \\
  <    & m/2 \log(m \log(3m \log(3m/2)) + \gamma) \\
  <    & m/2 \log(m \log(9m^2/2) + \gamma) \\
  \le  & m/2 \log(2m \log((3/\sqrt{2}) m) + \gamma)
  \end{array} $$
Now we set $\gamma = m \log(3m)$ and test whether $m \log(3m) \ge m/2 \log[2m \log(3/\sqrt{2} m) + m \log(3m)]$. This is equivalent to
$$ \log(3m) \ge \log[(3m \log(3m))^{1/2}] $$
or $ 3m \ge (3m \log(3m))^{1/2}$, $9 m^2 \ge 3m \log(3m)$ or $3m \ge \log(3m)$ that holds strictly for any $m \ge 1$.

Inserting $\gamma$ into $|Y|$ gives
$$ \begin{array}{lcl}
 |Y| & =   & m \log(c \Delta) + m/2 \log[m \log(c \Delta) + m/2 \log(m \log(c \Delta) + m \log(3m))] \\
     & \le & m \log(c \Delta) + m/2 \log[m \log(c \Delta) + m/2 \log(m \log(3m/2)^2 + m \log(3m))] \\
     & =   & m \log(c \Delta) + m/2 \log[m \log(c \Delta) + m/2 \log(2 m \log(3m/2) + m \log(3m))] \\
     & \le & m \log(c \Delta) + m/2 \log[m \log(c \Delta) + m/2 \log(3 m \log(3m))] \\
     & \le & m \log(c \Delta) + m/2 \log[2m \log(3m/2) + m/2 \log(3 m \log(3m))] \\
     & = &   m \log(c \Delta) + m/2 \log[2m \log(3m/2) + m \log(3 m \log(3m))^{1/2}] \\
     & \le &  m \log(c \Delta) + m/2 \log[2m \log(3m/2) + m \log(3 m)] \\
     & \le &  m \log(c \Delta) + m/2 \log[3m \log(3m)] \\
     & =   &  m \log(c \Delta \sqrt{3m \log(3m)})
     \end{array} $$
     
Combining all cases we obtain the following first result:

\begin{theorem}
\label{thm:ineq:main}
The smallest value of $|Y|$ fulfilling Equation~\eqref{log:inequality} can be
bounded as follows: 
\begin{center}
\begin{tabular}{|l||l|}
\hline
 upper bound & condition on $m, c \Delta$  \\
\hline
$m \log(c \Delta \sqrt{3m/2 \log(c \Delta)})$  & $m < 2/3 {c \Delta \over \log(c \Delta)}$ \\
\hline
$m \log(c \Delta \sqrt{3m \log(3m/\sqrt{2})})$ & $m \ge 2/3 {c \Delta \over \log(c \Delta)}$ \\
 &  and $\log(c \Delta) < (c \Delta)^{1/2}$ \\
\hline
$m \log(c \Delta \sqrt{3m/2 \log(3m \log(c \Delta))})$ &  $m \ge 2/3 {c \Delta \over \log(c \Delta)}$ \\
 &  and $(c \Delta)^{1/2} \ge 3m/2$ \\
\hline 
$m \log(c \Delta \sqrt{3m \log(3m)})$ & $m \ge 2/3 {c \Delta \over \log(c \Delta)}$ \\
 & and $\log(c \Delta) \ge (c \Delta)^{1/2}$ \\
 & and $(c \Delta)^{1/2} < 3m/2$ \\
\hline
\end{tabular} 
\end{center} 
\end{theorem}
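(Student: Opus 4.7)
\textbf{Proof proposal for Theorem~\ref{thm:ineq:main}.} The plan is to solve the implicit inequality $|Y| - (m/2)\log|Y| > m\log(c\Delta)$ by a sequence of substitutions that peel off one logarithm at a time, reducing the problem to a simpler threshold condition whose parameter can be chosen case-by-case depending on the regime of $m$ relative to $c\Delta$. Concretely, I would first write $|Y| = m\log(c\Delta) + y$, turning the inequality into $y > (m/2)\log(m\log(c\Delta) + y)$. Substituting $y = (m/2)\log(\alpha)$ and comparing arguments inside the logarithm gives the stronger condition $\alpha > m\log(c\Delta) + (m/2)\log(\alpha)$. Repeating this trick by letting $\alpha = m\log(c\Delta) + \beta$ and then $\beta = (m/2)\log(m\log(c\Delta) + \gamma)$ reduces the entire problem to the single inequality
\begin{equation*}
\gamma > (m/2)\log(m\log(c\Delta) + \gamma).
\end{equation*}
Any explicit $\gamma$ satisfying this back-substitutes, via $\beta, \alpha, y$, into an explicit upper bound on the smallest admissible $|Y|$.

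Next, I would split into the four regimes in the theorem. The dichotomy $m < \tfrac{2}{3}\,c\Delta/\log(c\Delta)$ versus $m \ge \tfrac{2}{3}\,c\Delta/\log(c\Delta)$ controls whether $m\log(c\Delta)$ is dominated by $c\Delta$ or by itself, which determines which of the two competing terms in the reduced inequality is the bottleneck. In the first regime one can try $\gamma = (m/2)\log(c\Delta)$; in the second, a choice of the form $\gamma = m\log(\text{poly}(m))$ is forced, with the precise polynomial depending on the secondary comparison between $\log(c\Delta)$ and $\sqrt{c\Delta}$ (and, in the ``$\sqrt{c\Delta}$ dominates'' branch, on whether $\sqrt{c\Delta}$ is above or below $3m/2$). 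For each branch I would verify the candidate $\gamma$ by checking an elementary inequality of the form $x \ge \log x$ (with $x$ a small multiple of $m$ or of $c\Delta$), which is what underlies the four separate calculations.

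The main obstacle is the case analysis itself: the substitutions produce nested logarithms $\log(m\log(\ldots))$, and to obtain a clean closed form one has to repeatedly replace these by easier upper bounds without losing more than a constant factor. This only works if the chosen $\gamma$ is balanced against the correct dominating quantity in each regime, so some auxiliary monotonicity facts (for example, that $\sqrt{x} > \log x$ for $x$ large, of the flavor of Lemma~\ref{log:bound1}) are required to justify the simplifications. I would treat these elementary bounds as small technical lemmas proved once, then apply them uniformly across the four cases.

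Finally, after establishing a valid $\gamma$ in each regime, I would back-substitute through $\beta, \alpha, y$, using the same elementary bounds once more to collapse the nested logarithms into a single square root, yielding the table of bounds $m\log(c\Delta\sqrt{\cdot})$ claimed in the theorem. Combining the four branches then gives the full statement.
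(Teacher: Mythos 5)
Your proposal follows essentially the same route as the paper: the substitution chain $|Y| = m\log(c\Delta)+y$, $y=(m/2)\log\alpha$, $\alpha = m\log(c\Delta)+\beta$, $\beta=(m/2)\log(m\log(c\Delta)+\gamma)$ reducing everything to $\gamma > (m/2)\log(m\log(c\Delta)+\gamma)$, followed by regime-dependent choices of $\gamma$ (exactly $(m/2)\log(c\Delta)$ in the first regime and $m\log(\mathrm{poly}(m))$ or $m\log(\sqrt{c\Delta}\log(c\Delta))$ in the others) and elementary $\sqrt{x}$ versus $\log x$ comparisons to collapse the nested logarithms. This matches the paper's proof in structure and in all the key choices.
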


\subsection{Improved bounds}

The next step is to prove natural bounds for the cardinality $|Y|$. We conjecture that we get
$\le \alpha m \log(c' \Delta m^{1/2})$ with $\alpha$ close to $1$. The first
step is to check in which cases we get $\le 3/2 m \log(c \Delta m^{1/2})$.

In some cases we have either to increase $c$ to $c'$ or to modify the analysis above for some cases.

{\bf Case 1} $m < 2/3 {c \Delta \over \log(c \Delta)}$. The bound from the previous subsection can be bounded, if
  $$|Y| \le m \log(c \Delta \sqrt{3m/2 \log(c \Delta)})
\le (3/2) m \log(c \Delta m^{1/2}),$$ which is equivalent to
$$ c \Delta \sqrt{3m/2 \log(c \Delta)} \le (c \Delta m^{1/2})^{3/2} = (c \Delta)^{3/2} m^{3/4}$$
This can be transformed into
$$ (3m/2) \log(c \Delta) \le c \Delta m^{3/2}$$ or equivalently
$$ (3/2) \log(c \Delta) \le c \Delta m^{1/2}.$$
The property above $m < 2/3 {c \Delta \over \log(c \Delta)}$ gives
$$ (3/2) \log(c \Delta) < c \Delta / m \le c \Delta m^{1/2}$$
which holds for any $m \ge 1$ (using $m^{3/2} \ge 1$).

{\bf Case 2:} $c \Delta < 3m/2 \log(c \Delta)$ and $\log(c \Delta) < (c \Delta)^{1/2}$. The upper bound from the previous subsection can be estimated as follows. We have
$$|Y| \le m \log(c \Delta \sqrt{3m \log(3m/\sqrt{2})}  \le 3/2 m \log(c' \Delta m^{1/2}),$$ if
$$ c \Delta \sqrt{3m \log(3m/\sqrt{2})} \le (c' \Delta)^{3/2} m^{3/4}.$$
This is equivalent to
$$ 3m \log(3m/\sqrt{2}) \le c' (c'/c)^2 \Delta m^{3/2}$$
or
$$ 3 \log(3m/\sqrt{2}) \le c' (c'/c)^2 \Delta m^{1/2}.$$

First we consider the case $c \Delta \ge 4$. Here we could use also our previous better bound with $2$ in the log term instead of $\sqrt{2}$. Since $\log(c \Delta) < (c \Delta)^{1/2}$, we know that actually $c \Delta > 16$ and obtain using Lemma \ref{log:bound1}
$$ \log(3m/2 \log(c \Delta)) < (3m/2 \log(c \Delta))^{1/2}.$$
This implies $\log(3m/2) < (3m/2)^{1/2} (\log(c \Delta))^{1/2} - \log(\log(c \Delta))$. To prove the inequality $3 \log(3m/2) \le c \Delta m^{1/2}$, we study the inequality $ 3 (3m/2)^{1/2} (\log(c \Delta))^{1/2} - 3 \log(\log(c \Delta)) \le c \Delta m^{1/2}$. The inequality holds, when
$ 3 \sqrt{3/2}  (c \Delta)^{1/4} \le c \Delta $ or $(c \Delta)^{3/4} \ge
3 \sqrt{3/2}$. This is equivalent to $c \Delta \ge (3 \sqrt{3/2})^{4/3} =
((3 \sqrt{3/2})^4)^{1/3} = (3^6/4)^{1/3} = 3^2/(4)^{1/3}$. Since the right hand term is at most $9$ and $c \Delta \ge 16$, the inequality holds for $c \Delta \ge 4$.

The remaining case is $c \Delta < 4$. Here we need the bound $3 \log(3m/\sqrt{2}) \le c' (c'/c)^2 \Delta m^{1/2} $. The bounds holds directly, when $m$ is large enough such that $3 \log(3m/\sqrt{2}) \le c \Delta m^{1/2}$. Using $c \Delta \ge 2$, this implies $3/2 \log(3/\sqrt{2}) +
3/2 \log(m) \le m^{1/2}$ or $1.628 + 3/2 \log(m) \le m^{1/2}$. This holds for
example whenever $m \ge 256$. Another possibility is to check when
$3 \log(3/\sqrt{2}) + 3 \log(m) \le  (c'/c)^3 c \Delta m^{1/2}$. For $m \le 256$ we could use $c' = 2.3514 c$ and obtain $3.256 + 3 \log(m) \le 26 \le (c'/c)^3 c \Delta m^{1/2} $ for $c \Delta \ge 2$.

{\bf Case 3:} In this case we have only to consider $m=2$ and $c \Delta \in [9,10]$. Here we have to bound $|Y| = m \log(c \Delta \log(2 \log^2(c \Delta))) \le 3/2 m \log(c \Delta m^{1/2})$. The inequality holds when
$$ c \Delta \log(2 \log^2(c \Delta)) \le (c \Delta m^{1/2})^{3/2}$$ or equivalently when
$$  \log^2(2 \log^2(c \Delta)) \le (c \Delta) m^{3/2}.$$
In this case we have $(c \Delta) m^{3/2} \ge 9 \sqrt{8} \approx 25.45$ and
$\log^2(2 \log^2(\Delta)) \le \log^2(2 \log^2(10)) \le 19.93$. Therefore, the
inequality is true.

{\bf Case 4:} Here we have to show that $|Y| = m \log(c \Delta (3 m \log(3m))^{1/2})
\le (3/2 m) \log(c' \Delta m^{1/2})$. As properties we can use $c \Delta < (3m/2) \log(c \Delta)$, $\log(c \Delta) \ge (c \Delta)^{1/2}$ and $(c \Delta)^{1/2} \le 3m/2$. The inequality above holds (similar to case 2), whenever
$3 \log(3m) \le c' (c'/c)^2 \Delta m^{1/2}$. Notice that $\log(c \Delta) \ge (c \Delta)^{1/2}$ holds only for $c \Delta \in [4,16]$; and this further implies $m \ge 2$. For $c' = c$ the inequality $3 \log(3m) \le 4 m^{1/2}$
is equivalent to $3/4 \log(3) + 3/4 \log(m) \le 1.18873 + 3/4 \log(m) \le m^{1/2}$ and this holds for
$m \ge 20$. In the remaining we may assume that $m \in \{2,\ldots, 19\}$. We study now
$3 \log(3m) \le (c'/c)^3 (c \Delta) m^{1/2}$ or
$1.18873 + 3/4 \log(m) \le (c'/c)^3 m^{1/2}$. We can bound
$1.18873 + 3/4 \log(m) \le  1.18873 + 3/4 \log(19) \le 4.37466 \le (c'/c)^3 \sqrt{2}$. The last inequality actually holds for $c' \ge 1.46 c$.

In total we obtain the following result:

\begin{theorem}
The smallest $|Y|$ fulfilling Equation~\eqref{log:inequality} is at most 
$$ (3/2) m \log(c' \Delta m^{1/2}) $$
where $c' \approx 2.3514 c$. 
\end{theorem}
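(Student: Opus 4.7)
The strategy is to reduce to Theorem~\ref{thm:ineq:main}: that result gives four case-dependent upper bounds on the smallest $|Y|$ satisfying~\eqref{log:inequality}, and it suffices to show that each of those four bounds is at most $(3/2)m\log(c'\Delta m^{1/2})$ for a single constant $c'$. Pulling the $3/2$ inside the logarithm turns the target into $c\Delta\sqrt{T} \le (c'\Delta)^{3/2} m^{3/4}$ where $T$ is the case-specific square-root term, so in every case the task is reduced to a concrete scalar inequality of the form $T \le (c'/c)^{3}(c\Delta)\,m^{3/2}$.

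In Case~1, $T = (3m/2)\log(c\Delta)$ and the case hypothesis $m < (2/3)c\Delta/\log(c\Delta)$ immediately yields $(3/2)\log(c\Delta) \le c\Delta/m \le c\Delta\, m^{1/2}$, so $c'=c$ works. In Case~3, the hypothesis forces $m=2$ and $c\Delta\in[9,10]$, so the required inequality is a finite numerical check. The serious cases are Cases~2 and~4, where the square-root contains a term like $3m\log(3m/\sqrt{2})$ or $3m\log(3m)$: for large $c\Delta$ one exploits $\log(c\Delta)\ge (c\Delta)^{1/2}$ (Case~4) or $\log(c\Delta)<(c\Delta)^{1/2}$ together with Lemma~\ref{log:bound1} (Case~2) to absorb the $\log m$ into $c\Delta\, m^{1/2}$, but for small $c\Delta$ close to $2$, or for $m$ in a bounded range, the inequality fails with $c'=c$ and one must inflate $c$.

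The main obstacle is therefore the ``small'' regime, where we cannot rely on asymptotics. The plan is to split Case~2 at $c\Delta\ge 4$: above the threshold, Lemma~\ref{log:bound1} gives the bound with $c'=c$; below it, one bounds $1.628+(3/2)\log m \le (c'/c)^3 c\Delta\, m^{1/2}$ uniformly, which forces the largest $c'/c$ ratio needed in the whole proof, namely $c'/c\approx 2.3514$. In Case~4, the regime $\log(c\Delta)\ge (c\Delta)^{1/2}$ restricts $c\Delta$ to $[4,16]$ and hence bounds $m$ from above by some absolute constant (via $(c\Delta)^{1/2}<3m/2$ being false for very small $m$); a direct estimate then shows $c'/c\approx 1.46$ suffices.

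Taking the worst constant across all four cases yields $c'\approx 2.3514\,c$. The only step that requires genuine care beyond routine algebra is the calibration of $c'/c$ in the small-$c\Delta$ sub-case of Case~2; everything else is straightforward monotonicity together with the elementary estimates $\log x\le x/\sqrt{2}$ and $\log x\le \sqrt{x}$ for $x$ sufficiently large that the excerpt already supplies.
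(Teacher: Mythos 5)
Your plan is essentially the paper's own proof: reduce each of the four bounds of Theorem~\ref{thm:ineq:main} to a scalar inequality of the form $T \le (c'/c)^{3}(c\Delta)\,m^{3/2}$, dispatch Cases~1 and~3 directly from the case hypotheses, split Case~2 at $c\Delta \ge 4$ (using Lemma~\ref{log:bound1} above the threshold and inflating to $c'/c \approx 2{.}3514$ below it), and calibrate $c'/c \approx 1{.}46$ for the bounded-$m$ subrange of Case~4. One minor slip worth fixing: in Case~4 the hypothesis $(c\Delta)^{1/2} < 3m/2$ together with $c\Delta \in [4,16]$ bounds $m$ from \emph{below} (giving $m \ge 2$), not from above, so you still need the easy large-$m$ branch there, namely that $3\log(3m) \le 4m^{1/2}$ already holds with $c'=c$ once $m \ge 20$.
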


\end{document}